\title{Commutative algebras of series}
\author
    {Lorenzo Clemente}
    {University of Warsaw, Department of Mathematics, Mechanics, and Computer Science}
    {clementelorenzo@gmail.com}
    {https://orcid.org/0000-0003-0578-9103}
    {Partially supported by the Polish National Science Centre grant no.~2024/54/E/ST6/00287}
\authorrunning{Lorenzo Clemente} 
\keywords{formal power series, weighted automata, product rules, algebra, coalgebra} 
\newcommand{\agda}[2]{\protect{\href{\NoCaseChange{\baseurl/#2}}{#1}}}
\newcommand{\agdasection}[2]{
  \section{\agda{#1}{#2}}
}
\newcommand{\agdasubsubsection}[2]{
  \subsubsection{\agda{#1}{#2}}
}
\newcommand{\tuple}[1]{\left(#1\right)}
\newcommand{\tuplesmall}[1]{(#1)}
\newcommand{\e}{\varepsilon}
\newcommand{\sep}{\;\mid\;}
\newcommand{\ignore}[1]{}
\newcommand{\wrt}{w.r.t.}
\newcommand{\cf}{cf.}
\newcommand{\eg}{e.g.}
\newcommand{\aka}{a.k.a.}
\newif\ifstartedinmathmode
\renewcommand*{\st}{
  \relax\ifmmode\startedinmathmodetrue\else\startedinmathmodefalse\fi
  \ifstartedinmathmode{\;\cdot\;}\else{s.t.}\fi%
}
\newcommand{\N}{\mathbb N}
\newcommand{\Q}{\mathbb Q}
\renewcommand{\AA}{\mathcal A}
\newcommand{\BB}{\mathcal B}
\renewcommand{\SS}{\mathcal S}
\newcommand{\Terms}[1]{\mathsf{Terms}\,{#1}}
\newcommand{\Lin}[1]{\mathsf{Lin}\,{#1}}
\let\oldcirc\circ
\let\circ\relax
\DeclareMathOperator{\circ}{\oldcirc\,}
\DeclareMathOperator{\hadamard}{\odot}
\DeclareMathOperator{\infiltration}{\uparrow}
\newcommand{\card}[1]{\left|#1\right|}
\newcommand{\length}[1]{\left|#1\right|}
\newcommand{\set}[1]{\left\{#1\right\}}
\newcommand{\setof}[2]{\set{#1 \;\middle|\; #2}}
\newcommand{\poly}[2]{#1[#2]}
\newcommand{\hompoly}[2]{\poly {#1} {#2}_0}
\newcommand{\series}[2]{#1\langle\!\langle#2\rangle\!\rangle}
\newcommand{\ideal}[1]{\langle#1\rangle}
\newcommand{\idealof}[2]{\ideal{#1 \mid #2}}
\newcommand{\sem}[1]{\llbracket#1\rrbracket}
\newcommand{\Asem}[2]{{#1\sem{#2}}}
\newcommand{\coefficient}[2]{[#1]#2}
\newcommand{\algebra}[2]{#1\{#2\}}
\newcommand{\Val}[1]{\mathsf{Val}\,{#1}}
\newcommand{\leftright}[4]{\mbox{}^{#1}{#2}_{#3}^{#4}}
\newcommand{\lr}[3]{\leftright{#1}{#2}{}{#3}}
\newcommand{\lxr}[4]{\leftright{#1}{#2}{#3}{#4}}
\newcommand{\one}{\mathbb 1}
\newcommand{\zero}{\mathbb 0}
\newcommand{\derive}[1]{\delta_{#1}}
\newcommand{\deriveleft}[1]{\derive{#1}}
\newcommand{\deriveright}[1]{\delta^R_{#1}}
\newcommand{\reverse}[1]{{#1}^\mathsf{rev}}
\let\oldpartial\partial
\renewcommand{\partial}[1]{\oldpartial_{#1}}
\def\smallunderbrace#1{\mathop{\vtop{\m@th\ialign{##\crcr
   $\hfil\displaystyle{#1}\hfil$\crcr
   \noalign{\kern3\p@\nointerlineskip}%
   \tiny\upbracefill\crcr\noalign{\kern3\p@}}}}\limits}
\newcommand{\BAC}{\textsf{BAC}}
\newenvironment{proofsketch}{\begin{proof}[Proof sketch]}{\end{proof}}
\newenvironment{challenge}{\begin{quote}\centering\scshape}{\end{quote}}
  \theoremstyle{acmdefinition}
  \declaretheoremstyle[
    spaceabove=6pt,
    spacebelow=6pt,
    headfont=\normalfont\bfseries,
    notefont=\normalfont\itshape,
    bodyfont=\normalfont,
    postheadspace=0.5em,
    headformat={\protect{\agda{\textsc{Theorem~\NUMBER}}{\agdaurl}}\NOTE}
  ]{agdatheoremstyle}
  \declaretheoremstyle[
    spaceabove=6pt,
    spacebelow=6pt,
    headfont=\normalfont\bfseries,
    notefont=\normalfont\itshape,
    bodyfont=\normalfont,
    postheadspace=0.5em,
    headformat={\protect{\agda{\textsc{Lemma~\NUMBER}}{\agdaurl}}\NOTE}
  ]{agdalemmastyle}
  \declaretheoremstyle[
    spaceabove=6pt,
    spacebelow=6pt,
    headfont=\normalfont\bfseries,
    notefont=\normalfont\itshape,
    bodyfont=\normalfont,
    postheadspace=0.5em,
    headformat={\protect{\agda{\textsc{Corollary~\NUMBER}}{\agdaurl}}\NOTE}
  ]{agdacorollarystyle}
  \declaretheoremstyle[
    spaceabove=6pt,
    spacebelow=6pt,
    headfont=\normalfont\bfseries,
    notefont=\normalfont\itshape,
    bodyfont=\normalfont,
    postheadspace=0.5em,
    headformat={\protect{\agda{\textsc{Proposition~\NUMBER}}{\agdaurl}}\NOTE}
  ]{agdapropositionstyle}
  \declaretheorem[
    style=agdalemmastyle,
    name=Lemma,
    sibling=theorem
  ]{agdalem}
  \declaretheorem[
    style=agdacorollarystyle,
    name=Corollary,
    sibling=theorem
  ]{agdacor}
  \declaretheorem[
    style=agdapropositionstyle,
    name=Proposition,
    sibling=theorem
  ]{agdaprop}
  \declaretheorem[
    style=agdatheoremstyle,
    name=Theorem,
    sibling=theorem
  ]{agdathm}
  \newcommand{\agdaurl}{}
  \newenvironment{agdalemma}[1][]{
    \renewcommand{\agdaurl}{#1}
    \begin{agdalem}}
      {\end{agdalem}}
  \newenvironment{agdacorollary}[1][]{
    \renewcommand{\agdaurl}{#1}
    \begin{agdacor}}
    {\end{agdacor}}
  \newenvironment{agdaproposition}[1][]{%
    \renewcommand{\agdaurl}{#1}%
    \begin{agdaprop}}
    {\end{agdaprop}}
\crefname{equation}{}{}
\crefname{definition}{Definition}{Definitions}
\crefname{claim}{Claim}{Claims}
\crefname{section}{§}{§§}
\crefname{proposition}{Proposition}{Propositions}
\crefname{agdaprop}{Proposition}{Propositions}
\crefname{lemma}{Lemma}{Lemmas}
\crefname{agdalem}{Lemma}{Lemmas}
\crefname{corollary}{Corollary}{Corollaries}
\crefname{agdacor}{Corollary}{Corollaries}
\crefname{theorem}{Theorem}{Theorems}
\crefname{agdathm}{Theorem}{Theorems}
\crefname{figure}{Figure}{Figures}
\crefname{table}{Table}{Tables}
\crefname{fact}{Fact}{Facts}
\crefname{example}{Example}{Examples}
\crefname{remark}{Remark}{Remarks}
\crefname{item}{}{}
\crefname{enumi}{}{}
\begin{document}

\maketitle

\begin{abstract}
  We consider a large family of product operations of formal power series in noncommuting indeterminates,
  the classes of automata they define, and the respective equivalence problems.
  A \emph{$P$-product} of series is defined coinductively by a \emph{polynomial product rule $P$},
  which gives a recursive recipe to build the product of two series
  as a function of the series themselves and their derivatives.
  %
  %
  %
  %
  %
  %
  The first main result of the paper is a complete and decidable characterisation of all product rules $P$
  giving rise to $P$-products which are bilinear, associative, and commutative.
  The characterisation shows that there are infinitely many such products,
  and in particular it applies to the notable Hadamard, shuffle, and infiltration products from the literature.
  
  Every $P$-product gives rise to the class of \emph{$P$-automata},
  an infinite-state model where states are terms.
  The second main result of the paper is that the equivalence problem for $P$-automata is decidable
  for $P$-products satisfying our characterisation.
  This explains, subsumes, and extends previous results
  on Hadamard, shuffle, and infiltration automata.
  We have formalised most results in the proof assistant Agda.
\end{abstract}


\section{Introduction}

\subparagraph*{Rational series.}

Let $\Sigma = \set{a_1, \dots, a_d}$ be a finite alphabet of non-commuting indeterminates.
A \emph{series} over the rationals is a mapping $\Sigma^* \to \Q$
from the set of finite words over $\Sigma$ to the rationals.
Sometimes series are called \emph{weighted languages} and are denoted by $\series \Q \Sigma$~\cite{BerstelReutenauer:CUP:2010}.
The set of series has a \emph{dynamic structure},
given by the family of \emph{left derivatives} $\deriveleft a : \series \Q \Sigma \to \series \Q \Sigma$ ($a \in \Sigma$),
and it can be endowed with a variety of \emph{algebraic structures}.
For instance, series form a $\Q$-vector space (formal definitions will be given later),
\begin{align}
    \label{eq:vector space structure}
    \tag{vector space}
    \tuplesmall{\series \Q \Sigma;
        \smallunderbrace {\vphantom{\Large|}{\zero}}_{{\text{zero}}},\ 
        \smallunderbrace {\vphantom{\Large|}{(c \cdot {\_})_{c \in \Q}}}_{\text{scalar multiplication}},\ 
        \smallunderbrace {\vphantom{\Large|}{+}}_{\text{addition}}}.
\end{align}
Each choice for the algebraic structure
gives rise to a corresponding \emph{algebraically-defined class of series}.
For instance, the vector space structure above
gives rise to the subclass of \emph{rational series}~\cite{Schutzenberger:IC:1961},
by considering series that belong to finite-dimensional linear subspaces closed under left derivatives.
Rational series appear in various areas of science and mathematics~\cite{HandbookWA},
with important applications in control theory~\cite{Fliess:MST:1976,Fliess:1981}.
%
%
Most decision problems for rational series are undecidable
\cite[Theorem 21]{NasuHonda:IC:1969}~(\cf~also~\cite[Theorem 6.17]{Paz:1971}),
however Schützenberger remarkably showed that \emph{equivalence} is decidable%
\footnote{%
This follows as a special case of a more general \emph{effective minimisation} result~\cite[Sec.~III.B]{Schutzenberger:IC:1961}.
}.
The latter result is obtained by considering \emph{weighted finite automata},
a syntactic model recognising precisely the rational series,
which can be used to devise an equivalence algorithm.

Summarising, an important part of the theory of series can be developed along the following four main axes:
\begin{inparaenum}
    \item[(A)] algebraic structure, 
    \item[(B)] algebraically-defined classes of series, 
    \item[(C)] syntactically-defined classes of series, and 
    \item[(D)] the equivalence problem. 
\end{inparaenum}
In the case of rational series, this is illustrated in~\cref{fig:rational series}.
The success of rational series has motivated the study of more general classes retaining decidability of equivalence.
One principled way to proceed consists in introducing additional algebraic structure, as we explain next.

\begin{figure}[!h]
    \begin{center}
        \tikzstyle{box} = [bgLine, fill=white, inner sep=0.8mm, anchor=north west, outer sep=0, text=black]
        \begin{tikzpicture}[
            node distance=2ex,
            block/.style ={rectangle, draw=black, thick, fill=blue!20, text width=10.5em, align=center, rounded corners, minimum height=3em, minimum width=10ex},
            every edge/.style = {draw, thick} 
            ]
        \node[block] (A) {(A) algebraic structure \\ \em $\Q$-vector space};
        \node[block] (B) [right=2em of A] {(B) algebraic class \\ \em rational series};
        \node[block] (C) [below=2em of A, xshift=14ex] {(C) syntactic class \\ \em weighted automata};
        \node[block] (D) [below=2em of C, text width=12em] {(D) equivalence problem \\ \em linear-algebraic methods};

        \path (A) edge[->] (B);
        \path (A) edge[->] (C);
        \path (B) edge[<->] (C); 
        \path (C) edge[->] (D);
        \end{tikzpicture}
    \end{center}
    \caption{Theory of rational series}
    \label{fig:rational series}
\end{figure}

\subparagraph*{(A) Algebras of series.}
We extend the vector space structure on series by considering (binary) product operations ``$*$''
which are bilinear, associative, and commutative (\emph{\BAC}).
Such a product endows the set of series with the structure of a \emph{commutative $\Q$-algebra}%
\footnote{We do not require the existence of a multiplicative identity,
thus series algebras need not be unital.}
\begin{align}
    \label{eq:series algebra}
    \tag{series $\Q$-algebra}
    \tuplesmall{\series \Q \Sigma;
        \smallunderbrace {\vphantom{\Large|}\zero}_{\text{zero}},\ 
        \smallunderbrace {\vphantom{\Large|}{(c \cdot {\_})_{c \in \Q}}}_{\text{scalar multiplication}},\ 
        \smallunderbrace {\vphantom{\Large|}{+}}_{\text{addition}},\ 
        \smallunderbrace {\vphantom{\Large|}{*}}_{\text{product}}}.
\end{align}
Notable examples 
are the \emph{Hadamard}, \emph{shuffle}, and \emph{infiltration} products~\cite{Fliess:1974,ChenFoxLyndon:AM:1958}.
%
Remarkably, they all satisfy the \emph{constant term rule} $(f * g)(\e) = f(\e) \cdot g(\e)$
and a \emph{product rule} of the form
\begin{align}
    \tag{product rule}
    \label{eq:product rule}
    \deriveleft a (f * g)
        &= P(f, \deriveleft a f, g, \deriveleft a g),
\end{align}
where $P$ is an expression built from the mentioned series, scalar multiplication, addition,
and product~\cite[Sec.~3]{BasoldHansenPinRutten:MSCS:2017}.
In fact, the two rules above uniquely define the product,
which we call a \emph{$P$-product}.
For instance, shuffle ``$\shuffle$'' obeys the Leibniz rule from calculus
$\deriveleft a (f \shuffle g) = (\deriveleft a f) \shuffle g + f \shuffle (\deriveleft a g)$,
and thus it is a $P$-product for $P(x, \dot x, y, \dot y) = \dot x y + x \dot y$.
Not all $P$-products are~\BAC,
for instance $\deriveleft a (f * g) = \deriveleft a f * g$ does not define a commutative operation.
This naturally leads to our first challenge.
\begin{challenge}
    \textbf{First challenge} \\
    Characterise the product rules $P$ yielding \BAC~$P$-products.
\end{challenge}


Restricting our attention to \BAC~$P$-products fixes the first axis (A),
and it is natural to extend the study of series along the remaining axes (B), (C), and (D); \cf~\cref{fig:P-finite series}.
This program will generate further challenges, as we explain next.

\begin{figure}[h!]
    \begin{center}
        \begin{tikzpicture}[
            node distance=2ex,
            block/.style ={rectangle, draw=black, thick, fill=blue!20, text width=11em, align=center, rounded corners, minimum height=3em},
            every edge/.style = {draw, thick}
            ]
        \node[block] (A) {(A) algebraic structure \\ \em commutative $\Q$-algebra};
        \node[block] (B) [right=2em of A] {(B) algebraic class \\ \em $P$-finite series};
        \node[block] (C) [below=2em of A, xshift=14ex] {(C) syntactic class \\ \em $P$-automata};
        \node[block, text width=13em] (D) [below=2em of C, ] {(D) equivalence problem \\ \em nonlinear-algebraic methods};

        \path (A) edge[->] (B);
        \path (A) edge[->] (C);
        \path (B) edge[<->] node[right, xshift=1ex] {} (C);
        \path (C) edge[->] (D);
        \end{tikzpicture}
    \end{center}
    \caption{Theory of $P$-finite series}
    \label{fig:P-finite series}
\end{figure}

\subparagraph*{(B) Algebraic classes of series: $P$-finite series.}

Every $P$-product (not necessarily \BAC) gives naturally rise
to an algebraically-defined class of series generalising the rational ones,
which we call \emph{$P$-finite series}%
\footnote{$P$-finite series, where $P$ is a product rule,
should not be confused with \emph{polynomially-finite sequences},
which sometimes are abbreviated as ``P-finite sequences''~\cite{Stanley:EJC:1980},
where ``P'' stands for ``polynomially''.}.
A series is $P$-finite if it belongs to a finite-dimensional subalgebra
closed under left derivatives.
This notion is inherently semantic,
and it is designed to ensure closure properties with respect to the algebra operations.
For instance, the notable products give rise to the classes of
\emph{Hadamard}, \emph{shuffle}, resp., \emph{infiltration-finite} series~\cite{Clemente:CONCUR:2024,Clemente:LICS:2025}.
These classes can also be motivated by counting problems in enumerative combinatorics,
for instance the Hadamard-finite series $a^n \mapsto 2^{2^n}$ counts the number of Boolean functions on $n$ variables
and the shuffle-finite series $a^n \mapsto n!$ counts the number of permutations on $n$ elements.
The drawback of $P$-finiteness is that it is not clear how it leads to algorithms for the equivalence problem.
For this reason, we seek a syntactic description of $P$-finite series.

\subparagraph*{(C) Syntactic classes of series: $P$-automata.}

Every $P$-product (not necessarily \BAC) naturally gives rise to a syntactic class of \emph{$P$-automata}, where
%
states are \emph{terms} built from variables and constructors for the algebra operations, and
transitions are defined via~\cref{eq:product rule}.
%
%
%
Generalising previous observations for the Hadamard, shuffle, and infiltration products~\cite{Clemente:CONCUR:2024,Clemente:LICS:2025},
we note that the class of series recognised by $P$-automata coincides with the $P$-finite series.

$P$-automata in general do not lead to an equivalence algorithm,
and in fact equivalence may well be undecidable for arbitrary product rules $P$.
In the case of a \BAC~$P$-product, the semantic domain is a commutative $\Q$-algebra of series,
and thus $P$-automata make too many syntactic distinctions in the state which are not reflected in the semantics.
For instance, $x y$ and $y x$ are syntactically different states,
however they recognise the same series by commutativity of the series product.
Therefore, it is natural to \emph{quotient} the state space of $P$-automata \wrt~the axioms of commutative $\Q$-algebras.
%
%
But there is a problem: Quotienting must respect the transition structure
and a priori it is not clear that this is the case.
If we could show that quotienting respects transitions,
then we would obtain \emph{polynomial} $P$-automata,
whose semantics is a homomorphism of commutative $\Q$-algebras.


In the case of the notable \BAC~$P$-products,
quotienting \emph{does} respect transitions,
leading to well-behaved polynomial $P$-automata models such as
\emph{Hadamard automata}~(called \emph{polynomial automata} in~\cite{BenediktDuffSharadWorrell:LICS:2017}),
\emph{shuffle automata}~\cite{Clemente:CONCUR:2024}, and \emph{infiltration automata}~\cite{Clemente:LICS:2025}.
This is a remarkable fact, which demands an explanation.
In the case of a single-letter alphabet,
Boreale and Gorla assume a technical condition
called \emph{well-behavedness}~\cite[Def.~3.5]{BorealeGorla:CONCUR:2021}, and show that
\begin{inparaenum}[(a)]
    \item it is satisfied by the notable products, and
    \item it implies the quotienting property~\cite[Lemma 3.6 and Theorem 3.7]{BorealeGorla:CONCUR:2021}.
\end{inparaenum}
%
%
However it is not clear whether this condition is \emph{necessary} for \BAC~products,
neither whether it is \emph{decidable}.
Our next challenge is to show the quotienting property \emph{for every \BAC~$P$-product},
without referring to any other condition (such as well-behavedness).

\begin{challenge}
    \textbf{Second challenge} \\
    Show that every \BAC~$P$-product
    gives rise to polynomial $P$-automata.
\end{challenge}


\subparagraph*{(D) Algorithms.}

It remains to see whether the equivalence problem for $P$-automata is decidable for \BAC~$P$-products.
This is the case for Hadamard~\cite[Corollary 1]{BenediktDuffSharadWorrell:LICS:2017},
shuffle~\cite[Theorem 1]{Clemente:CONCUR:2024}, and infiltration automata~\cite[Theorem 2]{Clemente:LICS:2025}.
All these works rely on \emph{Hilbert's finite basis theorem}, whose algorithmic application
appears already in the late 1990's work of Novikov and Yakovenko~\cite{NovikovYakovenko:1999}.
For a one-letter input alphabet $\card \Sigma = 1$,
Boreale and Gorla have used it to show decidability for well-behaved $P$-products~\cite[Theorem 4.1]{BorealeGorla:CONCUR:2021}.
However, this relied on an additional technical condition, which we may call \emph{ideal compatibility},
which is satisfied by the notable products.
%
Despite the fact that to date there is a single algorithmic approach to decide equivalence,
it is not clear whether it would work for arbitrary \BAC~$P$-products,
and whether ideal compatibility is necessary.



\begin{challenge}
    \textbf{Third challenge} \\
    Show that 
    equivalence is decidable for polynomial $P$-automata.
\end{challenge}

\subsection{Main results}
\label{sec:main results}


We are now ready to state our main results, addressing the challenges that we have proposed.

\subparagraph*{First challenge: Characterisation.}

Our first main contribution is a complete answer to our first challenge,
by means of an equational characterisation of \BAC~$P$-products.
\begin{theorem}[Characterisation]
    \label{thm:characterisation}
    %
    A $P$-product is bilinear, associative, and commutative
    if, and only if, the product rule $P$ satisfies the following identities%
    \footnote{We understand these identities as equality of terms up to bilinearity, associativity, and commutativity.
    In other words, they are equalities of the corresponding polynomials in the free commutative $\Q$-algebra generated by $x, \dot x, y, \dot y, z, \dot z$.}:
    \begin{align}
        \tag{$P\text{-add}$}
        \label{eq:left additivity}
        P(x + y, \dot x + \dot y, z, \dot z)
            &= P(x, \dot x, z, \dot z) + P(y, \dot y, z, \dot z) \\
        \tag{$P\text{-assoc}$}
        \label{eq:associativity}
        P(x, \dot x, yz, P(y, \dot y, z, \dot z))
            &= P(xy, P(x, \dot x, y, \dot y), z, \dot z) \\
        \tag{$P\text{-comm}$}
        \label{eq:commutativity}
        P(x, \dot x, y, \dot y)
            &= P(y, \dot y, x, \dot x).
    \end{align}
    %
\end{theorem}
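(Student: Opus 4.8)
The plan is to prove the two implications separately. Throughout I use that the set $\series\Q\Sigma$ of series is the final coalgebra for $X\mapsto\Q\times X^{\Sigma}$, with structure map $s\mapsto(s(\varepsilon),(\delta_a s)_{a\in\Sigma})$, so that equalities of series (hence properties of $*$) can be established by coinduction, and that the constant-term rule $(f*g)(\varepsilon)=f(\varepsilon)\cdot g(\varepsilon)$ makes evaluation at the empty word, $\mathsf{ev}_\varepsilon$ sending $s$ to $s(\varepsilon)$, a $\Q$-algebra homomorphism $(\series\Q\Sigma,*)\to\Q$. I regard $P$ as a commutative polynomial in $x,\dot x,y,\dot y$, as in the displayed identities, and use that evaluating a $*$-expression at $\varepsilon$ agrees with evaluating the corresponding commutative polynomial, since $\mathsf{ev}_\varepsilon$ is multiplicative.

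For the ``only if'' direction, assume $*$ is bilinear, associative, and commutative. Applying $\mathsf{ev}_\varepsilon$ to the product rule and using $\mathsf{ev}_\varepsilon(\delta_a s)=s(a)$ gives
\[
  (f*g)(a)\;=\;P\bigl(f(\varepsilon),\,f(a),\,g(\varepsilon),\,g(a)\bigr)
\]
for all series $f,g$ and letters $a$. As $f$ ranges over all series, $(f(\varepsilon),f(a))$ ranges over all of $\Q^{2}$, and likewise for $g$; feeding this into the assumptions that $*$ is additive in its first argument, commutative, and associative shows that the two sides of ($P$-add), of ($P$-comm), and of ($P$-assoc) agree as $\Q$-valued polynomial functions (for ($P$-assoc) one first rewrites the subexpressions $\mathsf{ev}_\varepsilon(f*g)=f(\varepsilon)g(\varepsilon)$ and $\mathsf{ev}_\varepsilon(\delta_a(f*g))=(f*g)(a)$ occurring inside the nested $P$). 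Since $\Q$ is infinite, agreement of polynomial functions is agreement of polynomials, so the three identities hold. (A routine argument shows that additivity already forces $P$ to be homogeneous of degree one in $(x,\dot x)$; together with ($P$-comm) this makes $P$ bilinear, a fact used below.)

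For the ``if'' direction, assume ($P$-add), ($P$-comm), and ($P$-assoc). Let $R$ be the least congruence on $\series\Q\Sigma$, with respect to $+$, scalar multiplication, and $*$, that contains every instance of bilinearity, commutativity, and associativity of $*$; then $*$ has these three properties precisely when $R$ is the diagonal. The quotient $\series\Q\Sigma/R$ is a commutative $\Q$-algebra, so the three identities, being identities of polynomials, hold in it. Since $*$ is defined coinductively by the product rule --- a behavioural differential equation, for which bisimulation up-to-congruence is a sound proof technique --- it suffices to check that $R$ is such a bisimulation. The output clause is immediate from $(u*v)(\varepsilon)=u(\varepsilon)v(\varepsilon)$. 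For the transition clause one applies $\delta_a$ to a generating pair and unfolds it via the product rule: modulo $R$ the product is commutative and associative, so the product rule computes the $a$-derivative of a $*$-expression by evaluating the polynomial $P$, and then the two sides coincide in $\series\Q\Sigma/R$ --- by ($P$-add) and the homogeneity it forces on the bilinearity pairs, by ($P$-comm) on the commutativity pairs, and by ($P$-assoc) on the associativity pairs (nested products inside $P$ cause no trouble, as $*$ is commutative and associative modulo $R$). Hence $R$ is contained in bisimilarity, that is, in equality, so $*$ is bilinear, associative, and commutative.

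The step needing most care is the ``if'' direction: one must confirm that the $P$-product genuinely arises from a well-behaved (GSOS-style) distributive law, so that bisimulation up-to-congruence is sound, and keep careful track that the hypothesis concerns $P$ as a commutative polynomial --- which is precisely what lets one replace $*$ by polynomial evaluation after passing modulo $R$, even with nested products present. The ``only if'' direction is by comparison almost a one-line computation, the point being that post-composing the product rule with $\mathsf{ev}_\varepsilon$ turns the three algebraic properties of $*$ into polynomial identities over the infinite field $\Q$.
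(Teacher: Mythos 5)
Your proof is correct and follows essentially the same route as the paper: necessity by pushing the product rule through the constant-term homomorphism and using that polynomial functions over the infinite field $\Q$ determine their polynomials, and sufficiency by showing that the congruence generated by the \BAC{} axiom instances is a bisimulation, verified axiom-by-axiom via the three identities. The only (presentational) difference is that you invoke soundness of bisimulation-up-to-congruence for GSOS-style behavioural differential equations, whereas the paper works with the term congruence $\approx$ and checks directly that its semantic image is a plain bisimulation, thereby avoiding that meta-theorem.
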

The fact that this set of equations is sufficient is easy to prove
once we have a candidate characterisation (by coalgebraic methods),
however we remark that to prove each of the three properties
we need to use all equations together.
This shows that the interplay between additivity, associativity, and commutativity
is crucial for our proof to go through.

On the other hand, necessity can be proved for each equation separately,
using the fact that the operation mapping a series $f$ to its constant term $f(\e)$ is a homomorphism
and using the fact that the rational numbers $\Q$ do satisfy the axioms of commutative $\Q$-algebras
(\cf~the proof of~\cref{thm:characterisation:BAC}).
The design of the product rule format
has been guided by the possibility of proving necessity.

%
%


The equational characterisation that we provide is \emph{syntactic},
which should be contrasted with the notion of well-behavedness from~\cite{BorealeGorla:CONCUR:2021},
which is \emph{semantic} 
(and only known to be sufficient).
It immediately follows that being a \BAC~$P$-product is decidable,
which was by no means obvious a priori.
In fact, \cref{thm:characterisation} allows us to infer the following explicit description.
We say that a product rule $P$ is \emph{simple}
if there are rational constants $\alpha, \beta, \gamma \in \Q$ \st
\begin{equation}
    \tag{$\alpha\beta\gamma$}
    \label{eq:simple}
    \left\{\ 
    \begin{aligned}
        P(x, \dot x, y, \dot y)
            &= \alpha \cdot xy + \beta \cdot (x \dot y + \dot x y) + \gamma \cdot \dot x \dot y, \\
            \text{ with } \alpha \cdot \gamma &= \beta \cdot (\beta - 1).
    \end{aligned}
    \right.
\end{equation}
%
%
\begin{corollary}[Classification]
    \label{cor:classification}
    A $P$-product is \BAC\- 
    if, and only if, $P$ is simple.
\end{corollary}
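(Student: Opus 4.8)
The plan is to reduce the statement to a purely algebraic fact about the polynomial $P$ and then carry out a short coefficient comparison. By \cref{thm:characterisation}, a $P$-product is \BAC{} if and only if $P$ satisfies \cref{eq:left additivity}, \cref{eq:associativity}, and \cref{eq:commutativity}, so it suffices to prove that these three polynomial identities hold for $P$ exactly when $P$ is simple in the sense of \cref{eq:simple}.

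For the ``if'' direction, assume $P$ has the form in \cref{eq:simple}. Then \cref{eq:left additivity} and \cref{eq:commutativity} are immediate, since $P$ is visibly a symmetric bilinear form in the pairs $(x,\dot x)$ and $(y,\dot y)$. For \cref{eq:associativity}, substitute $P$ into both sides; each side expands into a $\Q$-linear combination of the eight trilinear monomials, i.e.\ products of one factor from $\set{x,\dot x}$, one from $\set{y,\dot y}$, and one from $\set{z,\dot z}$. Comparing coefficients monomial by monomial, all of them agree identically except the coefficient of $x y \dot z$ (equivalently, of $\dot x y z$), where the left side yields $\beta^2$ and the right side yields $\beta + \alpha\gamma$; these coincide precisely because $\alpha\gamma = \beta(\beta-1)$ by hypothesis. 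Hence all three identities hold.

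For the ``only if'' direction, assume $P$ satisfies the three identities. Since $P$ is a polynomial, \cref{eq:left additivity} — additivity of $P$ jointly in its first pair of arguments — forces $P$ to be $\Q$-linear in $(x, \dot x)$, because a polynomial map satisfying $Q(u+v) = Q(u)+Q(v)$ as an identity has no constant term and no monomials of degree other than $1$ in those variables. By \cref{eq:commutativity}, $P$ is likewise $\Q$-linear in $(y, \dot y)$, so $P$ is a bilinear form and therefore $P = a\cdot xy + b\cdot x\dot y + c\cdot \dot x y + e\cdot \dot x\dot y$ for some $a,b,c,e \in \Q$. Applying \cref{eq:commutativity} once more (swapping $x \leftrightarrow y$ and $\dot x \leftrightarrow \dot y$) yields $b = c$, so $P = \alpha\cdot xy + \beta\cdot(x\dot y + \dot x y) + \gamma\cdot\dot x\dot y$ with $\alpha := a$, $\beta := b = c$, $\gamma := e$. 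It remains to extract the constraint from associativity: substituting this $P$ into \cref{eq:associativity} and expanding as in the ``if'' direction, every monomial coefficient matches identically except that of $x y \dot z$, which gives $\beta^2 = \beta + \alpha\gamma$, i.e.\ $\alpha\gamma = \beta(\beta-1)$. Thus $P$ is simple, which completes the equivalence.

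The computation is entirely routine; the only point requiring genuine care is the bookkeeping in the associativity expansion — in particular, verifying that $\alpha\gamma = \beta(\beta-1)$ is the \emph{unique} nontrivial constraint (so that no additional relation among $\alpha,\beta,\gamma$ is overlooked) and that the compound substitutions, with $yz$ and $P(y,\dot y,z,\dot z)$ placed in the argument slots of $P$, are expanded correctly. I do not expect a real obstacle here, since \cref{thm:characterisation} has already absorbed the conceptual difficulty.
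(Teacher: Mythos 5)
Your proof is correct and follows essentially the same route as the paper: reduce to the equational characterisation of \cref{thm:characterisation} and then force the simple form by a coefficient comparison, with the associativity expansion producing the single constraint $\alpha\gamma = \beta(\beta-1)$ exactly as in \cref{app:associativity}. The only (harmless) difference is the order of impositions — you use commutativity to get $\beta_1=\beta_2$ before expanding associativity, whereas the paper imposes associativity first and obtains the four identities of \cref{eq:associativity identities} with $\beta_1,\beta_2$ still distinct, collapsing them afterwards.
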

\noindent
For instance, the Leibniz product rule for the shuffle product is simple with $\alpha = 0, \beta = 1, \gamma = 0$.
We find it remarkable that \BAC~$P$-products admit such an elementary description.

\subparagraph*{Second challenge: $P$-automata.}

We resolve the second challenge thanks to the characterisation of \BAC~$P$-products from \cref{thm:characterisation},
which allows us to prove that 
quotienting $P$-automata \wrt~the axioms of commutative $\Q$-algebras respects transitions
(\emph{invariance property}; \cf~\cref{lem:invariance}).
This property is crucial.
Thanks to it, we obtain {\it bona fide} polynomial $P$-automata,
whose states belong to the $\Q$-algebra of commutative polynomials%
\footnote{By $\hompoly \Q X$ we denote polynomials without constant term.
This restriction is necessary since we do not insist that series algebras have a multiplicative identity.}:
\begin{align}
    \label{eq:polynomial algebra}
    \tag{polynomial $\Q$-algebra}
    \tuple{\hompoly \Q X; 0, (c \cdot {\_})_{c \in \Q}, +, \cdot}.
\end{align}
As a consequence, the semantic function of a $P$-automaton
is a homomorphism of commutative $\Q$-algebras~(\cref{lem:polynomial automata:homomorphism}),
achieving full abstraction.
This generalises the result of Boreale and Gorla~\cite[Lemma 3.6 and Theorem 3.7]{BorealeGorla:CONCUR:2021} in three respects:
\begin{inparaenum}[(a)]
    \item we do not require well-behavedness (which is implied by the \BAC~property),
    \item we do not require multiplicative identities, and
    \item we work over arbitrary alphabets.
\end{inparaenum}

While multiplicative identities are tangential in our approach
(we do not need them for the bulk of our results, which are thus stronger),
nonetheless 
we provide a complete characterisation of the \BAC~$P$-products
admitting a multiplicative identity in~\cref{sec:multiplicative identity}.

\subparagraph*{Third challenge: The equivalence problem.}

We observe that ideal compatibility
is satisfied by every simple product rule,
and thus, by \cref{cor:classification}, by 
every \BAC~$P$-product.
This yields decidability of the equivalence problem,
which is our other main contribution.
\begin{restatable}[Decidability]{theorem}{thmDecidability}
    \label{thm:equivalence:decidability}
    The equivalence problem for finite-variable polynomial $P$-automata over finite alphabets is decidable 
    for every \BAC~$P$-product.
\end{restatable}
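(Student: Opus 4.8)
The plan is to reduce equivalence of two polynomial $P$-automata to an invariant-checking problem over a Noetherian state space, following the Novikov–Yakovenko / Hilbert basis approach, but instantiated so that the required "ideal compatibility" holds by virtue of \cref{cor:classification}. First I would recall that, by \cref{lem:polynomial automata:homomorphism}, the semantics of a finite-variable polynomial $P$-automaton is a homomorphism of commutative $\Q$-algebras $\sem{\cdot} \colon \hompoly \Q X \to \series \Q \Sigma$, so two states $p, q \in \hompoly \Q X$ are equivalent iff $p - q$ lies in the kernel, and more generally the equivalence question for two automata can be packaged as: does the polynomial $p_0 - q_0$ evaluate to $\zero$ under all iterated left derivatives? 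Concretely, I would set up the coalgebra on $\hompoly \Q X$ whose transitions $\deriveleft a$ are induced by the product rule via~\cref{eq:product rule} (well-defined on the quotient by the invariance property, \cref{lem:invariance}), together with the output map $p \mapsto p$ evaluated at the constant term. The key structural fact is that because $P$ is simple~\eqref{eq:simple}, the derivative $\deriveleft a$ acts on $\hompoly \Q X$ as a \emph{derivation-like} polynomial map: on a product $pq$ it returns $\alpha \cdot pq + \beta(p\,\deriveleft a q + \deriveleft a p\, q) + \gamma\, \deriveleft a p\, \deriveleft a q$, which is again a polynomial in the original variables and their first derivatives. This is exactly the "ideal compatibility" observation: the ideal generated by a set of states is mapped by each $\deriveleft a$ into a controlled enlargement of itself.

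Next I would build the ascending chain. Starting from $p_0 - q_0$, close the singleton $\set{p_0 - q_0}$ under all derivatives $\deriveleft{a_1}, \dots, \deriveleft{a_d}$, and at each stage take the ideal generated by the polynomials produced so far inside $\hompoly \Q X$ (equivalently in $\poly \Q X$, adjoining the constant if convenient). Because $\poly \Q X$ is Noetherian by Hilbert's basis theorem, this ascending chain of ideals $I_0 \subseteq I_1 \subseteq \cdots$ stabilises after finitely many steps, say at $I_N$; crucially, ideal compatibility ensures $\deriveleft a I_k \subseteq I_{k+1}$, so once the chain stabilises the ideal $I_N$ is invariant under all $\deriveleft a$. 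The algorithm then computes a Gröbner basis for $I_N$ (the stabilisation is detected by a Gröbner basis computation: the chain stalls when no new generator reduces to a nonzero normal form), and the two automata are equivalent if and only if every generator of $I_N$ — equivalently $I_N$ itself — vanishes at the constant term, i.e. has zero constant coefficient after the evaluation homomorphism at $\e$. Soundness is immediate (an equivalent pair forces all iterated derivatives of $p_0 - q_0$ to output $0$, hence $I_N$ consists of "null" polynomials); completeness follows because $I_N$ being $\deriveleft a$-invariant and containing $p_0 - q_0$ means the set of states reachable from $p_0 - q_0$ stays inside $I_N$, so if $I_N$ is null then $p_0 - q_0$ is recognised as $\zero$, i.e. $p_0$ and $q_0$ are equivalent.

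The main obstacle I anticipate is making the invariance of the stabilised ideal genuinely align with the coalgebraic semantics — that is, verifying that $\deriveleft a$ on the \emph{quotiented} state space $\hompoly \Q X$ is well-defined and that ideal compatibility holds in the precise form needed for the chain argument. This is where \cref{lem:invariance} and the explicit shape of simple $P$ do the work: one must check that applying $\deriveleft a$ to a polynomial that is an $\hompoly \Q X$-linear combination of the current generators yields again such a combination of the (enlarged) generators, using only bilinearity of the product and the identity $\deriveleft a(pq) = \alpha pq + \beta(p\,\deriveleft a q + \deriveleft a p\,q) + \gamma\,\deriveleft a p\,\deriveleft a q$. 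A secondary subtlety is the bookkeeping around the missing multiplicative identity: since states live in $\hompoly \Q X$ rather than $\poly \Q X$, one should either carry out the ideal computations in $\poly \Q X$ and intersect back, or observe that the constant-term output map factors appropriately; either way this is routine once the invariance is in place. I would also note that effectiveness is automatic — all operations (derivative, ideal membership, Gröbner bases over $\Q$) are computable — so the only real content is the termination argument via Hilbert's basis theorem together with ideal compatibility, both of which are now available.
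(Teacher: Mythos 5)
Your proposal matches the paper's proof essentially step for step: reduce equivalence to zeroness, form the ascending chain of ideals $I_n$ generated by the states reachable from the initial polynomial, use the fact that every simple product rule lies in $\ideal{y,\dot y}$ (ideal compatibility) to get $\Delta_a I_n\subseteq I_{n+1}$ and hence that one-step stabilisation propagates, invoke Hilbert's basis theorem for termination with stabilisation detected effectively, and conclude via multiplicativity of the output/constant-term map that vanishing on the generators of the stabilised ideal forces zeroness. The only cosmetic differences are that the paper phrases the final check as testing the automaton's output on all words of length at most $N$ (which are exactly your generators) and appeals to decidability of ideal membership rather than explicitly to Gr\"obner bases.
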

\noindent
%
This subsumes known decidability for Hadamard~\cite[Theorem 4]{BenediktDuffSharadWorrell:LICS:2017},
shuffle~\cite[Theorem 1]{Clemente:CONCUR:2024},
and infiltration automata~\cite[§5.A]{Clemente:LICS:2025}.
Since Hilbert's theorem relies strongly on the fact that $\hompoly \Q X$
is a finitely generated commutative (non-unital) $\Q$-algebra,
our results indicate a clear limit of Hilbert's method for deciding equivalence of $P$-automata when $P$ is not simple.

\subsection{Additional results}

\subparagraph*{Computational complexity.}
The complexity of the equivalence problem depends on the product rule.
We observe that an Ackermannian upper bound holds in general,
and already for Hadamard automata one cannot do better,
since equivalence is Ackermann-hard~\cite[Theorem~1]{BenediktDuffSharadWorrell:LICS:2017}.
%

\subparagraph*{Commutativity problem.}
Decidability of the equivalence problem has applications to other decision problems.
A series $f \in \series \Q \Sigma$ is \emph{commutative}
if the output $f(w)$ is invariant under permutation of the order of the symbols in the input $w \in \Sigma^*$.
The \emph{commutativity problem} for a (finitely presented) class of series
amounts to decide whether a given series is commutative.
We have studied this problem in~\cite{Clemente:LICS:2025},
where we use it to decide solvability in power series of certain classes of multivariate recursions and differential equations.
The main result in that work is that the commutativity problem is decidable
for Hadamard, shuffle, and infiltration automata~\cite[Theorem 2]{Clemente:LICS:2025}.
We can amply generalise this result in the general framework of $P$-products:
We show in~\cref{app:commutativity} that commutativity of $P$-automata is decidable \emph{for every \BAC~$P$-product}.
This unifies and generalises all results in~\cite{Clemente:LICS:2025}.

\subparagraph*{Mechanical formalisation.}
We have formalised most results of this paper in the
\href{https://github.com/agda/agda}{Agda programming language}~\cite{Norell:AFP:2008}.
This document contains clickable links to the \agda{formalisation}{}.

\subsection{Related work}

The study of products of series via their product rules
is inspired by behavioural differential equations~\cite[Sec.~7]{BasoldHansenPinRutten:MSCS:2017}.
In the special case of a single-letter alphabet,
this program has been advanced by the already mentioned work of Boreale and Gorla,
who studied \emph{$(F, G)$-products}~\cite[Def.~3.1]{BorealeGorla:CONCUR:2021}.
Here, $F$ is a product rule
and $G$ a term modelling the multiplicative identity (i.e., the unit).
The format allowed for $F$ is slightly more expressive than ours,
allowing the authors to also capture the \emph{Cauchy product} over single letter alphabets,
which does not fall in our scope.
We remark that the Cauchy product is \emph{noncommutative} over alphabets of size $\geq 2$,
therefore the fact that it is commutative in the one-letter case can be regarded as an accident.
The added generality necessary to capture this accident, however,
requires the authors to ``assume from the outset'' that the product is~\BAC~\cite[Remark 3.2]{BorealeGorla:CONCUR:2021}.
\Cref{thm:characterisation} shows that, for our more restrictive product rule format,
the \BAC~property can be proved (or disproved) from the product rule itself.

In the special case of a single-letter alphabet,
some classes of $P$-finite series have already been studied:
Hadamard-finite series coincide with \emph{polynomial recursive sequences}\-
\cite{Cadilhac:Mazowiecki:Paperman:Pilipczuk:Senizergues:ToCS:2021},
and shuffle-finite series coincide with
\emph{constructible differentially algebraic power series}~\cite{BergeronReutenauer:EJC:1990};
we are not aware of previous work on infiltration-finite series for a one-letter alphabet.

Over multi-letter alphabets, the main axes (A)--(D) have been developed in~\cite{Clemente:CONCUR:2024}
for the shuffle product,
and in~\cite{Clemente:LICS:2025} for the Hadamard and infiltration products.
Since these products are captured by our characterisation,
all our results on $P$-finite series, $P$-automata, and decidability of equivalence
unify and generalise the corresponding results in these two works.

\subparagraph*{Other algebraic structures.}

The driving criterion for our study has been to obtain broad classes of series (and corresponding automata)
with a decidable zeroness and equality problems.
For this reason, we have chosen to extend the rational series,
for which the zeroness problem is decidable~\cite{Schutzenberger:IC:1961}.
In this respect, it is crucial to work over the field of rational numbers.

One could consider the more general situation where the rational numbers
are replaced by an arbitrary class of semirings
(e.g., commutative semirings, additively idempotent semirings, etc.),
and investigate characterisations of $P$-products endowing the set of series with an analogous semiring structure.
However such an investigation would have to let go of the decidability of the equivalence problem,
since undecidability already holds for weighted automata over the \emph{tropical semiring} $\tuple{\N, \min, +}$~\cite{Krob:ICALP:1992}.

\subparagraph*{Coalgebraic perspective.}
The product rules that we consider fall under the \emph{abstract GSOS} setting of Turi and Plotkin~\cite{TuriPlotkin:LICS:1997}
(\cf~\cite[Sec.~8.2]{HansenKupkeRutten:LMCS:2017} for the case of streams).
This allows one to study $P$-products and $P$-automata in the context of bialgebras~\cite[Sec.~6.3]{Klin:TCS:2011}
and distributive laws of monads over comonads~\cite{BonsangueHansenKurzRot:LMCS:2015}.
The syntax is modelled with the free monad $T^*$ generated by the term signature functor $T$,
and the semantics with the coalgebra of series,
which is the final coalgebra of the weighted automata functor $W := X \mapsto \Q \times X^\Sigma$.
In this view, a $P$-automaton over a set of variables $X$ is a \emph{corecursive equation} $\AA : X \to W T^* X$
(called \emph{$T$-automaton} in \cite[Sec.~3]{Jacobs:AMC:2006}).
Every product rule $P$~determines a \emph{GSOS law} of $T^*$ over $W$,
a distributive law ensuring 
%
a $T^*$-algebra structure on the final coalgebra of series, and
the homomorphism property for the semantics of $P$-automata.

Going further, we can now quotient the free term monad \wrt~the axioms of commutative $\Q$-algebras,
which are the axioms that we want to hold in the series coalgebra.
In this way, we obtain the (nonfree) monad $\mathcal P$ of commutative polynomials over the rationals.
Since the new syntax is not free anymore,
not every product rule $P$ determines a GSOS law of $\mathcal P$ over $W$.
In this context, \cref{thm:characterisation} (together with~\cref{lem:invariance}) provides a necessary and sufficient condition on $P$ for this to happen.
As mentioned in~\cite[Remark 2]{BorealeCollodiGorla:ACMTCL:2024},
sufficiency would follow by showing that the distributive law induced by a \BAC~$P$-product preserves the axioms of commutative $\Q$-algebras,
and then appeal to~\cite[Theorem~4.3]{BonsangueHansenKurzRot:LMCS:2015}.
This is indeed the case, however for clarity we opted for a self-contained presentation.


\subparagraph{\it Organisation.}
The rest of the paper is organised as follows.
In~\cref{sec:preliminaries} we introduce the necessary mathematical preliminaries.
In~\cref{sec:rational series} we use the theory of rational series as an illustrative example of the main axes (A)--(D).
In~\cref{sec:general} we introduce $P$-products and study the theory of $P$-finite series and $P$-automata,
without making any assumption on $P$.
In~\cref{sec:special} we focus on \BAC~$P$-products and we provide the main results of the paper.
Finally, in~\cref{sec:future work} we discuss future work and open problems.
Additional proofs omitted in the main text can be found in~\cref{app:extra}.
A second appendix~\cref{app:commutativity} contains results on the commutativity problem.

\section{Preliminaries}
\label{sec:preliminaries}


Recall that $\Sigma^*$ is the set of finite words over a finite alphabet $\Sigma = \set{a_1, \dots, a_d}$,
which is a monoid under the operation of concatenation
with neutral element the empty word $\e$.
We refer to~\cite{BerstelReutenauer:CUP:2010} for a general introduction to series.
We consider series $\series \Q \Sigma$ over the rationals $\Q$,
however many results in this paper hold for an arbitrary field;
we will highlight which properties of $\Q$ we use when needed.
We use $f, g, h$ to denote series
and write a series $f$ as $\sum_{w \in \Sigma^*} f_w \cdot w$,
where the value of $f$ at $w$ is $f(w) := f_w \in \Q$.
For instance $f = 1 \cdot \e + 1 \cdot a + 1 \cdot aa + \cdots$
is the series over $\Sigma = \set a$ mapping every word to $1$.
The \emph{coefficient extraction} function $\coefficient \_ \_$
maps a word $w$ and a series $f$ to the coefficient $\coefficient w f := f_w$.
%
%
%
We endow series with a dynamic structure:
For every input symbol $a \in \Sigma$, the \emph{left derivative} by $a \in \Sigma$ is the operator
$\deriveleft a : \series \Q \Sigma \to \series \Q \Sigma$ 
mapping a series $f$ to the series $\deriveleft a f$ defined by
\begin{align}
    \label{eq:left derivative}
    \coefficient w {(\deriveleft a f)}
        := \coefficient {a \cdot w} f, \quad \text{for every } w \in \Sigma^*.
\end{align}
Derivatives extend to all finite words homomorphically:
$\deriveleft \e f := f$ and $\deriveleft {a \cdot w} f := \deriveleft w (\deriveleft a f)$
for all $a \in \Sigma$ and $w \in \Sigma^*$. 
This is the series analogue of the left quotient of a language
and provides a convenient way to work with series coinductively.
%
%
For instance, we can use left derivatives to prove that two series are equal:
A \emph{series bisimulation} is a binary relation
$R \subseteq \series \Q \Sigma \times \series \Q \Sigma$
\st~$(f, g) \in R$ implies
\begin{inparaenum}[(B.1)]
    \item\label{bisimulation:1}%
    $f_\e = g_\e$, and
    \item\label{bisimulation:2}%
    for every letter $a \in \Sigma$,
    we have $(\deriveleft a f, \deriveleft a g) \in R$.
\end{inparaenum}
%
%
The following equality principle is proved by induction on the length of words.
\begin{lemma}[\protect{\cite[Theorem 9.1]{Rutten:TCS:2003}}]
    \label{lem:bisimulation}
    If there exists a bisimulation $R$ \st~$(f, g) \in R$, then $f = g$.
\end{lemma}

\begin{table*}
    \begin{center}
        \small
        \makegapedcells
        \begin{tabular}{l|l|l|l}
            operation & notat. & initial value & left derivative by $a$ \\
            \hline\hline
            zero & $\zero$ & $\coefficient \e \zero = 0$ & $\deriveleft a \zero = \zero$ \\
            scalar multiplication by $c \in \Q$
                & $c \cdot f$
                & $\coefficient \e (c \cdot f) = c \cdot \coefficient \e f$
                & $\deriveleft a (c \cdot f) = c \cdot \deriveleft a f$ \\
            addition
                & $f + g$
                & $\coefficient \e (f + g) = \coefficient \e f + \coefficient \e g$
                & $\deriveleft a (f + g) = \deriveleft a f + \deriveleft a g$ \\
            \hline
            Hadamard product (pointwise)
                & $f \hadamard g$
                & $\coefficient \e (f \hadamard g) = \coefficient \e f \cdot \coefficient \e g$
                & $\deriveleft a (f \hadamard g) = \deriveleft a f \hadamard \deriveleft a g$ \\
            shuffle product
                & $f \shuffle g$
                & $\coefficient \e (f \shuffle g) = \coefficient \e f \cdot \coefficient \e g$
                & $\deriveleft a (f \shuffle g) = \deriveleft a f \shuffle g + f \shuffle \deriveleft a g$ \\
            infiltration product
                & $f \infiltration g$
                & $\coefficient \e (f \infiltration g) = \coefficient \e f \cdot \coefficient \e g$
                & $
                    \deriveleft a (f \infiltration g) = \deriveleft a f \infiltration g + f \infiltration \deriveleft a g + \deriveleft a f \infiltration \deriveleft a g
                 $
        \end{tabular}\\[2ex]
    \end{center}
    \caption{Common operations on series, defined coinductively.}
    \label{table:basic operations on series}
\end{table*}

\section{Rational series}
\label{sec:rational series}

In this section, we provide a brief overview of the four axes (A)--(D) in the simple case of rational series (\cref{fig:rational series}).
There will be no new results, however it will be useful for comparison with
the more general setting of $P$-finite series that we will study in~\cref{sec:general}.

\medskip \noindent {\bf (A)}.
The set of series carries a natural $\Q$-vector space structure
given by the point-wise lifting of addition and scalar multiplication over $\Q$~(\cf~\cref{eq:vector space structure}).
To be in line with the rest of the paper,
we present these operations via \emph{behavioural differential equations} (standard in coalgebra~\cite{Rutten:TCS:2003,Rutten:MSCS:2005}),
by specifying their initial value and how their left derivatives behave.
See the upper half of~\cref{table:basic operations on series},
where we define the zero series $\zero$,
scalar multiplication $c \cdot \_$ of a series by a rational $c \in \Q$,
and addition of two series $\_ + \_$.
%

\medskip \noindent {\bf (B)}.
Combining the linear algebraic structure with the dynamic structure given by left derivatives,
we obtain the class of \emph{rational series}.
We begin from an example.
%
\begin{example}[Fibonacci series]
    \label{ex:fibonacci series}
    Consider the univariate series $f, g \in \series \Q a$
    uniquely defined by
    $\coefficient \e f = 0$, 
    $\coefficient \e g = 1$,
    and
    $\deriveleft a f = f + g$,
    $\deriveleft a g = f$.
    Then $\coefficient {a^n} f$ is the $n$-th Fibonacci number, for every $n \in \N$.
    For instance, $\deriveleft {a^3} f = \deriveleft {a^2} (f + g) = \deriveleft a (2 \cdot f + g) = 3 \cdot f + 2 \cdot g$
    and thus $\coefficient {a^3} f = \coefficient \e (\deriveleft {a^3} f) = 3 \cdot 0 + 2 \cdot 1 = 2$.
\end{example}
Generalising the example, we say that a series $f$ is \emph{linearly finite}
if there are finitely many generators $g_1, \dots, g_k$ with $f = g_1$ \st\-
for every $1 \leq i \leq k$ and $a \in \Sigma$,
the left derivative $\deriveleft a g_i$ is a $\Q$-linear combination of the generators $g_1, \dots, g_k$.
This class in fact coincides with the well-known rational series~\cite{BerstelReutenauer:CUP:2010}.
For a single-letter alphabet, they are called \emph{linear recursive sequences}
(\aka~\emph{C-finite sequences}~\cite[Ch.~4]{KauerPaule:Tetrahedron:2011}).
The advantage of the algebraic definition is that it readily provides closure properties with respect to the algebraic and dynamic structure
(\emph{closure lemma}; \cf~\cref{lem:P-finite:closure properties}):
The class of linearly-finite series contains $\zero$
and is effectively closed under scalar multiplication, addition, and left derivatives.

\medskip \noindent {\bf (C)}.
The disadvantage of the algebraic class is that it is not directly amenable to algorithmic manipulation.
For this reason, one introduces an automaton model recognising the same class.
Corresponding to rational series one has weighted automata,
however we present it differently to better align with the rest of the paper.
A \emph{linear term} over a set of variables $X$
is built from constructors corresponding to the vector space operations,
%
\begin{align}
    \label{eq:linear terms}
    \tag{\text{linear terms}}
    u, v ::= x \sep 0 \sep c \cdot u \sep u + v,
    \quad \text{ where } x \in X \text{ and } c \in \Q.
\end{align}
Note that $x + y$, $(x + y) + 0$, and $y + x$ are pairwise different terms.
Since series form a $\Q$-vector space,
we can quotient linear terms by the axioms of $\Q$-vector spaces.
The resulting equivalence classes are homogeneous polynomials over $X$ of degree one, denoted $\Lin X$.
%
%
A \emph{linear automaton} is a tuple $\AA = \tuple{\Sigma, X, F, \Delta}$
where $\Sigma$ is a finite alphabet of \emph{input symbols},
$X$ is a set of \emph{variables},
$F : X \to \Q$ is the \emph{output function},
and $\Delta : \Sigma \to X \to \Lin X$ is the \emph{transition function}.
A linear automaton is \emph{finite-variable} if $X$ is a finite set.
The transition function $\Delta_a$ ($a \in \Sigma$) and the output function $F$
are extended to all terms by linearity. 
Given an initial term $u \in \Lin X$,
the automaton recognises the unique series $\Asem \AA u \in \series \Q \Sigma$
coinductively defined by
\begin{equation*}
        \coefficient \e (\Asem \AA u)
            = F u,
        \quad \text{and} \quad
        \deriveleft a (\Asem \AA u)
            = \Asem \AA {\Delta_a u}, \forall a \in \Sigma.
\end{equation*}
These definitions are best illustrated by an example.
\begin{example}[Fibonacci automaton]
    Consider the linear automaton $\AA = \tuple{\Sigma, X, F, \Delta}$
    over a singleton input alphabet $\Sigma = \set a$ and
    two variables $X = \set{x, y}$ \st\-
    $F x = 0$, $F y = 1$, $\Delta_a x = x + y$, and $\Delta_a y = x$.
    %
    %
    The term $x$ recognises the Fibonacci series from~\cref{ex:fibonacci series}.
    For instance, $\Delta_{a^3} x = \Delta_{a^2} (x + y) = \Delta_a (2 \cdot x + y) = 3 \cdot x + 2 \cdot y$
    and thus $\coefficient {a^3} (\Asem \AA x) = F (3 \cdot x + 2 \cdot y) = 2$.
\end{example}
The semantics of linear automata is a homomorphism from the $\Q$-vector space $\Lin X$
to that of series (\emph{homomorphism lemma}; \cf~\cref{lem:homomorphism}),
thus the linearly-finite series coincide with the series recognised by finite-variable linear automata
(\emph{coincidence lemma}; \cf~\cref{lem:coincidence}).

\medskip \noindent {\bf (D)}.
The advantage of the syntactic model is that it provides an algorithm to decide equality of rational series.
In the case of linear automata, this goes back to the work of Schützenberger~\cite{Schutzenberger:IC:1961}
and relies on linear-algebraic techniques.
We will see how this algorithm generalises to $P$-finite series in~\cref{sec:equivalence algorithm}.

%
%
%

\renewcommand\thesubsection{\thesection(\Alph{subsection})}

\section{$P$-products, $P$-finite series, and $P$-automata}
\label{sec:general}

In this section we introduce a coinductive framework
to define a large family of binary operations on series,
which for simplicity we call \emph{products} (however we make no assumptions about their properties).
Our starting point is that the Hadamard, shuffle, and infiltration products
can be defined coinductively based on a \emph{product rule} $P$ satisfied by their left derivatives;
see the lower half of \cref{table:basic operations on series}.
We remark that products can also be defined by induction on the length of words,
however the inductive definitions quickly become very complicated,
while the coinductive definitions are simple and uniform.
In~\cref{sec:P-products} we capture this situation by the notion of \emph{$P$-product},
observing in~\cref{sec:P-finite series} that it gives rise to a corresponding class of \emph{$P$-finite series}.
In~\cref{sec:P-automata} we define \emph{$P$-automata}
and as the main result of this section
we show in~\cref{lem:coincidence} that a series is $P$-finite iff it is recognised by a finite-variable $P$-automaton.
Such a \emph{coincidence lemma}, which was previously known for the notable products,
in fact holds in the general setting of $P$-products.

\subsection{$P$-products}
\label{sec:P-products}

\subparagraph{Syntax.}
We use \agda{terms}{General/Terms/\#sec:terms} to define product rules satisfied by coinductively defined products.
We extend~\cref{eq:linear terms} over $X$ by adding a new binary constructor for product:
%
%
\begin{align}
    \label{eq:terms}
    \tag{\text{terms}}
    u, v ::= x \sep 0 \sep c \cdot u \sep u + v \sep u * v,
    \quad \text{ where } x \in X \text{ and } c \in \Q.
\end{align}
%
%
Let $\Terms X$ be the set of all terms over $X$.
A term is \emph{linear} if it does not use ``$*$''.
Sometimes we write just $u v$ instead of $u * v$,
$-u$ for $(-1) \cdot u$, and $u - v$ for $u + (-v)$.
%
%
We remark that terms satisfy no nontrivial identities,
for instance $xy$ and $yx$ are different terms.

\subparagraph{Product rules.}
A \emph{product rule} is a term $P$ 
over variables $x, \dot x, y, \dot y$.
%
%
Each product rule $P$ gives rise to a notion of $P$-product of series, as we now explain.

\subparagraph{Semantics of terms and $P$-products.}
\label{sec:semantics of terms}
A \emph{valuation} is a function $\varrho : X \to \series \Q \Sigma$
assigning a series to every variable.
Let $\Val X$ be the set of valuations over $X$.
We simultaneously define a \emph{$P$-product} on series $* : \series \Q \Sigma \to \series \Q \Sigma \to \series \Q \Sigma$
and the \emph{semantics} of terms $\sem {\_}\!{\_} : \Terms X \to \Val X \to \series \Q \Sigma$.
The definition of the product depends on the semantics of terms,
which in turn depends on the product.
For every two series $f, g \in \series \Q \Sigma$,
let $f * g$ be the unique series \st
\begin{align}
    \label{eq:product rule}
    \tag{$*$}
    \coefficient \e (f * g)
        = \coefficient \e f \cdot \coefficient \e g
    \qquad\text{and}\qquad
    \deriveleft a (f * g)
        = P(f, \deriveleft a f, g, \deriveleft a g),
\end{align}
where $P(f, \deriveleft a f, g, \deriveleft a g)$ is a shorthand for $\sem P_\varrho$
for the valuation $\varrho = [x \mapsto f, \dot x \mapsto \deriveleft a f, y \mapsto g, \dot y \mapsto \deriveleft a g]$.
For every $\varrho \in \Val X$, let $\sem {\_}_\varrho : \Terms X \to \series \Q \Sigma$
be defined by
\begin{equation}
    \label{eq:semantics}
    \tag{$\sem \_$}
    \begin{aligned}
        \sem 0_\varrho
            &= \zero,\\ 
        \sem x_\varrho
            &= \varrho x,\\
        \sem {c \cdot u}_\varrho
            &= c \cdot \sem u_\varrho,
    \end{aligned}
    \qquad
    \begin{aligned}
        \sem {u + v}_\varrho
            &= \sem u_\varrho + \sem v_\varrho,\\
        \sem {u * v}_\varrho
            &= \sem u_\varrho * \sem v_\varrho.
    \end{aligned}
\end{equation}
Then \cref{eq:product rule,eq:semantics}
uniquely define both the series product and the semantics of terms;
see~\cite[Sec.~8]{HansenKupkeRutten:LMCS:2017} for a justification when $\card \Sigma = 1$,
and~\cref{rem:product via automata} in general.

\begin{remark}
    One may wonder why we use \emph{terms} $P$ to define $P$-products
    instead of directly using \emph{polynomials}
    (as in done in \cite[Definition 3.1]{BorealeGorla:CONCUR:2021} for example).
    There are two, interrelated reasons.
    The first reason is that we want to be able to define $P$-products which are not necessarily \BAC,
    so that our characterisation~\cref{thm:characterisation:BAC} carries some content.
    The second reason is a consequence of the first one, and it is mathematical:
    Since we do not want to assume a priori any algebraic property of $P$-products in order to be able to define them,
    for the equations~\cref{eq:product rule,eq:semantics} to be well-defined
    we need to consider $P$ from the free algebra of terms.
\end{remark}

By~the base case in \cref{eq:product rule},
constant term extraction is a homomorphism
from the series algebra to the underlying coefficient field $\Q$.
This means that at the constant term the algebra of series
is isomorphic to the algebra of rational numbers.
\begin{agdalemma}[General/Products/\#lem:constant-term-homomorphism-lemma]
    \label{lem:constant term homomorphism}
    For every term $u \in \Terms X$ and valuation $\varrho \in \Val X$,
    we have $\coefficient \e \sem u_\varrho = \sem u_{\coefficient \e \circ \varrho}$.
\end{agdalemma}

\subparagraph{Examples of $P$-products.}

Product rules allow us to define a large family of products of series.
We discuss here some examples.
It is convenient to interpret a product rule $P$ in the context of process algebra,
yielding a recipe prescribing how two processes interact when reading the input letter.
For instance by taking $P = 0$ we get $\deriveleft a (f * g) = 0$,
and thus the two processes vanish upon reading any letter;
consequently, the product is zero on any nonempty input.
As another simple example, for $P = xy$ we have $\deriveleft a (f * g) = f * g$,
which means that the two processes do not read any input at all,
therefore $f * g$ is constantly $f_\e \cdot g_\e$.
If we take $P = \dot x y$, thus $\deriveleft a (f * g) = \deriveleft a f * g$,
then the input is ready only by the left process,
thus $\coefficient w {(f * g)} = f_w \cdot g_\e$ for every $w \in \Sigma^*$.
%

We now discuss more interesting examples.
The rule $P = \dot x \dot y$ yields the Hadamard product,
where both processes read the input in parallel,
and thus $\coefficient w {(f \hadamard g)} = f_w \cdot g_w$ for every $w \in \Sigma^*$.
The shuffle product rule $P = \dot x y + x \dot y$ means that the next input letter
can be read either by the left or by the right process (but not by both);
consequently, $\coefficient w {(f \shuffle g)}$ is obtained by summing up $f_u \cdot g_v$
over all ways of partitioning $w$ into subwords $u$ and $v$~\cite[§6.3]{Lothaire:CUP:1997}.
Infiltration $P = \dot x y + x \dot y + \dot x \dot y$ is similar to shuffle,
however the next input can additionally be read by both processes simultaneously;
thus, the positions of $u, v$ can overlap in $w$~\cite[§6.3]{Lothaire:CUP:1997}.
%
%
%

We can also model less standard products.
Consider two scalars $\beta_1, \beta_2 \in \Q$.
The product rule $P = \beta_1 \cdot \dot x y + \beta_2 \cdot x \dot y$
models a \emph{weighted shuffle product}, whereby the next input letter
is read with weight $\beta_1$ by the first series
and with weight $\beta_2$ by the second one.
For instance,
$\coefficient{ab}{(f*g)} = \beta_1^2 f_{ab} g_\e + \beta_1\beta_2 (f_a g_b + f_b g_a) + \beta_2^2 f_\e g_{ab}$.

\subsection{$P$-finite series}
\label{sec:P-finite series}

Every binary operation on series ``$*$'' (even one not necessarily satisfying a product rule),
naturally induces a class of series.
Fix series $g_1, \dots, g_k \in \series \Q \Sigma$,
which we call \emph{generators}.
Let $A := \algebra \Q {g_1, \dots, g_k}$ be the smallest set of series \st\-
\begin{inparaenum}[(1)]
    \item $\zero \in A$;
    \item $g_1, \dots, g_k \in A$;
    \item for every $f, g \in A$, $c \in \Q$
    we have $c \cdot f, f + g, f * g \in A$.
\end{inparaenum}
A series $f$ is \emph{$*$-finite}
if there are generators $f = g_1, g_2, \dots, g_k$ \st\-
for every $1 \leq i \leq k$ and input symbol $a \in \Sigma$,
the left derivative $\deriveleft a g_i$ is in $\algebra \Q {g_1, \dots, g_k}$.
\begin{wrapfigure}{r}{0.55\textwidth}
    \vspace{-2em} 
    \begin{center}
        \begin{tabular}{l|l}
            $P$-product                      & $P$-finite series \\
            \hline
            Hadamard ``$\hadamard$''         & Hadamard finite~\cite[Sec.~III.C]{Clemente:LICS:2025} \\
            shuffle ``$\shuffle$''           & shuffle finite~\cite[Sec.~2.4]{Clemente:CONCUR:2024} \\
            infiltration ``$\infiltration$'' & infiltration finite~\cite[Sec.~V.A]{Clemente:LICS:2025}
        \end{tabular}
    \end{center}
    \vspace{-2em}
\end{wrapfigure}
If the product ``$*$'' is a $P$-product for some product rule $P$,
then we say that $f$ is \agda{\emph{$P$-finite}}{General/FinitelyGenerated/\#sec:P-finite}.
%
%
%
When we instantiate $P$-finiteness to the notable products,
we obtain classes of series which have been previously considered (table on the right).

\begin{example}[Double exponential]
    \label{ex:double exponential series}
    Consider the Hadamard product.
    Let $f$ be the unique series over $\Sigma = \set a$ \st\-
    $\coefficient \e f = 2$ and $\deriveleft a f = f \hadamard f$.
    %
    %
    It is easy to check that $\coefficient {a^n} f = 2^{2^n}$ for all $n \in \N$.
    Thus, $f$ is Hadamard finite over a single generator. 
\end{example}

\begin{example}[Factorial]
    \label{ex:factorial series}
    Consider the shuffle product.
    Let $f$ be the unique series over $\Sigma = \set a$ \st\-
    $\coefficient \e f = 1$ and $\deriveleft a f = f \shuffle f$.
    %
    %
    Thus $f$ is shuffle finite;
    it can be verified that $\deriveleft {a^n} f = n! \cdot f^{{\shuffle}n}$ for every $n \in \N$,
    and thus $\coefficient {a^n} f = n!$.
\end{example}
%
%
%
%
The notion of $P$-finiteness is designed to ensure basic closure properties,
a testimony of the robustness of this class.%

\begin{agdalemma}[General/FinitelyGenerated/\#sec:P-finite-closure-properties]
    \label{lem:P-finite:closure properties}
    The class of $P$-finite series contains $\zero$
    and is effectively closed under the operations of scalar multiplication,
    addition, product, and left derivatives.
    %
    %
\end{agdalemma}
\noindent
This subsumes analogous results for
Hadamard-finite series and multivariate polynomial recursive sequences~\cite[Lemma~8]{Clemente:LICS:2025},
shuffle-finite series and multivariate constructive differentially algebraic series~\cite[Lemma~10]{Clemente:CONCUR:2024},
and infiltration-finite series~\cite[Sec.~V.A]{Clemente:LICS:2025}.
%
%
\begin{proof}[Proof sketch]
    Let $f, g \in \series \Q \Sigma$ be $P$-finite
    with generators $f_1, \dots, f_k$, resp., $g_1, \dots, g_m$.
    Then, 
    \begin{enumerate}
        \item the series $\zero$ 
        is $P$-finite, with the empty tuple of generators,
        \item $c \cdot f$ is $P$-finite with generators $f_1, \dots, f_k$,
        \item $f + g$ is $P$-finite with generators $f_1, \dots, f_k, g_1, \dots, g_m$,
        \item $f * g$ is $P$-finite with generators $f_1, \dots, f_k, g_1, \dots, g_m$,
        \item $\deriveleft a f$ is $P$-finite with generators $f_1, \dots, f_k$. \qedhere
    \end{enumerate}
    %
\end{proof}

\subsection{$P$-automata}
\label{sec:P-automata}

Every product rule $P$ gives rise to \emph{$P$-automata},
an automaton model recognising series.
The main result of this section is that the class of series recognised by $P$-automata over finitely many variables
coincides with the class of $P$-finite series (\cref{lem:coincidence}).

\subparagraph*{Syntax.}

A \emph{$P$-automaton} is a tuple $\AA = \tuple{\Sigma, X, F, \Delta}$
where $\Sigma$ is a finite alphabet of \emph{input symbols},
$X$ is a set of \emph{variables},
$F : X \to \Q$ is the \emph{output function},
and $\Delta : \Sigma \to X \to \Terms X$ is the \emph{transition function}
mapping every letter $a \in \Sigma$ and variable $x \in X$
to a term $\Delta_a x \in \Terms X$.
A $P$-automaton is \emph{linear} if $\Delta_a x$ is a linear term
for every $a \in \Sigma$ and $x \in X$,
and \emph{finite-variable} if $X$ is a finite set.
The syntax of $P$-automata does not depend on $P$, but its semantics does.


\subparagraph*{Semantics.}

The \emph{$P$-extension} of a function $D : X \to \Terms X$
is the function on all terms $\tilde D : \Terms X \to \Terms X$
defined by the following structural induction:
\begin{equation}
    \label{eq:extension}
    \tag{$P$-ext}
    \begin{aligned}
        \tilde D 0
            &:= 0, \\
        \tilde D x
            &:= D x, \\
        \tilde D (c \cdot \alpha)
            &:= c \cdot \tilde D \alpha,
    \end{aligned}
    \qquad
    \begin{aligned}
        \tilde D (\alpha + \beta)
            &:= \tilde D \alpha + \tilde D \beta,\\
        \tilde D (\alpha * \beta)
            &:= P(\alpha, \tilde D \alpha, \beta, \tilde D \beta).
    \end{aligned}
\end{equation}
In the last equation, $P(\alpha, \tilde D \alpha, \beta, \tilde D \beta)$
is the term obtained from $P(x, \dot x, y, \dot y)$ by the substitution
$[x \mapsto \alpha, \dot x \mapsto \tilde D \alpha, y \mapsto \beta, \dot y \mapsto \tilde D \beta]$.
%
%
The semantics of the automaton is defined by coinduction.
A term $\alpha \in \Terms X$ \emph{recognises} the unique series $\Asem \AA \alpha \in \series \Q \Sigma$ \st 
\begin{equation}
    \label{eq:automata semantics}
    \tag{$\Asem \AA {\_}$}
        \coefficient \e (\Asem \AA \alpha)
            = F \alpha
            \quad\text{and}\quad
        \deriveleft a (\Asem \AA \alpha)
            = \Asem \AA {\tilde \Delta_a \alpha}, \quad \text{for all } a \in \Sigma,
\end{equation}
where $\tilde \Delta_a$ is the $P$-extension of $\Delta_a$,
and $F$ is extended to all terms homomorphically.
By convention, the automaton recognises the series $\Asem \AA {x_1}$,
for a distinguished variable $x_1 \in X$.
\begin{remark}[Inductive definition of the semantics]
    Alternatively, we could have defined the extension of transitions to all finite words homomorphically,
    $\Delta_\e \alpha := \alpha$ and $\Delta_{a \cdot w} \alpha := \Delta_w (\Delta_a \alpha)$,
    and then define the semantics as $\coefficient w (\Asem \AA \alpha) := F(\Delta_w \alpha)$ for all $w \in \Sigma^*$.
    We find the coinductive definition above more elegant in proofs,
    and the inductive one more intuitive in examples. 
\end{remark}

\begin{example}
    \label{ex:term automaton}
    Consider the automaton $\AA = \tuple{\Sigma, X, F, \Delta}$
    over a single-letter alphabet $\Sigma = \set a$
    and a single variable $X = \set x$ defined by
    $F x = 1$ and $\Delta_a x = x * x$.
    For the Hadamard product rule,
    the extension of $\Delta_a$ to all terms is a homomorphism
    and thus $\Delta_{a^n} x$ is the full binary tree of height $n$
    where internal nodes are labelled by ``$*$'' and leaves by $x$.
    For instance, $\Delta_{a^2} x = \Delta_a (x * x) = \Delta_a x * \Delta_a x = (x * x) * (x * x)$.
    In this case, $\Asem \AA x$ is the double exponential series from~\cref{ex:double exponential series}.
    We will present other examples of automata in~\cref{sec:polynomial P-automata}.
\end{example}

In \cref{eq:extension} transitions of $P$-automata are extended to all terms
by mimicking the product rule~\cref{eq:product rule}.
Consequently, the semantics of $P$-automata is a homomorphism.

\begin{agdalemma}[General/Automata/\#homomorphism-lemma][Homomorphism lemma]
    \label{lem:homomorphism}
    The semantics of a $P$-automaton is a homomorphism from terms to series:
    $\Asem \AA \alpha = \sem \alpha_{[x \in X \mapsto \Asem \AA x]}$,
    for every $\alpha \in \Terms X$.
    %
    %
\end{agdalemma}


\begin{figure*}
    \begin{center}
        \begin{tabular}{c|c|c|c}
            \multirow{2}{*}{$P$-product} & \multirow{2}{*}{$P$-automata} & homomorphism lem. & coincidence lem. \\
            & & (\cref{lem:homomorphism}) & (\cref{lem:coincidence}) \\
            \hline
            Hadamard ``$\hadamard$''         & Hadamard automata~\cite[Sec.~III.B]{Clemente:LICS:2025} & \cite[Lemma 5]{Clemente:LICS:2025} & \cite[Lemma 7]{Clemente:LICS:2025} \\
            shuffle ``$\shuffle$''           & shuffle automata~\cite[Sec.~2.2]{Clemente:CONCUR:2024} & \cite[Lemma 8+9]{Clemente:CONCUR:2024} & \cite[Lemma 12]{Clemente:CONCUR:2024} \\
            infiltration ``$\infiltration$'' & infiltration automata~\cite[Sec.~V.A]{Clemente:LICS:2025} & \cite[Sec.~V.A]{Clemente:LICS:2025} & \cite[Sec.~V.A]{Clemente:LICS:2025}
        \end{tabular}
    \end{center}
    \caption{Instances of $P$-automata, the homomorphism lemma, and the coincidence lemma.}
    \label{fig:P-automata}
\end{figure*}

The main result of this section is the coincidence of
$P$-finite series and series recognised by $P$-automata.
It is a direct consequence of the homomorphism lemma~(\cref{lem:homomorphism}).
\begin{agdalemma}[General/Automata/\#sec:coincidence][Coincidence lemma]
    \label{lem:coincidence}
    A series is $P$-finite if, and only if,
    it is recognised by a finite-variable $P$-automaton.
\end{agdalemma}
\noindent
$P$-automata and the corresponding homomorphism and coincidence lemmas
generalise analogous statements from the literature; \cf~\cref{fig:P-automata}.
%
%
\begin{proofsketch}
    For the ``only if'' direction,
    given a $P$-finite series $f \in \series \Q \Sigma$ with generators $f = g_1, \dots, g_k$,
    we construct a $P$-automaton $\AA = \tuple{\Sigma, X, F, \Delta}$.
    Variables are $X := \set{x_1, \dots, x_k}$.
    Let $\varrho : X \to \series \Q \Sigma$ be the valuation
    mapping $x_i$ to $g_i$, for all $i \in \set{1, \dots, k}$.
    %
    %
    By definition, $\delta_a g_i \in \algebra \Q {g_1, \dots, g_k}$,
    and thus there is a term $\alpha_i \in \Terms X$ \st~$\delta_a g_i = \sem {\alpha_i}_\varrho$.
    Define the transition function by $\Delta_a x_i := \alpha_i$.
    Finally, the output mapping is given by the constant terms of the generators $F x_i := \coefficient \e {g_i}$.
    This completes the construction of the $P$-automaton $\AA $.
    Correctness amounts to establish $\Asem \AA {x_1} = \sem {x_1}_\varrho = g_1$.
    More generally, an induction on terms and \cref{lem:homomorphism}
    shows $\Asem \AA \beta = \sem \beta_\varrho$ for every term $\beta \in \Terms X$.

    For the ``if'' direction, we are given a finite-variable $P$-automaton $\AA = \tuple{\Sigma, X, F, \Delta}$
    over variables $X = \set{x_1, \dots, x_k}$.
    Consider the algebra $A := \algebra \Q {g_1, \dots, g_k}$ generated by
    $g_1 := \Asem \AA {x_1}, \dots, g_k := \Asem \AA {x_k}$.
    By the definition of $P$-finiteness, we need to show $\deriveleft a g_i \in A$
    for every $g_i$ and $a \in \Sigma$.
    %
    %
    By the definition of the semantics of automata~\cref{eq:automata semantics},
    we have $\deriveleft a g_i = \deriveleft a (\Asem \AA {x_i}) = \Asem \AA {\Delta_a x_i}$.
    Let $u_i := \Delta_a x_i$.
    By~\cref{lem:homomorphism} we have $\Asem \AA {u_i} = \sem {u_i}_\varrho$
    for the valuation $\varrho := [ x_1 \mapsto g_1, \dots, x_k \mapsto g_k ]$,
    which shows $\deriveleft a g_i \in A$.
\end{proofsketch}

\begin{remark}[The syntactic $P$-automaton]
    \label{rem:product via automata}
    For reasons of presentation,
    we have introduced $P$-products without referring to automata.
    In fact, we can use automata to \emph{define} the product.
    This is done by introducing the \emph{syntactic $P$-automaton} $\SS$,
    which has variables $x_f$ indexed by series $f \in \series \Q \Sigma$,
    output function $F(x_f) := \coefficient \e f$,
    and transitions $\Delta_a x_f := x_{\deriveleft a f}$ ($a \in \Sigma$).
    This is set up so that $\Asem \SS {x_f} = f$. 
    The product of two series $f, g \in \series \Q \Sigma$
    can then be defined as
    $f * g := \Asem \SS {x_f * x_g}$.
    This approach is mathematically more efficient,
    however for reasons of narrative we prefer to present products without referring to automata.
\end{remark}

This concludes the general theory of $P$-products, $P$-finite series, and $P$-automata,
thus developing the axes (A), (B), and (C).
The algorithmic aspect of $P$-automata is much less understood at this level of generality, as we discuss next.

\subsection{The equivalence problem}
\label{sec:equivalence problem}

The \emph{equivalence problem} takes as input two finite-variable $P$-automata $\AA$ and $\BB$
and amounts to decide whether they recognise the same series.
Sometimes equivalence is decidable for trivial reasons.
For instance, for the product rule $P = 0$,
the class of $P$-finite series coincides with the rational series,
for which equivalence is decidable~\cite{Schutzenberger:IC:1961}.
In general, the situation may be more complex.
Since $P$-automata are infinite-state and the term algebra does not have enough structure,
decidability of equivalence is not guaranteed.
In fact, we conjecture that there exists a product rule
\st~equivalence is undecidable.
Nonetheless, for the notable products, equivalence is known to be decidable.
This is based on an algorithmic application of Hilbert's \emph{finite basis theorem}~\cite[Ch.~2, §5, Theorem 4]{CoxLittleOShea:Ideals:2015},
already exploited by Novikov and Yakovenko in the analysis of dynamical systems~\cite{NovikovYakovenko:1999},
where they obtain decidability of equality for univariate ($\Sigma = \set a$) Hadamard and shuffle automata.
This has been the basis of several subsequent decidability results,
such as for multivariate Hadamard automata
(\aka~polynomial automata \cite{BenediktDuffSharadWorrell:LICS:2017}),
multivariate shuffle automata~\cite{Clemente:CONCUR:2024},
and univariate infiltration automata~\cite{BorealeGorla:CONCUR:2021,BorealeCollodiGorla:ACMTCL:2024}.
Remarkably, in all those cases decidability holds for products that are bilinear, associative, and commutative.
We will show in~\cref{sec:equivalence algorithm} that this is not an accident.

\section{Commutative algebras of series}
\label{sec:special}

We restrict our attention to product rules $P$
giving rise to $P$-products which are bilinear, associative, and commutative.
%
%
In~\cref{sec:BAC products} we prove the characterisation (\cref{thm:characterisation})
and classification (\cref{cor:classification}) of such product rules;
%
in~\cref{sec:multiplicative identity} we study multiplicative identities.
The notion of $P$-finite series from~\cref{sec:P-finite series} makes sense for every product rule $P$,
thus there is no corresponding subsection here.
In~\cref{sec:polynomial P-automata} we study \emph{polynomial $P$-automata},
leading in~\cref{sec:equivalence algorithm} to an algorithm for the equivalence problem (\cref{thm:equivalence:decidability}).

\subsection{\BAC~$P$-products}
\label{sec:BAC products}

\subparagraph{Characterisation.}
%
\!\!Series satisfy the \emph{axioms of $\Q$-vector spaces}:
For all $f, g, h \in \series \Q \Sigma, c, d \in \Q$,
\begin{align*}
    \begin{aligned}
        (c \cdot d) \cdot f
            &= c \cdot (d \cdot f), \\
        (c + d) \cdot f
            &= c \cdot f + d \cdot f, \\
        c \cdot (f + g)
            &= c \cdot f + c \cdot g,
    \end{aligned}
    %
    \qquad
    \begin{aligned}
        f + \zero
            &= f, \\
        f + (g + h)
            &= (f + g) + h, \\
        f + g
            &= g + f, \\
    \end{aligned}
    \qquad
    \begin{aligned}
        f - f
            &= \zero.
    \end{aligned}
\end{align*}
The axioms on the left state that scalar multiplication
is compatible with the ring structure of the rationals and addition of series.
Those on the right state that series under addition ``$+$''
form a commutative group with identity $\zero$.
We are interested in products ``$*$''
satisfying the axioms of \emph{commutative $\Q$-algebras},
which are the axioms above together with
%
\begin{align}
    \label{eq:product:additivity}
    \tag{$*$-additivity}
    (f + g) * h
        &= f * h + g * h, \\
    \label{eq:product:homogeneity}
    \tag{$*$-homogeneity}
    (c \cdot f) * g
        &= c \cdot (f * g), \\
    \label{eq:product:associativity}
    \tag{$*$-associativity}
    f * (g * h)
        &= (f * g) * h, \\
    \label{eq:product:commutativity}
    \tag{$*$-commutativity}
    f * g
        &= g * f.
\end{align}
The first two properties state that the product is \emph{left bilinear},
which together with commutativity implies \emph{right bilinearity}.
(Over the field of rationals, additivity implies homogeneity,
however we state these two properties separately for clarity.)
We call these axioms and a product satisfying them \emph{\BAC}.

Since we are interested in commutative $\Q$-algebras of series,
it makes sense to quotient terms accordingly.
To this end, let ``$\approx$'' be the smallest congruence on terms
generated by the \BAC~axioms.
Equivalence classes of terms \wrt~``$\approx$''
are just polynomials (in commuting indeterminates) with no constant term.
We can now define the subclass of product rules we are interested in.
\begin{definition}[Special product rules]
    A product rule $P$ is \emph{special} if it satisfies:
    \begin{align}
        \tag{$P\text{-add}$}
        \label{eq:P-additivity}
        P(x + y, \dot x + \dot y, z, \dot z)
            &\approx P(x, \dot x, z, \dot z) + P(y, \dot y, z, \dot z) \\
        \tag{$P\text{-assoc}$}
        \label{eq:P-associativity}
        P(x, \dot x, yz, P(y, \dot y, z, \dot z))
            &\approx P(xy, P(x, \dot x, y, \dot y), z, \dot z) \\
        \tag{$P\text{-comm}$}
        \label{eq:P-commutativity}
        P(x, \dot x, y, \dot y)
            &\approx P(y, \dot y, x, \dot x).
    \end{align}
\end{definition}
We can now restate the characterisation from~\cref{thm:characterisation}. 
%
\begin{theorem}[Characterisation]
    \label{thm:characterisation:BAC}
    A $P$-product is \BAC~iff the product rule $P$ is special.
    %
\end{theorem}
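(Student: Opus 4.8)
The plan is to prove \cref{thm:characterisation:BAC} by establishing the two directions separately, using coalgebraic (bisimulation) reasoning in each case.

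\paragraph{Necessity (\BAC{} $\Rightarrow$ special).}
First I would show that if the $P$-product is \BAC{}, then $P$ satisfies the three special identities. The key observation is \cref{lem:constant term homomorphism}: constant-term extraction is an algebra homomorphism onto $\tuple{\Q; +, \cdot}$, so at the empty word the product behaves like rational multiplication. Concretely, for \eqref{eq:P-additivity} I would take generic series $f_1, f_2, g$ and compute $\deriveleft a ((f_1 + f_2) * g)$ in two ways: once using \eqref{eq:product:additivity} followed by the product rule \eqref{eq:product rule:left derivative} on each summand, and once by applying the product rule directly to $(f_1+f_2)*g$. Since derivatives of series range over \emph{all} series (one can pick series realizing any prescribed tuple of values for $f_i, \deriveleft a f_i, g, \deriveleft a g$ by the syntactic automaton of \cref{rem:product via automata}, or simply by choosing series freely), the resulting identity of series must already hold at the level of terms modulo ``$\approx$''. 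The same recipe handles \eqref{eq:P-associativity} (differentiate $f*(g*h)=(f*g)*h$, using that the outer derivative of $g*h$ is itself $P(g,\deriveleft a g, h, \deriveleft a h)$) and \eqref{eq:P-commutativity} (differentiate $f*g = g*f$). The one subtlety is justifying that a term identity valid under \emph{all} series valuations is valid under ``$\approx$''; this follows because the ``$\approx$''-quotient is exactly the commutative polynomials without constant term (stated in the excerpt just before the theorem), and polynomials over $\Q$ are determined by their values, so a polynomial identity holding for all rational substitutions into the free variables — hence for all series substitutions — is a genuine polynomial identity.

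\paragraph{Sufficiency (special $\Rightarrow$ \BAC{}).}
For the converse, assume $P$ is special and prove each of \eqref{eq:product:additivity}, \eqref{eq:product:homogeneity}, \eqref{eq:product:associativity}, \eqref{eq:product:commutativity} by exhibiting a series bisimulation and invoking \cref{lem:bisimulation}. For commutativity, take $R = \setof{(f*g, g*f)}{f,g \in \series \Q \Sigma}$; condition \ref{bisimulation:1} holds by \eqref{eq:product rule:base} and commutativity of $\cdot$ on $\Q$, and condition \ref{bisimulation:2} reduces, via the product rule, to showing $P(f, \deriveleft a f, g, \deriveleft a g)$ and $P(g, \deriveleft a g, f, \deriveleft a f)$ are related by (the reflexive-transitive-congruence closure of) $R$ — which is precisely what \eqref{eq:P-commutativity} gives, once we know that ``$\approx$''-equal terms have the same semantics under any valuation (which holds because series form a commutative $\Q$-algebra for the \emph{already-established} parts of \BAC{}, so one must be slightly careful about the order). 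For associativity, take $R$ to relate $f*(g*h)$ and $(f*g)*h$; differentiating the left side gives $P(f, \deriveleft a f, g*h, P(g,\deriveleft a g, h, \deriveleft a h))$ and the right side gives $P(f*g, P(f,\deriveleft a f, g, \deriveleft a g), h, \deriveleft a h)$, and \eqref{eq:P-associativity} equates these modulo ``$\approx$''. Additivity and homogeneity are handled the same way with $R$ relating $(f+g)*h$ to $f*h + g*h$, resp.\ $(c\cdot f)*g$ to $c\cdot(f*g)$, using \eqref{eq:P-additivity} (homogeneity over $\Q$ follows from additivity, as remarked).

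\paragraph{Main obstacle.}
The delicate point — and the one the excerpt flags when it says ``to prove each of the three properties we need to use all equations together'' — is the interdependence in the sufficiency direction: to conclude that ``$\approx$''-related terms denote equal series one needs the \BAC{} axioms to \emph{already} hold in the series algebra, which is what we are trying to prove. I would resolve this by a simultaneous bisimulation: instead of proving the four axioms one at a time, define a single relation $R$ that is the congruence closure (under $+$, scalar multiplication, and $*$) of the union of the four ``axiom instances'' above, i.e.\ $R$ contains every pair $(\sem u_\varrho, \sem v_\varrho)$ with $u \approx v$, and show in one induction that $R$ is a bisimulation. Closure of $R$ under derivatives then rests on: (i) the product rule, (ii) the fact that $P$-extension respects term operations, and (iii) exactly the three special identities \eqref{eq:P-additivity}–\eqref{eq:P-commutativity}, each needed to push a derivative through one of the generating congruences. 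Verifying that each generator of ``$\approx$'' is preserved under $\deriveleft a$ modulo $R$ is the technical heart; the three special identities are engineered to make precisely this work, and checking associativity requires \eqref{eq:P-additivity} and \eqref{eq:P-commutativity} in addition to \eqref{eq:P-associativity} to massage the nested $P$-terms into congruent form.
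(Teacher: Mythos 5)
Your proposal is correct and follows essentially the same route as the paper: necessity by differentiating the algebra axioms and reducing, via constant-term extraction and the fact that polynomial functions over the infinite field $\Q$ determine polynomials, to the identities on $P$; sufficiency by the single relation $\sem u_\varrho \sim \sem v_\varrho$ for $u \approx v$, shown to be a bisimulation by checking each generating axiom of ``$\approx$''. Your ``simultaneous bisimulation'' resolving the circularity is exactly the paper's Claim in the ``if'' direction.
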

\noindent
The \agda{``if'' direction}{Special/Products/\#series-algebras} holds over every field,
while the ``only if'' direction requires the field to be infinite. 
This generalises similar observations for specific products
(\eg~univariate Cauchy product~\cite[Proposition 2.5]{BasoldHansenPinRutten:MSCS:2017})
to all products defined by a special product rule.
Since the characterisation is equational and equivalence of polynomials is decidable,
it suggests mechanisation of this kind of arguments.
\begin{proof}[Proof sketch]
    For the ``only if'' direction, assume that ``$*$'' is \BAC\-
    and we show that its product rule $P$ is special.
    Consider~\cref{eq:commutativity} (the other cases are similar).
    Take two arbitrary series $f, g \in \series \Q \Sigma$.
    By commutativity, we have $f * g = g * f$.
    By taking the left derivative of both sides with respect to an arbitrary letter $a \in \Sigma$,
    by the product rule we have
    \begin{align*}
        &\deriveleft a (f * g)
        = P(f, \deriveleft a f, g, \deriveleft a g)
        = P(g, \deriveleft a g, f, \deriveleft a f)
        = \deriveleft a (g * f).
    \end{align*}
    Since constant term extraction is a homomorphism~(\cref{lem:constant term homomorphism}),
    by taking the constant term of both sides we obtain
    %
        $P(\coefficient \e f, \coefficient a f, \coefficient \e g, \coefficient a g) =
        P(\coefficient \e g, \coefficient a g, \coefficient \e f, \coefficient a f)$.
    %
    Since we can choose $\coefficient \e f$, $\coefficient a f$, $\coefficient \e g$, and $\coefficient a g$ arbitrarily,
    the two terms $P(x, \dot x, y, \dot y)$ and $P(y, \dot y, x, \dot x)$ denote the same polynomial function.
    Over an infinite field (such as $\Q$) two polynomial functions are equal
    if, and only if, their corresponding polynomials are equal,
    yielding~\cref{eq:P-commutativity}.
    %
    %

    For the \agda{``if'' direction}{Special/Products/\#series-algebras},
    assume that $P$ is special.
    %
    We prove a more general property.
    \begin{claim*}
        For every term $u, v \in \Terms X$ and valuation $\varrho$,
        if $u \approx v$ then $\sem u_\varrho = \sem v_\varrho$.
        %
    \end{claim*}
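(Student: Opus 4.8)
The plan is to prove that, for a special rule $P$, the term semantics $\sem{\_}_\varrho$ is invariant under $\approx$; the ``if'' direction of~\cref{thm:characterisation:BAC} then follows at once, since instantiating the claim at the \BAC~axiom schemas shows the series operations satisfy all \BAC~axioms (for instance $x * y \approx y * x$ yields $f * g = g * f$, and likewise for left bilinearity and associativity). As $\approx$ is the least congruence on $\Terms X$ containing the \BAC~axioms, and $\setof{(u,v)}{\forall \varrho,\ \sem u_\varrho = \sem v_\varrho}$ is readily an equivalence relation closed under the term constructors (the semantics~\cref{eq:semantics} being compositional), proving invariance comes down to verifying that each \BAC~axiom is respected by the series operations; the vector-space axioms hold pointwise in the series, so the content is the four identities stating that ``$*$'' is left-additive, left-homogeneous, associative, and commutative. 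A direct coinductive attack on these fails by circularity: checking the derivative condition for, say, the candidate bisimulation $\set{(f * g,\ g * f)}$ requires $\deriveleft a (f * g) = P(f, \deriveleft a f, g, \deriveleft a g)$ and $\deriveleft a (g * f) = P(g, \deriveleft a g, f, \deriveleft a f)$ to coincide, which holds only \emph{modulo $\approx$} by~\cref{eq:P-commutativity} --- exactly the claim we are proving. The idea is to break this circularity at the \emph{syntactic} level.

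To this end I would introduce a symbolic-derivative operator $\tilde D : \Terms X \to \Terms{X \uplus \dot X}$, where $\dot X := \setof{\dot x}{x \in X}$ is a fresh disjoint copy of $X$, defined by the clauses of~\cref{eq:extension} seeded with $x \mapsto \dot x$: thus $\tilde D 0 := 0$, $\tilde D x := \dot x$, $\tilde D(c \cdot u) := c \cdot \tilde D u$, $\tilde D(u + v) := \tilde D u + \tilde D v$, and $\tilde D(u * v) := P(u, \tilde D u, v, \tilde D v)$. Two facts drive the proof. \textbf{(A) $\tilde D$ computes left derivatives}: for every $u \in \Terms X$, $\varrho : X \to \series \Q \Sigma$, and $a \in \Sigma$ one has $\deriveleft a \sem u_\varrho = \sem{\tilde D u}_{\varrho_a}$, where $\varrho_a$ extends $\varrho$ by $\dot x \mapsto \deriveleft a \varrho(x)$; this is a routine induction on $u$, the case $u = v * w$ being precisely the product rule~\cref{eq:product rule:left derivative} together with the substitution lemma for $\sem{\_}$. \textbf{(B) $\tilde D$ preserves $\approx$}: $u \approx v$ implies $\tilde D u \approx \tilde D v$. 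This is proved by induction on the derivation of $u \approx v$; the equivalence-relation and congruence rules pass through immediately, since $\tilde D$ unfolds structurally and $\approx$ is a substitution-closed congruence, so everything reduces to checking that $\tilde D$ sends each \BAC~axiom schema to a pair of $\approx$-equal terms. The vector-space schemas are sent to the ``dotted'' copies of themselves, and for the product schemas one simply unfolds $\tilde D$: $\tilde D(x * y)$ versus $\tilde D(y * x)$ is~\cref{eq:P-commutativity}; $\tilde D(x * (y * z))$ versus $\tilde D((x * y) * z)$ is~\cref{eq:P-associativity}; $\tilde D((x + y) * z)$ versus $\tilde D(x * z + y * z)$ is~\cref{eq:P-additivity}; and $\tilde D((c \cdot x) * y)$ versus $\tilde D(c \cdot (x * y))$ amounts to $P(c \cdot x, c \cdot \dot x, y, \dot y) \approx c \cdot P(x, \dot x, y, \dot y)$, which follows from~\cref{eq:P-additivity} since an additive map of $\Q$-vector spaces is automatically $\Q$-linear.

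Given (A) and (B) the argument closes without circularity. Let $R := \setof{(\sem u_\varrho, \sem v_\varrho)}{u \approx v}$, ranging over all variable sets $X$, terms $u, v \in \Terms X$, and valuations $\varrho \in \Val X$, and check that $R$ is a series bisimulation. The constant-term condition holds because, by~\cref{lem:constant term homomorphism}, $\coefficient \e {\sem u_\varrho} = \sem u_{\coefficient \e \circ \varrho}$ is the value of $u$ under the rational-valued valuation $\coefficient \e \circ \varrho$ in the algebra $\tuple{\Q; 0, (c \cdot {\_})_{c \in \Q}, +, \cdot}$, which satisfies all \BAC~axioms, so $u \approx v$ forces these two values to agree. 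The derivative condition holds because, by (A), $\deriveleft a \sem u_\varrho = \sem{\tilde D u}_{\varrho_a}$ and $\deriveleft a \sem v_\varrho = \sem{\tilde D v}_{\varrho_a}$, while by (B) $\tilde D u \approx \tilde D v$, so the derived pair again lies in $R$. By~\cref{lem:bisimulation}, $u \approx v$ implies $\sem u_\varrho = \sem v_\varrho$, which is the claim. I expect the only genuine obstacle to be recognising that the circularity must be resolved syntactically via $\tilde D$; once that move is made, the three defining identities of a special rule are exactly --- and all of --- what (B) requires, while the remaining ingredients (the substitution lemma for $\sem{\_}$ in (A), the substitution-closure of $\approx$ in (B), and the derivation of $P$-homogeneity from $P$-additivity) are routine.
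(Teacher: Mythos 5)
Your proof is correct, and it reaches the same endpoint by the route the paper only alludes to in the remark following \cref{cor:semantic invariance}. The paper's own proof works directly at the semantic level: it defines the same relation $R$, handles constant terms via \cref{lem:constant term homomorphism} exactly as you do, but for the derivative condition it rewrites $\deriveleft a \sem {u * v}_\varrho$ as $\sem{P(x,\dot x,y,\dot y)}_\eta$ for a valuation $\eta$ built from the series $\sem u_\varrho, \sem v_\varrho$ and their derivatives, and invokes the $P$-identities under that single valuation, dispatching the congruence closure with the one-line remark that the semantics is a homomorphism. You instead factor the derivative step through the symbolic operator $\tilde D$ and its $\approx$-preservation (your fact (B)), which is precisely the paper's invariance lemma (\cref{lem:invariance}), stated and proved there only later for the purposes of polynomial $P$-automata. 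What your decomposition buys is a genuinely airtight treatment of the congruence cases: with $\tilde D$, both derivatives land under the \emph{same} extended valuation $\varrho_a$ as $\sem{\tilde D u}_{\varrho_a}$ and $\sem{\tilde D v}_{\varrho_a}$ with $\tilde D u \approx \tilde D v$, so membership of the derived pair in $R$ is immediate for every rule generating $\approx$, including the congruence rules that the paper's sketch glosses over. The cost is that you must establish the substitution-closure of $\approx$ and the substitution lemma for $\sem{\_}$ explicitly, and you rediscover \cref{lem:invariance} en route; the paper defers that lemma and keeps the characterisation proof purely semantic for narrative reasons. Your reduction of the homogeneity schema to~\cref{eq:P-additivity} via $\Q$-linearity of additive maps is also fine and matches the paper's own parenthetical observation that additivity implies homogeneity over $\Q$.
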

    \noindent
    The claim implies that ``$*$'' is \BAC\-
    by considering the valuation $\varrho := [x \mapsto f, y \mapsto g, z \mapsto h]$
    and the equivalences $(x + y) * z \approx x * z + y * z$,
    $x * (y * z) \approx (x * y) * z$, resp., $x * y \approx y * x$.
    
    We now prove the claim.
    Consider the relation ``$\sim$'' on series defined as follows:
    \begin{align}
        \text{For every $\varrho : X \to \series \Q \Sigma$:}
        \qquad\qquad
        \sem u_\varrho \sim \sem v_\varrho
            \quad\text{iff}\quad
                u \approx v.
    \end{align}
    We show that this is a bisimulation, which suffices to prove the claim by~\cref{lem:bisimulation}.
    Assume $\sem u_\varrho \sim \sem v_\varrho$.
    Since $u \approx v$ and term equivalence respects the underlying polynomial function,
    by~\cref{lem:constant term homomorphism} the constant terms agree,
    %
        $\coefficient \e \sem u_\varrho
        = u(\coefficient \e \circ \varrho)
        = v(\coefficient \e \circ \varrho)
        = \coefficient \e \sem v_\varrho$.
    %
    We now show that ``$\sim$'' is preserved by left derivatives.
    %
    Since ``$\approx$'' is the smallest congruence generated by the \BAC~axioms 
    and the semantics is a homomorphism by definition~\cref{eq:semantics},
    it suffices to show preservation for each axiom.
    %
    %
    For instance, we consider~\cref{eq:product:commutativity}.
    For two terms $u, v$ we need to show
    $\deriveleft a \sem {u * v}_\varrho \sim \deriveleft a \sem {v * u}_\varrho$.
    We have
    \begin{align}
        \nonumber
        \deriveleft a \sem {u * v}_\varrho
        &= \deriveleft a (\sem u_\varrho * \sem v_\varrho) =
        && \text{(by~\cref{eq:semantics})} \\
        \label{eq:coinductive assumption}
        \tag{!}
        &= \sem {P(x, \dot x, y, \dot y)}_\eta \sim 
        && \text{(by~\cref{eq:commutativity})}  \\
        \nonumber
        &\sim \sem{P(y, \dot y, x, \dot x)}_\eta =
        && \text{(by~\cref{eq:semantics})} \\
        \nonumber
        &= \deriveleft a (\sem v_\varrho * \sem u_\varrho)
        = \deriveleft a \sem {v * u}_\varrho,
    \end{align}
    where $\eta$ is the valuation
    $\eta := [x \mapsto \sem u_\varrho, \dot x \mapsto \deriveleft a \sem u_\varrho,
    y \mapsto \sem v_\varrho, \dot y \mapsto \deriveleft a \sem v_\varrho]$.
    We have invoked the coinductive assumption in~\cref{eq:coinductive assumption},
    where the use of~\cref{eq:commutativity} coinductively relies on all the axioms of commutative $\Q$-algebras.
    The other cases are similar.
\end{proof}

\subparagraph{Classification.}

The characterisation from~\cref{thm:characterisation:BAC} puts strong constraints on special product rules,
allowing us to obtain an explicit description for them (\cref{cor:classification}).
Consider an arbitrary product rule $P$ of degree $\leq d$,
\begin{align*}
    P =
        \sum_{i + j + k + \ell \leq d}
            \alpha_{i, j, k, \ell} \cdot
            x^i \dot x^j y^k \dot y^\ell,
            \qquad \alpha_{i, j, k, \ell} \in \Q.
\end{align*}
We show that if $P$ is special, then it is simple~\cref{eq:simple}.
We proceed by progressively imposing bilinearity, associativity, and commutativity, in this order.

\subparagraph*{\it Bilinearity.}
By left additivity~\cref{eq:left additivity}
we have $\alpha_{i, j, k, \ell} = 0$ unless $i + j = 1$.
%
%
Similarly, by right additivity $k + \ell = 1$.
By renaming the coefficients,
we can write $P$ in the \emph{bilinear form}
\begin{align}
    \label{eq:bilinear form}
    \tag{bilinear~form}
    P =
        \alpha \cdot xy +
        \beta_1 \cdot x\dot y +
        \beta_2 \cdot \dot xy + 
        \gamma \cdot \dot x\dot y.
\end{align}

\begin{example}[Lamperti product]
    If in~\cref{eq:bilinear form} we put $\alpha = 0$,
    %
    %
    then we get the multivariate \emph{Lamperti product}~\cite{Lamperti:AMM:1958} (\cf~also~\cite[footnote 2]{Fliess:1974}).
    %
    This product is bilinear, however in general it is neither associative nor commutative.
    The Hadamard,
    shuffle,
    and infiltration products
    are special cases.
\end{example}

\subparagraph*{\it Associativity.}
By imposing~\cref{eq:associativity} to a product rule in~\cref{eq:bilinear form},
we obtain the following four identities on the parameters $\alpha, \beta_1, \beta_2, \gamma \in \Q$
(details in~\cref{app:associativity}):
\begin{equation}
    \label{eq:associativity identities}
        \alpha \cdot (\beta_1 - \beta_2) = 0, \quad
        \gamma \cdot (\beta_1 - \beta_2) = 0, \quad
        \alpha \cdot \gamma = \beta_1 \cdot (\beta_1 - 1), \quad
        \alpha \cdot \gamma = \beta_2 \cdot (\beta_2 - 1).
\end{equation}

\subparagraph*{\it Commutativity.}
Finally, by imposing~\cref{eq:commutativity} in~\cref{eq:associativity identities},
we have $\beta := \beta_1 = \beta_2$~and we obtain the simple form announced in the introduction~\cref{eq:simple}.
This concludes the proof of~\cref{cor:classification}.
(Note that~\cref{eq:associativity identities} does not force $\beta_1 = \beta_2$;
if $\beta_1 \neq \beta_2$, then $\alpha = \gamma = 0$, $\set {\beta_1, \beta_2} = \set {0, 1}$,
and thus the only bilinear-associative-\emph{non}commutative $P$-products
are $P = x \dot y$ and $P = \dot x y$.)
%
%

\subsubsection{Multiplicative identity}
\label{sec:multiplicative identity}
We investigate the existence of multiplicative identities for simple products.
We say that a simple product rule is \emph{degenerate}
if $\beta = \gamma = 0$, and \emph{nondegenerate} otherwise.
\begin{restatable}[Multiplicative identity]{theorem}{multIden}
    \label{thm:multiplicative identity}
    A simple product has a multiplicative identity $\one$ if, and only if, it is nondegenerate.
    When this is the case, there exists a constant $\eta \in \Q$
    \st~$\one$ can be defined coinductively by
    \begin{align}
        \tag{multiplicative identity}
        \label{eq:multiplicative identity}
        \coefficient \e \one = 1
        \quad\text{and}\quad
        \derive a \one = \eta \cdot \one.
    \end{align}
\end{restatable}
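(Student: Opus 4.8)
The plan is to work coinductively, exploiting that a simple $P$-product has product rule $P(x,\dot x, y, \dot y) = \alpha\cdot xy + \beta\cdot(x\dot y + \dot x y) + \gamma\cdot \dot x\dot y$ with $\alpha\gamma = \beta(\beta-1)$, and that the constant term is always multiplicative by~\cref{eq:product rule:base}. First I would dispose of the degenerate case $\beta = \gamma = 0$ (so also $\alpha = 0$ since $\alpha\cdot 0 = 0\cdot(0-1)$ gives no constraint, but in fact the associativity identity forces nothing on $\alpha$; however note $P = \alpha\cdot xy$): here $\deriveleft a(f*g) = \alpha\cdot(f*g)$, so $\deriveleft a(f * g)$ depends only on $f*g$ and never on the derivatives of $f$ or $g$. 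Consequently $\coefficient w{(f*g)} = \alpha^{\length w}\cdot f_\e\cdot g_\e$ depends only on the constant terms of $f$ and $g$. If $\one$ were a multiplicative identity, then taking $f$ with $f_\e = 0$ but $f \neq \zero$ (e.g. $f = a$) we would get $f * \one = \zero \neq f$, a contradiction. So no multiplicative identity exists in the degenerate case.

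For the nondegenerate case ($\beta \neq 0$ or $\gamma \neq 0$), I would first determine the candidate $\one$ and the constant $\eta$, then verify it works by a bisimulation argument. Suppose $\one$ is a multiplicative identity. Taking the constant term of $f * \one = f$ gives $f_\e\cdot\one_\e = f_\e$ for all $f$, hence $\one_\e = 1$. Next, apply $\deriveleft a$ to $f * \one = f$ to get $P(f, \deriveleft a f, \one, \deriveleft a \one) = \deriveleft a f$, i.e.
\[
  \alpha\cdot(f * \one) + \beta\cdot(f * \deriveleft a\one) + \beta\cdot(\deriveleft a f * \one) + \gamma\cdot(\deriveleft a f * \deriveleft a\one) = \deriveleft a f.
\]
Using $f * \one = f$ and $\deriveleft a f * \one = \deriveleft a f$, this simplifies to $\alpha\cdot f + \beta\cdot(f * \deriveleft a\one) + (\beta - 1)\cdot\deriveleft a f + \gamma\cdot(\deriveleft a f * \deriveleft a\one) = \zero$. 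Specialising $f = \one$ (so $\deriveleft a\one * \deriveleft a\one$ and $\one * \deriveleft a\one = \deriveleft a\one$) and rearranging should pin down $\deriveleft a\one$; the natural guess, guided by the notable products (shuffle has $\eta = 1$), is $\deriveleft a\one = \eta\cdot\one$ for a scalar $\eta$ depending on $\alpha,\beta,\gamma$. Substituting this ansatz back into the displayed identity turns it into $\bigl(\alpha + \beta\eta + (\beta-1+\gamma\eta)\,\deriveleft a\bigr)$ applied suitably; matching coefficients (using $f$ and $\deriveleft a f$ as "independent") yields $\alpha + \beta\eta = 0$ and $\beta - 1 + \gamma\eta = 0$. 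In the nondegenerate case one of $\beta,\gamma$ is nonzero; one checks that these two equations are consistent precisely because $\alpha\gamma = \beta(\beta-1)$, and they determine $\eta$ (e.g. $\eta = -\alpha/\beta$ when $\beta\neq 0$, and $\eta = (1-\beta)/\gamma$ when $\gamma \neq 0$; the constraint guarantees these agree when both are nonzero). This also shows $\eta$ is forced, so $\one$ as defined by~\cref{eq:multiplicative identity} is the only candidate.

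For the converse — that the series $\one$ defined by $\one_\e = 1$, $\deriveleft a\one = \eta\cdot\one$ with this $\eta$ actually is a multiplicative identity — I would show that $R = \setof{(f * \one, f)}{f \in \series\Q\Sigma}$ is a series bisimulation and invoke~\cref{lem:bisimulation}. Condition~\ref{bisimulation:1} is $(f*\one)_\e = f_\e\cdot 1 = f_\e$. For~\ref{bisimulation:2}, $\deriveleft a(f*\one) = \alpha\cdot(f*\one) + \beta\cdot(f*\eta\one) + \beta\cdot(\deriveleft a f*\one) + \gamma\cdot(\deriveleft a f*\eta\one) = (\alpha + \beta\eta)\cdot(f*\one) + (\beta + \gamma\eta)\cdot(\deriveleft a f * \one)$; using $\alpha + \beta\eta = 0$ and $\beta + \gamma\eta = \beta + \gamma\eta$, and noting from $\beta - 1 + \gamma\eta = 0$ that $\beta + \gamma\eta = 1$, this equals $\deriveleft a f * \one$, so the pair $(\deriveleft a(f*\one), \deriveleft a f) = (\deriveleft a f * \one, \deriveleft a f) \in R$. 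Hence $f * \one = f$, and by commutativity $\one * f = f$ as well. The main obstacle I anticipate is the bookkeeping around the case split when only one of $\beta,\gamma$ vanishes: verifying that the constraint $\alpha\gamma = \beta(\beta-1)$ is exactly what makes the system $\alpha + \beta\eta = 0$, $\beta - 1 + \gamma\eta = 0$ solvable (it is not over-determined when both are nonzero) requires a small but careful computation, and one must also confirm well-definedness of the coinductive specification of $\one$, which is immediate since it is a valid behavioural differential equation.
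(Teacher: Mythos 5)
Your proposal is correct and follows essentially the same route as the paper's proof: derive the two constraints $\alpha + \beta\eta = 0$ and $\beta - 1 + \gamma\eta = 0$ from the product rule, observe that the simplicity condition $\alpha\gamma = \beta(\beta-1)$ makes them simultaneously solvable exactly when the rule is nondegenerate, and rule out a unit in the degenerate case via the explicit formula $\coefficient w {(f*g)} = \alpha^{\length w} f_\e g_\e$. Your explicit bisimulation verification that the candidate $\one$ really is a unit (and your check that the two equations for $\eta$ are consistent when both $\beta$ and $\gamma$ are nonzero) is somewhat more careful than the paper's terser presentation, but it is the same argument.
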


\begin{wrapfigure}{r}{0.37\textwidth}
    \vspace{-1.5em} 
    \begin{tabular}{l|c|c|c||c}
        $P$-product                      & $\alpha$ & $\beta$ & $\gamma$ & $\eta$ \\
        \hline
        Hadamard ``$\hadamard$''         & $0$      & $0$     & $1$ & 1 \\
        shuffle ``$\shuffle$''           & $0$      & $1$     & $0$ & 0 \\
        infiltration ``$\infiltration$'' & $0$      & $1$     & $1$ & 0
    \end{tabular}
    \vspace{-3em}
\end{wrapfigure}
\noindent
The notable products are simple with multiplicative identities, as shown on the right.

\setcounter{subsection}{2}
\subsection{Polynomial $P$-automata}
\label{sec:polynomial P-automata}

%
%
In this section we show that the state space of $P$-automata for a special product rule $P$
can be quotiented \wrt~the \BAC~axioms,
obtaining \emph{polynomial $P$-automata}.
This will be exploited in~\cref{sec:equivalence algorithm} to show that 
equivalence of finite-variable $P$-automata is decidable.

\subparagraph{Invariance property.}
We show that $P$-extensions preserve term equivalence.
This is a key technical result.
\begin{agdalemma}[Special/Automata/\#invariance][Invariance]
    \label{lem:invariance}
    Let $P$ be a special product rule
    and consider a function $D : X \to \Terms X$.
    Recall that $\tilde D : \Terms X \to \Terms X$ is the $P$-extension of $D$ to all terms (\cf~\cref{eq:extension}).
    For terms $u, v \in \Terms X$,
    if $u \approx v$ then $\tilde D u \approx \tilde D v$.
\end{agdalemma}
\begin{proof}[Proof sketch]
    We show invariance for each~\BAC~axiom.
    As an example, we discuss commutativity of multiplication.
    For two terms $u, v \in \Terms X$
    we need to show $\tilde D (u * v) \approx \tilde D (v * u)$:
    \begin{align*}
        \tilde D (u * v)
        &= P(u, \tilde D u, v, \tilde D v)
            && \text{(by def.~of $P$-extension)} \\
        &\approx P(v, \tilde D v, u, \tilde D u)
            && \text{(by~\cref{eq:P-commutativity})} \\
        &= \tilde D (v * u).
            && \text{(by def.~of $P$-extension)}
        \qedhere
    \end{align*}
\end{proof}
By~\cref{lem:invariance}, quotienting the state space of a $P$-automaton by term equivalence respects transitions,
which immediately implies invariance of the semantics.
\begin{agdacorollary}[Special/Automata/\#semantic-invariance][Semantic invariance]
    \label{cor:semantic invariance}
    Let $P$ be a special product rule.
    If $u \approx v$ then, for every $P$-automaton $\AA$,
    we have $\Asem \AA u = \Asem \AA v$.
\end{agdacorollary}

\begin{remark}
    We can use~\cref{cor:semantic invariance} applied to the syntactic $P$-automaton from \cref{rem:product via automata}
    to give an alternative proof of the ``if'' direction of~\cref{thm:characterisation:BAC}.
    In fact, the two proofs are closely related.
    For instance, regarding commutativity,
    by~\cref{cor:semantic invariance} applied to $x_f * x_g \approx x_g * x_f$ we have
    $f * g = \Asem \SS {x_f * x_g} = \Asem \SS {x_g * x_f} = g * f$.
    More generally, in a mathematically minimal presentation we could have presented automata first,
    however we have opted for a direct proof in~\cref{thm:characterisation:BAC} for reasons of narrative.
\end{remark}

By~\cref{lem:invariance,cor:semantic invariance},
we can quotient the state space of a $P$-automaton
and obtain a well-defined \emph{polynomial} $P$-automaton
recognising the same series.
%
%
We find it instructive to define the latter model directly.
In order to do so, we rely on a nontrivial \emph{extension property},
which is interesting to study in its own right.

\subparagraph{Extension property.}

In this section, we use invariance to derive the following nontrivial extension property.
We will use it in~\cref{sec:def polynomial P-automata} to provide a well-defined semantics for polynomial $P$-automata.
By $\hompoly \Q X$ we denote the set of polynomials $p$
with no constant term, i.e., $p(0, \dots, 0) = 0$.
\begin{corollary}[Extension property]
    \label{cor:extension property}
    Let $P$ be a special product rule.
    Every function $D : X \to \hompoly \Q X$
    extends uniquely to a linear function $\tilde D : \hompoly \Q X \to \hompoly \Q X$ \st
    \begin{equation}
            \tilde D (\alpha \beta)
                = P(\alpha, \tilde D \alpha, \beta, \tilde D \beta),
                \quad \forall \alpha, \beta \in \hompoly \Q X.
    \end{equation}
\end{corollary}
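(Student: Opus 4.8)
The plan is to transport the already-established invariance result (\cref{lem:invariance}) across the quotient map $q : \Terms X \to \hompoly \Q X$ that sends a term to its $\approx$-equivalence class, using the fact that $\hompoly \Q X$ is exactly $\Terms X / {\approx}$ and that the raw $P$-extension on terms is known to exist. First I would fix a set-theoretic section $s : \hompoly \Q X \to \Terms X$ of $q$ (so $q \circ s = \id$), and for the given $D : X \to \hompoly \Q X$ define the term-level map $D' : X \to \Terms X$ by $D' x := s(D x)$. Then \cref{eq:extension} gives a $P$-extension $\widetilde{D'} : \Terms X \to \Terms X$, and I would define the desired map on polynomials by $\tilde D := q \circ \widetilde{D'} \circ s$. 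The point of invariance is precisely that $\widetilde{D'}$ descends to the quotient: if $u \approx v$ then $\widetilde{D'} u \approx \widetilde{D'} v$, so $q \circ \widetilde{D'}$ factors through $q$; hence $\tilde D$ does not depend on the choice of section $s$, and in particular $\tilde D \circ q = q \circ \widetilde{D'}$.

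With $\tilde D$ in hand, the required properties are read off from the corresponding properties of $\widetilde{D'}$ on terms together with the fact that $q$ is a homomorphism of commutative $\Q$-algebras (by construction of $\approx$). Linearity of $\tilde D$ ($\tilde D 0 = 0$, $\tilde D(c\cdot p) = c\cdot \tilde D p$, $\tilde D(p+r)=\tilde D p + \tilde D r$) follows because $\widetilde{D'}$ is linear on terms by the first four clauses of \cref{eq:extension} and $q$ preserves $0$, scalar multiplication, and addition. For the product clause, given $\alpha,\beta \in \hompoly \Q X$ pick term representatives $a,b$ with $q a = \alpha$, $q b = \beta$; then
\begin{align*}
    \tilde D(\alpha\beta)
    = q\bigl(\widetilde{D'}(a * b)\bigr)
    = q\bigl(P(a, \widetilde{D'} a, b, \widetilde{D'} b)\bigr)
    = P\bigl(q a, q(\widetilde{D'} a), q b, q(\widetilde{D'} b)\bigr)
    = P(\alpha, \tilde D\alpha, \beta, \tilde D\beta),
\end{align*}
using the product clause of \cref{eq:extension}, that $q$ is an algebra homomorphism (so it commutes with the polynomial expression $P$), and $q \circ \widetilde{D'} = \tilde D \circ q$. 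One also checks $\tilde D$ restricted to variables agrees with $D$: $\tilde D x = q(\widetilde{D'} x) = q(D' x) = q(s(D x)) = D x$.

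Uniqueness is the easy part: any linear $E : \hompoly \Q X \to \hompoly \Q X$ satisfying the product identity and agreeing with $D$ on variables must agree with $\tilde D$ on all of $\hompoly \Q X$, by induction on a polynomial written as a $\Q$-linear combination of monomials, where each monomial is a finite $*$-product of variables — the product clause pins down $E$ on monomials and linearity pins it down on their combinations. The main obstacle I anticipate is purely bookkeeping rather than conceptual: one must be careful that $\hompoly \Q X$ genuinely is the quotient $\Terms X/{\approx}$ (every constant-term-free commutative polynomial is represented by some term, and two terms represent the same polynomial iff they are $\approx$-related), so that the section $s$ exists and the homomorphism $q$ is well defined; and one must keep straight that well-definedness of $\tilde D$ relies on \emph{invariance} and not on any further property of $P$. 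No deep new idea is needed — the corollary is exactly \cref{lem:invariance} repackaged as a universal property of the quotient algebra.
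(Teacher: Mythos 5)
Your proposal is correct and follows essentially the same route as the paper: view $D$ as a map into $\Terms X$ via chosen term representatives, take its $P$-extension from \cref{eq:extension}, and use invariance (\cref{lem:invariance}) to descend it to the quotient $\hompoly \Q X = \Terms X/{\approx}$. You merely spell out the bookkeeping (the section $s$, the homomorphism property of the quotient map, and the uniqueness induction on monomials) that the paper's very terse proof leaves implicit.
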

\begin{proof}
    The proof is very simple.
    Let $D : X \to \hompoly \Q X$ be a function.
    By forgetting the algebraic structure of polynomials,
    we consider it as a function $D : X \to \Terms X$,
    where $D x$ is now \emph{any} term representing the original polynomial.
    Let $\tilde D : \Terms X \to \Terms X$ be its $P$-extension \emph{on terms}.
    By invariance (\cref{lem:invariance}),
    $\tilde D$ preserves term equivalence,
    and thus it induces a well-defined function on polynomials
    $\tilde D : \hompoly \Q X \to \hompoly \Q X$.
\end{proof}
The extension property subsumes and explains known observations from the literature.
%
\begin{example}[Homomorphic extension]
    Consider the Hadamard product rule $P = \dot x \dot y$.
    The $P$-extension of $D : X \to \hompoly \Q X$
    is the unique homomorphic extension to all homogeneous polynomials.
    In other words, $\tilde D$ acts by simultaneous substitution of $x \in X$ by $D x$.
    For instance, if $D x = x^2$,
    then $\tilde D (x^4) = (D x)^4 = x^8$.
\end{example}

\begin{example}[Derivation extension]
    Consider the shuffle product rule $P = x \dot y + \dot x y$.
    A \emph{derivation} (of the polynomial ring) is a linear endomorphism $D$ on $\hompoly \Q X$
    satisfying the Leibniz rule $D(\alpha \beta) = \alpha \cdot D\beta + D\alpha \cdot \beta$.
    By~\cref{cor:extension property}, every $D : X \to \hompoly \Q X$ extends to a unique derivation.
    We thus recover a basic result in differential algebra~\cite[Ch. 1, Sec. 2, Ex. 4]{Kaplansky:DA:1957}.
    For example, if $D x = x^2$ then $\tilde D (x^4) = 4 \cdot x^3 \cdot D x = 4 \cdot x^5$.
\end{example}

\begin{example}[Infiltration extension]
    Consider the infiltration product rule $P = xy + x \dot y + \dot x y$.
    An \emph{infiltration} is a linear endomorphism $D$ on $\hompoly \Q X$ satisfying the infiltration rule
    $D(\alpha \beta) = \alpha \cdot D\beta + D\alpha \cdot \beta + D\alpha \cdot D\beta$.
    By \cref{cor:extension property}, every $D : X \to \hompoly \Q X$
    extends to a unique infiltration.
    This has been employed in~\cite[Sec.~E.B]{Clemente:arXiv:LICS:2025} (without proof).
    Thanks to the extension property we can avoid a quite tedious direct proof.
\end{example}

It is remarkable that we can obtain the examples above with a single, simple proof,
a testimony of the usefulness invariance (\cref{lem:invariance}, used to prove~\cref{cor:extension property}).

\subsubsection{Syntax and semantics of polynomial $P$-automata.}
\label{sec:def polynomial P-automata}
Let $P$ be a special product rule.
A \emph{polynomial $P$-automaton} is a tuple $\AA = \tuple{\Sigma, X, F, \Delta}$
where $\Sigma$ is a finite alphabet of input symbols,
$X$ is a set of variables,
$F : X \to \Q$ is the output function,
and $\Delta : \Sigma \to X \to \hompoly \Q X$ is the transition function.
By \cref{cor:extension property}, $\Delta_a$ ($a \in \Sigma)$ extends uniquely
to all polynomials $\tilde \Delta_a : \hompoly \Q X \to \hompoly \Q X$.
The semantic function $\Asem \AA {\_} : \hompoly \Q X \to \series \Q \Sigma$
is then defined coinductively as in~\cref{eq:automata semantics}.
By~\cref{lem:homomorphism}, it is a homomorphism of commutative $\Q$-algebras.
\begin{lemma}
    \label{lem:polynomial automata:homomorphism}
    Let $P$ be a special product rule.
    The semantics of polynomial $P$-automata is a homomorphism
    from the~\cref{eq:polynomial algebra} to the~\cref{eq:series algebra}.
\end{lemma}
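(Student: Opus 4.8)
The plan is to reduce this statement entirely to the already-established infrastructure: the homomorphism lemma (\cref{lem:homomorphism}) for plain $P$-automata, together with the invariance corollary (\cref{cor:semantic invariance}) that lets us pass to the quotient by $\approx$. First I would recall that a polynomial $P$-automaton $\AA = \tuple{X, F, \Delta}$ is, after forgetting the algebraic structure of $\hompoly \Q X$, nothing but an ordinary $P$-automaton whose transition function $\Delta : \Sigma \to X \to \Terms X$ happens to land in representatives of $\approx$-classes; the semantic function $\Asem \AA {\_}$ on $\hompoly \Q X$ is well-defined precisely by \cref{cor:extension property} (which itself rests on \cref{lem:invariance}), and agrees with the term-level semantics modulo $\approx$ by \cref{cor:semantic invariance}.

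The key steps, in order, are: (1) Observe that the two carriers in play are the \cref{eq:polynomial algebra} $\tuple{\hompoly \Q X; 0, (c\cdot{\_})_{c\in\Q}, +, \cdot}$ and the \cref{eq:series algebra} $\tuple{\series \Q \Sigma; \zero, (c\cdot{\_})_{c\in\Q}, +, *}$, the latter being genuinely a commutative $\Q$-algebra because $P$ is special (by the ``if'' direction of \cref{thm:characterisation:BAC}). (2) Invoke \cref{lem:homomorphism} at the level of terms: $\Asem \AA \alpha = \sem \alpha_{[x \in X \mapsto \Asem \AA x]}$ for every $\alpha \in \Terms X$. Since the term-semantics map $\sem{\_}_\varrho$ is by its very definition \cref{eq:semantics} compatible with $0$, scalar multiplication, $+$, and $*$, the composite $\Asem \AA {\_}$ on terms already respects all four operations. (3) Check that this factors through $\hompoly \Q X = \Terms X / \approx$: by \cref{cor:semantic invariance}, $u \approx v$ implies $\Asem \AA u = \Asem \AA v$, so the map descends to a well-defined $\Asem \AA {\_} : \hompoly \Q X \to \series \Q \Sigma$, and the operations on $\hompoly \Q X$ are exactly the $\approx$-quotients of the term operations, so homomorphy is inherited. (4) Finally note that the constant-term/output coherence — $\coefficient \e (\Asem \AA \alpha) = F\alpha$ extended homomorphically — together with \cref{lem:constant term homomorphism} shows the map is compatible with the (non-unital) algebra structure, and in particular sends $0 \mapsto \zero$.

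I do not expect a serious obstacle here: this lemma is essentially a bookkeeping corollary of \cref{lem:homomorphism}, \cref{lem:invariance}, and \cref{cor:semantic invariance}, all already in hand. The only point demanding a word of care is step (3) — making explicit that the operations on $\hompoly \Q X$ are precisely the $\approx$-images of the term constructors, so that a map which is a homomorphism on $\Terms X$ and constant on $\approx$-classes automatically yields a homomorphism on the quotient; but this is immediate from the universal property of the quotient algebra. One could alternatively give a direct coinductive argument exhibiting a bisimulation between $\Asem \AA {p * q}$ and $\Asem \AA p * \Asem \AA q$ (and similarly for $+$, $c\cdot{\_}$), mirroring the bisimulation proof in \cref{thm:characterisation:BAC}, but routing through \cref{lem:homomorphism} is shorter and avoids redoing that work.
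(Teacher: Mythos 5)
Your proposal is correct and follows essentially the same route as the paper, which justifies this lemma in one line by appeal to \cref{lem:homomorphism}, with well-definedness on the quotient $\hompoly \Q X$ supplied by \cref{cor:extension property} and \cref{lem:invariance} (equivalently, \cref{cor:semantic invariance}), exactly the ingredients you assemble. The extra care you take in step (3) about the operations on $\hompoly \Q X$ being the $\approx$-images of the term constructors is the same bookkeeping the paper leaves implicit.
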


\begin{wrapfigure}{r}{0.7\textwidth}
    \vspace{-2.5em} 
    \begin{center}
        \begin{tabular}{l|c}
            $P$-product                      & (finite-variable) polynomial $P$-automata \\
            \hline
            Hadamard ``$\hadamard$''         & Hadamard automata~\cite{BenediktDuffSharadWorrell:LICS:2017} \\
            shuffle ``$\shuffle$''           & shuffle automata~\cite{Clemente:CONCUR:2024} \\
            infiltration ``$\infiltration$'' & infiltration automata~\cite{Clemente:LICS:2025}
        \end{tabular}
    \end{center}
    \vspace{-4em}
\end{wrapfigure}
In this way, we recover previous notions of automata,
shown on the right.

\vspace{1.5em}
\begin{example}
    \label{ex:polynomial automata}
    Consider the automaton $\AA = \tuple{\Sigma, X, F, \Delta}$
    over a single-letter alphabet $\Sigma = \set a$
    and a single variable $X = \set x$ defined by
    $F x = 1$ and $\Delta_a x = x^2$.
    For the Hadamard product rule,
    the extension of $\Delta_a$ to all terms is a homomorphism
    and we have $\Delta_{a^n} x = x^{2^n}$.
    In this case, $\Asem \AA x$ is the double exponential series from~\cref{ex:double exponential series}.

    For the shuffle product rule,
    the extension of $\Delta_a$ to all terms is a derivation
    and thus $\Delta_{a^n} x = n! \cdot x^{n+1}$.
    Indeed, this holds for $n = 0$ and inductively we have
    $\Delta_{a^{n+1}} x
        = \Delta_a (\Delta_{a^n} x)
        = \Delta_a (n! \cdot x^{n+1})
        = n! \cdot (n+1) \cdot x^n \cdot \Delta_a x
        = (n + 1)! \cdot x^{n + 2}$.
    Thus, $\Asem \AA x$ is the factorial series from~\cref{ex:factorial series}.
\end{example}


by introducing a fresh variable.




\subsection{The equivalence algorithm}
\label{sec:equivalence algorithm}

In this section, fix a finite alphabet $\Sigma$.
We show that equivalence of finite-variable $P$-automata (and thus $P$-finite series)
is decidable for \emph{every special product rule $P$}.
This is remarkable given the diversity of products covered by our characterisation.
%
%
We could replace the rationals by an arbitrary computable field,
however we stick to $\Q$ for simplicity.
Since equality $f = g$ effectively reduces to \emph{zeroness} $f - g = \zero$,
it suffices to show that the zeroness problem is decidable for $P$-finite series.
Thus fix a finite-variable $P$-automaton $\AA = \tuple{\Sigma, X, F, \Delta}$
and we show how to decide $\Asem \AA {x_1} = \zero$.

For a set of polynomials $Q \subseteq \poly \Q X$,
let the \emph{polynomial ideal generated by $Q$}, written $\ideal Q$,
be the set of all polynomials of the form
\begin{align*}
    \ideal Q :=
    \beta_1 \cdot \alpha_1 + \cdots + \beta_n \cdot \alpha_n,
    \quad\text{with } \alpha_i \in Q, \beta_j \in \poly \Q X.
\end{align*}
%
%
(See~\cite{CoxLittleOShea:Ideals:2015} for more details on algebraic geometry.)
This is a generous over-approximation of $Q$ which is sound for the zeroness problem,
in the sense that if all generators in $Q$ recognise the zero series,
then so do all polynomials in the ideal $\ideal Q$ they generate;
it is a generalisation of the notion of linear span for vector spaces.
%
%
%
For every $n \in \N$,
consider the polynomial ideal $$I_n := \idealof {\Delta_w x_1} {w \in \Sigma^{\leq n}}$$
generated by all polynomials reachable from $x_1$
by reading words of length $\leq n$.
%
%
The following technical lemma describes the relevant properties of the ideals $I_n$.
\begin{restatable}{lemma}{lemIdeals}
    \label{lem:ideals}
    \begin{inparaenum}
        \item $I_n \subseteq I_{n+1}$.
        \item $\Delta_a I_n \subseteq I_{n+1}$.
        \item $I_{n+1} = I_n + \idealof {\Delta_a I_n} {a \in \Sigma}$.
        \item $I_n = I_{n+1}$ implies $I_n = I_{n+1} = I_{n+2} = \cdots$.
    \end{inparaenum}
\end{restatable}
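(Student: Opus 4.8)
\textbf{Proof plan for \cref{lem:ideals}.}
The plan is to prove the four parts in the order listed, as each relies on the previous ones. For part~(1), $I_n \subseteq I_{n+1}$ is immediate from the definition: $\Sigma^{\leq n} \subseteq \Sigma^{\leq n+1}$, so every generator of $I_n$ is already a generator of $I_{n+1}$, and the ideal operation is monotone in the generating set. For part~(2), I would first observe that the extended transition map $\tilde\Delta_a : \hompoly \Q X \to \hompoly \Q X$ is, by~\cref{cor:extension property}, a linear map satisfying $\tilde\Delta_a(\alpha\beta) = P(\alpha, \tilde\Delta_a\alpha, \beta, \tilde\Delta_a\beta)$; since $P$ is special, hence simple by~\cref{cor:classification}, we have $P(\alpha,\dot\alpha,\beta,\dot\beta) = \alpha\cdot\beta\cdot\alpha + \beta\cdot(\alpha\dot\beta + \dot\alpha\beta) + \gamma\cdot\dot\alpha\dot\beta$, wait — more carefully $P = \alpha_0\alpha\beta + \beta_0(\alpha\dot\beta+\dot\alpha\beta)+\gamma_0\dot\alpha\dot\beta$ for constants; the key point is that every monomial on the right-hand side is a \emph{polynomial multiple} of one of $\alpha$, $\beta$, $\dot\alpha$, $\dot\beta$. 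From this one shows by induction on the polynomial $p \in I_n$ (using linearity of $\tilde\Delta_a$ on sums and scalar multiples, and the product formula on $\alpha\beta$) that $\tilde\Delta_a p$ lies in the ideal generated by $\{\,p, \tilde\Delta_a(\text{generators of }I_n)\,\}$ together with $I_n$ itself; since each generator $\Delta_w x_1$ of $I_n$ maps under $\tilde\Delta_a$ to $\Delta_{wa}x_1$, a generator of $I_{n+1}$, we get $\tilde\Delta_a I_n \subseteq I_{n+1}$.

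The cleanest way to organize part~(2) is actually to prove the stronger statement that for any polynomials $p$ and for $\mathfrak a \supseteq I_n$ an ideal containing all the $\tilde\Delta_a$-images of the generators of $I_n$, the map $\tilde\Delta_a$ sends $\mathfrak a$ into $\mathfrak a$; this is where the product rule being \emph{bilinear and polynomial} is used crucially, since $\tilde\Delta_a(\alpha\beta)$ being a polynomial combination of $\alpha,\dot\alpha,\beta,\dot\beta$ means that if $\alpha,\beta$ and their derivatives all lie in the ideal $\mathfrak a$, so does $\tilde\Delta_a(\alpha\beta)$; by induction on term structure and linearity this extends to all of $\mathfrak a$. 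Part~(3) then follows: the inclusion $I_n + \idealof{\Delta_a I_n}{a\in\Sigma} \subseteq I_{n+1}$ is parts~(1) and~(2); for the reverse, every generator $\Delta_w x_1$ of $I_{n+1}$ with $|w| \leq n+1$ either has $|w| \leq n$ (so it is in $I_n$) or $w = va$ with $|v| \leq n$, in which case $\Delta_w x_1 = \tilde\Delta_a(\Delta_v x_1) \in \tilde\Delta_a I_n \subseteq \idealof{\Delta_a I_n}{a\in\Sigma}$.

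Part~(4) is the ascending-chain stabilization argument. Suppose $I_n = I_{n+1}$. By part~(3), $I_{n+2} = I_{n+1} + \idealof{\Delta_a I_{n+1}}{a\in\Sigma} = I_n + \idealof{\Delta_a I_n}{a\in\Sigma} = I_{n+1} = I_n$, where the middle equality substitutes $I_{n+1} = I_n$ into both summands (using that $\tilde\Delta_a$ is a well-defined function, so $\Delta_a I_{n+1} = \Delta_a I_n$ when $I_{n+1} = I_n$). Iterating gives $I_n = I_{n+1} = I_{n+2} = \cdots$. The main obstacle I anticipate is part~(2): making precise the induction showing that $\tilde\Delta_a$ maps an ideal into itself, since one must carefully track that the product-rule formula only ever produces \emph{polynomial multiples} of the arguments and their derivatives — this is exactly the "ideal compatibility" property alluded to in the introduction, and it is where simplicity (bilinearity of $P$) of the product rule is essential; a non-bilinear $P$ would produce terms like $\alpha^2$ that need not lie in the ideal generated by $\alpha$ alone together with lower data, breaking the argument.
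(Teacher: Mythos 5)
Your decomposition and the order of the four parts match the paper's proof exactly: (1) is monotonicity of ideal generation in the generating set, (3) splits the generators of $I_{n+1}$ by word length, and (4) uses that (3) expresses $I_{n+1}$ as a function of the set $I_n$ alone; all three of these are fine as written.

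The justification you give for part (2), however, contains a step that would fail if taken at face value. A general element of $I_n$ is $p = \sum_i \beta_i \cdot \Delta_{w_i} x_1$ with $\length{w_i} \leq n$ and the coefficients $\beta_i$ \emph{arbitrary} polynomials, not members of any $I_m$. Linearity and the product clause give $\tilde\Delta_a p = \sum_i P(\beta_i, \tilde\Delta_a \beta_i, \Delta_{w_i} x_1, \tilde\Delta_a \Delta_{w_i} x_1)$, and what puts each summand into $I_{n+1}$ is that every monomial of $P$ contains a factor from the \emph{second-argument} slot, i.e.\ $P \in \ideal{y, \dot y}$ --- the paper's~\cref{eq:ideal compatibility}, which the bilinear form guarantees. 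Your stated ``key point'' (every monomial of $P$ is a multiple of \emph{one of} the four arguments) is strictly weaker and insufficient: the rule $P = \dot x y + x$ satisfies it, yet $\tilde\Delta_a(\beta \cdot g) = \tilde\Delta_a\beta \cdot g + \beta$ has the stray summand $\beta \notin I_{n+1}$. Likewise, the hypothesis of your ``cleanest'' reformulation --- that $\alpha$, $\beta$ \emph{and both their derivatives} lie in the ideal --- is never met in the application, precisely because the $\beta_i$ are arbitrary; and the obstruction for a non-bilinear $P$ is not a term like $\alpha^2$ (which does lie in $\ideal{\alpha}$) but a monomial of degree zero in $y, \dot y$. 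Since you do write out the explicit simple form of $P$, from which $P \in \ideal{y, \dot y}$ is immediate, the repair is local: replace the symmetric criterion by the one-sided one and argue generator-wise as above, which is exactly what the paper's proof does.
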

\noindent
The second point in the lemma relies on the fact that for every special product rule $P$,
the polynomial $P$ is in the ideal generated by $y, \dot y$:
\begin{align}
    \label{eq:ideal compatibility}
    \tag{ideal compatibility}
    P \in \ideal {y, \dot y}.
\end{align}
This is a consequence of~\cref{eq:bilinear form}.
This condition was already employed
in the case of a single-letter alphabet $\Sigma = \set a$~\cite[Theorem~4.2]{BorealeCollodiGorla:ACMTCL:2024},
however we \emph{do not assume}~\cref{eq:ideal compatibility},
as done in~\cite{BorealeCollodiGorla:ACMTCL:2024},
but rather we observe that it holds for every special product.
The rest of the argument is standard.
Consider the chain of ideals
\begin{align}
    \label{eq:ideal chain}
    I_0 \subseteq I_1 \subseteq \cdots \subseteq \poly \Q X.
\end{align}
By \emph{Hilbert's finite basis theorem}~\cite[Theorem 4, §5, Ch. 2]{CoxLittleOShea:Ideals:2015},
there is a (minimal) \emph{stabilisation index} $N \in \N$ \st~$I_N = I_{N+1} = \cdots$.
Thanks to the last point of~\cref{lem:ideals}, the stabilisation index is computable:
Indeed, $N$ can be taken to be the smallest $n \in \N$ \st~$I_n = I_{n+1}$,
and ideal equality can be decided by checking that the generators of $I_{n+1}$ are all in $I_n$.
The latter test is an instance of the \emph{ideal membership problem},
which is decidable (in exponential space)~\cite{Mayr:STACS:1989}.
It is crucial that such an $N$ not only exists
(as follows already from Hilbert's theorem, whose proof is notoriously non-constructive),
but that it can be computed
(a consequence of the way the ideals $I_n$'s are constructed).
Once we have computed $N$, we can apply the following decidable criterion for zeroness.
\begin{restatable}{lemma}{lemZeronessCharacterisation}
    \label{lem:zeroness characterisation}
    Let $N \in \N$ be the stabilisation index
    of the ideal chain~\cref{eq:ideal chain}.
    We then have
    \begin{align*}
        \Asem \AA {x_1} = \zero
        \quad \text{iff} \quad
        \coefficient w (\Asem \AA {x_1}) = 0
        \text{ for all } w \in \Sigma^{\leq N}.
    \end{align*}
\end{restatable}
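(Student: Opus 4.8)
The plan is to prove both directions of the equivalence in \cref{lem:zeroness characterisation}. The forward direction is trivial: if $\Asem \AA {x_1} = \zero$, then by definition every coefficient vanishes, in particular those indexed by words of length $\leq N$. The substance is the converse.

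First I would recall from \cref{eq:ideal chain} and \cref{lem:ideals} that $I_N = I_{N+1} = I_{N+2} = \cdots$, so the ideal $I_N$ already contains $\Delta_w x_1$ for \emph{every} word $w \in \Sigma^*$, not merely the short ones. Indeed $\Delta_w x_1 \in I_{\length w} \subseteq I_{\max(N, \length w)} = I_N$. Next I would establish the key soundness property of ideals for zeroness, hinted at in the text: if every generator of an ideal $\ideal{G}$ recognises the zero series under $\Asem \AA {\_}$, then so does every polynomial in $\ideal{G}$. This is because, by \cref{lem:polynomial automata:homomorphism} (or \cref{lem:homomorphism} for general $P$-automata), $\Asem \AA {\_}$ is a homomorphism of commutative $\Q$-algebras, hence $\Asem \AA {\beta_1 \alpha_1 + \cdots + \beta_n \alpha_n} = \sum_i \Asem \AA {\beta_i} * \Asem \AA {\alpha_i}$, and $\Asem \AA {\alpha_i} = \zero$ together with bilinearity of ``$*$'' forces each summand, hence the whole sum, to be $\zero$.

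Now assume $\coefficient w (\Asem \AA {x_1}) = 0$ for all $w \in \Sigma^{\leq N}$. I want to deduce $\coefficient w (\Asem \AA {x_1}) = 0$ for all $w \in \Sigma^*$, which gives $\Asem \AA {x_1} = \zero$. I would argue by induction on $\length w$. For $\length w \leq N$ this is the hypothesis. For the inductive step, write $w = v \cdot u$ with $\length v = N$ (so $\length u \geq 1$); then $\coefficient w (\Asem \AA {x_1}) = \coefficient u (\deriveleft v (\Asem \AA {x_1})) = \coefficient u (\Asem \AA {\Delta_v x_1})$ by \cref{eq:automata semantics}. Since $\Delta_v x_1 \in I_N = \idealof{\Delta_{w'} x_1}{w' \in \Sigma^{\leq N}}$, and each generator $\Delta_{w'} x_1$ satisfies $\coefficient {w''} (\Asem \AA {\Delta_{w'} x_1}) = \coefficient {w' w''} (\Asem \AA {x_1})$, which by the induction hypothesis vanishes for all $w''$ with $\length{w' w''} < \length w$ — wait, this needs care, since $\length{w' w''}$ can be large. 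Let me instead run the induction more carefully: I claim that for every $n$ and every $g \in I_N$ that is a $\Q$-linear combination of products of the generators with polynomial coefficients, $\coefficient u (\Asem \AA g) = 0$ for all $u$; but this is exactly the soundness statement, \emph{provided} each generator recognises $\zero$. So the real target is: each generator $\Delta_{w'} x_1$ ($w' \in \Sigma^{\leq N}$) recognises the zero series. Equivalently, $\coefficient{w'w''}(\Asem\AA{x_1}) = 0$ for all $w'' \in \Sigma^*$ and $w' \in \Sigma^{\leq N}$. This I prove by induction on $\length{w''}$: for $\length{w''} = 0$ it is the hypothesis; for the step, $\coefficient{w' w''}(\Asem\AA{x_1}) = \coefficient{w''}(\Asem\AA{\Delta_{w'}x_1})$; writing $w'' = a u$, this equals $\coefficient u (\Asem\AA{\tilde\Delta_a(\Delta_{w'}x_1)})$, and $\tilde\Delta_a(\Delta_{w'}x_1) \in \Delta_a I_N \subseteq I_{N+1} = I_N$ by \cref{lem:ideals}, so it is a polynomial combination $\sum_j \beta_j \cdot \Delta_{w'_j} x_1$ of the short generators; by the homomorphism property $\coefficient u$ of its image is a combination of terms $\coefficient{u_1}(\Asem\AA{\beta_j}) \cdot \coefficient{u_2}(\Asem\AA{\Delta_{w'_j}x_1})$ with $\length{u_2} \leq \length u < \length{w''}$, each of which vanishes by the induction hypothesis.

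The main obstacle I anticipate is bookkeeping the induction so that the homomorphism-plus-bilinearity expansion of $\coefficient u (\Asem\AA{g})$ for $g \in I_N$ only ever invokes coefficients $\coefficient{w' w''}(\Asem\AA{x_1})$ with $\length{w''}$ strictly smaller than the quantity being inducted on — the degree of the polynomial coefficients $\beta_j$ is irrelevant since the $\coefficient{\_}(\Asem\AA{\beta_j})$ factors are just scalars multiplied in, but one must phrase the soundness lemma (ideal members recognising $\zero$) cleanly and apply it after the generators are known to recognise $\zero$. Once that auxiliary claim — all short generators recognise the zero series — is in hand, the conclusion $\Asem\AA{x_1} = \zero$ is immediate since $x_1 = \Delta_\e x_1$ is itself one of the generators (for $N \geq 0$).
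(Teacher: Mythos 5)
Your proof is correct, but it takes a genuinely different and more laborious route than the paper's. The paper's argument is a one-shot computation: for each fixed $w$, expand $\Delta_w x_1 = \sum_{u \in \Sigma^{\leq N}} \beta_u \cdot \Delta_u x_1$ in $I_N$, apply the homomorphism property of the semantics, and then take \emph{only the constant term}, using $\coefficient w (\Asem \AA {x_1}) = \coefficient \e (\deriveleft w \Asem \AA {x_1})$ together with the fact that constant-term extraction is multiplicative (\cref{eq:product rule:base}, \cref{lem:constant term homomorphism}); this immediately yields $\coefficient w (\Asem \AA {x_1}) = \sum_u \coefficient \e \sem{\beta_u} \cdot \coefficient u (\Asem \AA {x_1}) = 0$ with no induction at all. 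You instead prove the stronger claim that every generator $\Delta_{w'} x_1$ ($w' \in \Sigma^{\leq N}$) recognises the zero series, by induction on the length of the suffix $w''$. That works, but it forces you to control $\coefficient u (f * g)$ for \emph{arbitrary} $u$, not just $u = \e$, and the step where you assert that $\coefficient u (\Asem \AA {\beta_j} * \Asem \AA {\Delta_{w'_j} x_1})$ is a combination of terms $\coefficient {u_1}(\Asem \AA {\beta_j}) \cdot \coefficient {u_2}(\Asem \AA {\Delta_{w'_j} x_1})$ with $\length{u_2} \leq \length u$ does not follow from the homomorphism property alone: it is a separate coefficient-expansion lemma for the $P$-product, which is true for every bilinear (hence every special) product rule by an easy induction on $\length u$ over the~\cref{eq:bilinear form}, but which you state without proof. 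That is precisely the bookkeeping the constant-term trick sidesteps. Your opening ``soundness of ideals for zeroness'' observation, while true, ends up unused in your own final argument, as you yourself notice mid-proof. In exchange, your route does establish the (mildly) stronger intermediate fact that all reachable states in $I_N$ recognise $\zero$, which the paper only gets a posteriori from the conclusion.
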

\noindent
Thanks to~\cref{lem:zeroness characterisation},
to decide zeroness we check that the automaton outputs zero for all words shorter than the stabilisation index.
This concludes the proof of decidability of the zeroness and equivalence problems (\cref{thm:equivalence:decidability}).
We stress that this subsumes known decidability results for Hadamard automata~\cite[Theorem 4]{BenediktDuffSharadWorrell:LICS:2017},
shuffle automata~\cite[Theorem 1]{Clemente:CONCUR:2024},
and infiltration automata~\cite[§5.A]{Clemente:LICS:2025}.
Thanks to our characterisation, $P$-automata for a special product $P$
are a natural maximal class of automata
for which zeroness is decidable via Hilbert's method.

\begin{remark}[Complexity]
    \label{rem:complexity of equivalence}
    Zeroness is Ackermann-complete for Hadamard automata
    over an alphabet of size $\card \Sigma \geq 2$~\cite[Theorem~1]{BenediktDuffSharadWorrell:LICS:2017}.
    The Ackermannian upper bound in fact holds for \emph{every} special product.
    This can be obtained by noticing that the upper-bound arguments on the stabilisation index
    in~\cite[Sec.~3]{NovikovYakovenko:1999}
    or~\cite[Sec.~V]{BenediktDuffSharadWorrell:LICS:2017}
    depend only on the fact that the degrees of the generators of $I_n$ are exponential in $n$.
    The latter is true for every $P$-automaton.
    %
\end{remark}

\renewcommand\thesubsection{\thesection.\arabic{subsection}}

\section{Future work}
\label{sec:future work}

\newcommand{\shrink}{\vspace{-2ex}}

In this paper we have studied coinductive products of series satisfying a product rule.
We have provided a complete and decidable description of all product rules $P$ yielding commutative $\Q$-algebras of series,
and when this is the case we have shown that equivalence of $P$-automata is decidable.
This unifies and generalises known results from the literature on the Hadamard, shuffle, and infiltration products.
We outline several directions to extend this work.

\shrink
\subparagraph*{Novel applications of special products.}
While our characterisation applies to infinitely many product rules
(in fact, the set of tuples of parameters $\tuple{\alpha, \beta, \gamma}$
satisfying~\cref{eq:simple} is an \emph{algebraic variety}),
it remains to see whether such rules have convincing applications,
for example in process algebra or in combinatorics.
This level of generality has not (yet) lead to the discovery
of a novel and useful product of series beyond the known Hadamard, shuffle, and infiltration products.

\shrink
\subparagraph{Input-dependent product rules.}
We have considered product rules that are independent of the input symbol. 
From a modelling perspective, it may be desirable to consider \emph{input-dependent product rules} $P_a$ (one for each $a \in \Sigma$).
For instance, in the context of process algebras,
one could use the Hadamard rule $P_a = \dot x \dot y$ for symbols $a$ where we require both processes to synchronise,
the shuffle rule $P_b = \dot x y + x \dot y$ for symbols $b$ where exactly one process makes a step,
and the infiltration rule $P_c = \dot x y + x \dot y + \dot x \dot y$ for symbols $c$
where both processes may make a step independently or synchronously.
Such an extension does not pose significant technical challenges,
and we expect all results to carry over.

\shrink
\subparagraph{Equivalence problem for non-special products.}
Regarding decidability of equivalence,
it would be interesting to find a non-special product rule $P$ \st\-
%
$P$-finite series are strictly more expressive than the rational series and
equivalence of $P$-automata is decidable.
%
Conversely, it would be interesting to find a (necessarily non-special) product rule $P$
\st~the equivalence problem for $P$-automata is undecidable.

\shrink
\subparagraph{Concatenation product and weighted context-free grammars.}
We would like to model the concatenation (Cauchy) product
via more general product rule formats.
This product is bilinear and associative, but not commutative
over alphabets of size $\card \Sigma \geq 2$.
It is well-known that it can be uniquely defined by
$\coefficient \e {(f * g)} = f_\e \cdot g_\e$
and the \emph{Brzozowski product rule}
\begin{align}
    \label{eq:brzozowski product rule}
    \derive a (f * g)
        = (\derive a f) * g + (\coefficient \e f) \cdot \derive a g.
\end{align}
This suggests considering an extended term language
allowing for an operation of extracting the constant term of a series,
and a scalar multiplication.
While~\cref{eq:P-additivity,eq:P-associativity} are still sufficient
to ensure that the product is bilinear and associative,
it is not clear anymore whether they are also necessary.

Another challenge in this direction would be to show decidability of equivalence
for the corresponding \emph{Cauchy-finite series} and \emph{Cauchy automata},
since this is nothing else than equivalence of weighted context-free grammars.
The latter, originating in Knuth's multilanguage semantics for context-free grammars~(\cf~\cite{Knuth:1991}),
is a long-standing open problem\- 
(\cf~\cite{Kuich:multiplicity:1994,ForejtJancarKieferWorrell:IC:2014}).
Note that, in the case of weighted grammars over \emph{commuting} input symbols,
equivalence is decidable with polynomial space complexity,
and even in the counting hierarchy~\cite{BalajiClementeNosanShirmohammadiWorrell:LICS:2023}.

\shrink
\subparagraph{Semirings.}
One could consider a more general setting
where the coefficients come from a class of semirings $\mathcal S$,
rather than from the field of rational numbers $\Q$.
A natural question is to understand for which semiring classes $\mathcal S$
one can characterise $P$-products endowing the set of series with \emph{the same} semiring structure $\mathcal S$.
This amounts to understanding how algebraic properties transfer
from the coefficient semiring to the product of series.

\bibliography{literature}


\appendix
\newpage

\section{Additional proofs}
\label{app:extra}

In this section we present some additional proofs which were omitted from the main text.

\subsection{Associativity}
\label{app:associativity}

We show that a bilinear product rule $P$~\cref{eq:bilinear form}
is associative if, and only if, the parameters $\alpha, \beta_1, \beta_2, \gamma \in \Q$
satisfy the identities from~\cref{eq:associativity identities}.
To this end, consider $P$ in~\cref{eq:bilinear form}.
We now compute both sides of the associativity condition~\cref{eq:P-associativity}:
%
\begin{align*}
    P(x, \dot x, y z, P(y, \dot y, z, \dot z)) =
    \ & \alpha \cdot x y z +
    \beta_1 \cdot x \left(
        \alpha \cdot yz +
        \beta_1 \cdot y \dot z +
        \beta_2 \cdot \dot y z +
        \gamma \cdot \dot y \dot z
    \right) +
    \beta_2 \cdot \dot x y z + {} \\
    & {} + \gamma \cdot \dot x \left(
        \alpha \cdot yz +
        \beta_1 \cdot y \dot z +
        \beta_2 \cdot \dot y z +
        \gamma \cdot \dot y \dot z
    \right),
    \text{ and } \\
    P(xy, P(x, \dot x, y, \dot y), z, \dot z) = 
    \ & \alpha \cdot x y z +
    \beta_1 \cdot x y \dot z +
    \beta_2 \cdot \left(
        \alpha \cdot xy +
        \beta_1 \cdot x \dot y +
        \beta_2 \cdot \dot x y +
        \gamma \cdot \dot x \dot y
    \right) z + {} \\
    & {} + \gamma \cdot \left(
        \alpha \cdot xy +
        \beta_1 \cdot x \dot y +
        \beta_2 \cdot \dot x y +
        \gamma \cdot \dot x \dot y
    \right) \dot z.
\end{align*}
By taking the difference of these two polynomials and gathering like terms, we have
\begin{align*}
    &(\alpha + \beta_1 \alpha - \alpha - \beta_2 \alpha) \cdot xyz +
    (\beta_2 + \gamma\alpha - \beta_2^2)  \cdot \dot x y z +
    (\beta_1 \beta_2 - \beta_2 \beta_1) \cdot x \dot y z +
    (\beta_1^2 - \beta_1 - \gamma \alpha) \cdot x y \dot z + {} \\
    &(\gamma \beta_2 - \beta_2 \gamma) \cdot \dot x \dot y z +
    (\gamma \beta_1 - \gamma \beta_2) \cdot \dot x y \dot z +
    (\beta_1 \gamma - \gamma \beta_1) \cdot x \dot y \dot z +
    (\gamma^2 - \gamma^2) \cdot \dot x \dot y \dot z = 0.
\end{align*}
By equating each coefficient to zero and simplifying,
we obtain~\cref{eq:associativity identities}.

\subsection{Multiplicative identities}
\label{app:multiplicative identities}

We recall and prove the characterisation of multiplicative identities for simple products from~\cref{sec:multiplicative identity}.

\multIden*

\begin{proof}
    Assume that the simple product is nondegenerate.
    We claim that there exists a constant $\eta \in \Q$
    \st~the unique series $\one$ satisfying~\cref{eq:multiplicative identity}
    is a multiplicative identity.
    In other words, we need to find $\eta$ \st
    \begin{align*}
        f * \one = \one * f = f,
        \quad \text{for every series } f \in \series \Q \Sigma.
    \end{align*}
    By taking left derivatives and by the product rule,
    \begin{align*}
        \derive a (f * \one)
        = P(f, \derive a f, \one, \derive a \one)
        = P(f, \derive a f, \one, \eta \cdot \one)
        = \derive a f.
    \end{align*}
    We now extract the constant term of both sides, and since this operation is a homomorphism, we have
    \begin{align*}
        P(f_\e, f_a, 1, \eta) = f_a.
    \end{align*}
    Since $f$ is arbitrary we can choose $f_\e$ and $f_a$ as we wish.
    Since the base field of the rationals is infinite, this means
    \begin{align*}
        P(x, \dot x, 1, \eta)
            = \alpha \cdot x + \beta \cdot(\dot x + \eta \cdot x) + \gamma \eta \cdot \dot x = \dot x.
    \end{align*}
    We can now gather the coefficient of $x$ and of $\dot x$ and we obtain
    \begin{align*}
        (\alpha + \beta \eta) \cdot  x + (\beta - 1 + \gamma \eta) \cdot \dot x = 0.
    \end{align*}
    Equating to zero the coefficient of $x$ and that of $\dot x$,
    we obtain the following two constraints on $\eta$:
    \begin{align*}
        \alpha + \beta \eta &= 0
        \quad \text{and} \quad
        \beta - 1 + \gamma \eta = 0.
    \end{align*}
    Since the product is simple,
    $\alpha, \beta, \gamma$ satisfy~\cref{eq:simple},
    and since it is not degenerate,
    either $\beta \neq 0$ or $\gamma \neq 0$.
    If $\beta \neq 0$, then $\eta = - \alpha / \beta$ is a solution.
    If $\gamma \neq 0$, then $\eta = - (\beta - 1) / \gamma$ is a solution.
    In either case, we found a unit for the product.

    Assume now that the simple product is degenerate,
    so that $\beta = \gamma = 0$.
    We show that in this case the product has no unit at all
    (not even one not definable coinductively).
    Indeed, if $P = \alpha \cdot x y$,
    then for every series $f, g$ and input letter $a \in \Sigma$
    we have $\derive a (f * g) = \alpha \cdot f * g$.
    An induction on the length of words shows that
    the coefficient of $w \in \Sigma^*$ of the product is
    \begin{align*}
        \coefficient w {(f * g)} = \alpha^{\length w} \cdot f_\e \cdot g_\e.
    \end{align*}
    By way of contradiction, assume that there is a unit $\one$.
    Since $f * \one = f$ holds for every series $f$,
    by extracting the coefficient of $w \in \Sigma^*$
    we would have $\alpha^{\length w} \cdot f_\e = f_w$,
    which is impossible since $f_w$ can be chosen freely.
\end{proof}

\subsection{Zeroness algorithm}
\label{sec:zeroness algorithm}

We provide the missing details completing the proof of decidability of equivalence from \cref{sec:equivalence algorithm}.
We recall and prove the relevant properties of the polynomial ideals $I_n$.
\lemIdeals*
\begin{proof}
    The first point holds since the operation of constructing the ideal generated by a set
    is monotonic for inclusion of sets.
    
    The second point follows from the compatibility of the product rule $P$ with ideals~\cref{eq:ideal compatibility}.
    To see this, let $\alpha \in \Delta_a I_n$.
    Then $\alpha$ is of the form
    \begin{align*}
        &\alpha
        = \Delta_a \sum_{i = 1}^m \beta_i \cdot \Delta_{w_i} x_1 = 
            && \text{by~\cref{eq:extension}} \\
        & = \sum_{i = 1}^m P(\beta_i, \Delta_a \beta_i, \Delta_{w_i} x_1, \Delta_a \Delta_{w_i} x_1) =
            && \text{by~\cref{eq:ideal compatibility}} \\
        & = \sum_{i = 1}^m
            ( \mu_i \cdot \underbrace{\Delta_{w_i} x_1}_{\in I_n}
            + \eta_i \cdot \underbrace{\Delta_a \Delta_{w_i} x_1}_{\in I_{n+1}}),
    \end{align*}
    for some $\mu_i, \eta_i \in \poly \Q X$.
    Thus $\alpha$ is in $I_{n+1}$.
    
    We now consider the third point.
    The ``$\supseteq$'' inclusion holds by the first two points
    and the fact that $I_{n+1}$ is an ideal.
    For the other inclusion, let $\alpha \in I_{n+1}$.
    We can write $\alpha$ separating words of length exactly $n+1$ from the rest, and we have
    \begin{align*}
        &\alpha
        = \underbrace \beta_{\in I_n}
        + \sum_{w \in \Sigma^{n+1}} \beta_w \cdot \Delta_w x_1
        = \beta + \sum_{a \in \Sigma} \sum_{w \in \Sigma^n} \beta_{w \cdot a} \cdot \Delta_{w \cdot a} x_1 
        = \beta + \sum_{a \in \Sigma} \sum_{w \in \Sigma^n} \beta_{w \cdot a} \cdot \Delta_a \underbrace{\Delta_w x_1}_{\in I_n},
    \end{align*}
    where the latter quantity is clearly in the ideal generated by $\Delta_a I_n$.
    
    The last point follows from the previous one
    since we have shown that $I_{n+1}$ is (exclusively) a function of $I_n$.
\end{proof}

We recall and prove the following zeroness criterion.

\lemZeronessCharacterisation*
\begin{proof}
    The ``only if'' direction is trivial.
    For the other direction, assume that the semantics is zero for words of length $\leq N$.
    Thanks to the homomorphism property of the semantics (\cref{lem:polynomial automata:homomorphism}),
    for every word $w \in \Sigma^*$ we can write
    \begin{align*}
        \deriveleft w \sem {x_1}
        &= \sem {\Delta_w x_1}
        = \sem {\sum_{u \in \Sigma^{\leq N}} \beta_u \cdot \Delta_u x_1}
        = \sum_{u \in \Sigma^{\leq N}} \sem {\beta_u} * \sem {\Delta_u x_1},
    \end{align*}
    where we have used the fact that $\Delta_w x_1$ is in $I_N$.
    By taking constant terms on both sides,
    \begin{align*}
        &\coefficient w \sem {x_1}
        = \coefficient \e (\deriveleft w \sem {x_1})
        = \sum_{u \in \Sigma^{\leq N}} \coefficient \e \sem {\beta_u} \cdot \coefficient \e \sem {\Delta_u x_1}
        = \sum_{u \in \Sigma^{\leq N}} \coefficient \e \sem {\beta_u} \cdot \coefficient u \sem {x_1}.
    \end{align*}
    But $u$ has length $\leq N$, and thus by the assumption $\coefficient u \sem {x_1} = 0$,
    implying $\coefficient w \sem {x_1} = 0$. 
\end{proof}

By combining the results above,
we can finally prove decidability of equivalence of finite-variable $P$-automata
when $P$ is a special product rule.
\thmDecidability*
\begin{proof}
    By using the effective closure properties of $P$-finite series,
    we reduce equality $f = g$ to zeroness $f - g = \zero$.
    So let $f = \Asem \AA {x_1}$ be a $P$-finite series,
    recognised by a $P$-automaton $\AA$.
    Compute the stabilisation index $N \in \N$ of the ideal chain~\cref{eq:ideal chain},
    which is possible by~\cref{lem:ideals} and decidability of the ideal membership problem.
    Thanks to~\cref{lem:zeroness characterisation},
    we can decide $f = \zero$
    by enumerating all words $w$ of length $\leq N$
    and checking $\coefficient w f = 0$.
    The latter test is performed by applying the semantics of the $P$-automaton.
\end{proof}


\agdasection{Right derivatives, reversal, and the commutativity problem}{General/Reversal}
\label{app:commutativity}

\renewcommand{\deriveleft}[1]{\delta^L_{#1}}

\begin{table*}[h!]
    \begin{center}
        \makegapedcells
        \begin{tabular}{l|l|l|l}
            operation & notation & initial value & left derivative by $a$ \\
            \hline\hline
            right derivative by $b \in \Sigma$
                & $\deriveright b f$
                & $\coefficient \e (\deriveright b f) = \coefficient b f$
                & $\deriveleft a (\deriveright b f) = \deriveright b (\deriveleft a f)$ \\
            reversal
                & $\reverse f$
                & $\coefficient \e (\reverse f) = \coefficient \e f$
                & $\deriveleft a (\reverse f) = \reverse {(\deriveright a f)}$ \\
        \end{tabular}
    \end{center}
    \caption{Right derivatives and reversal operations, defined coinductively.}
    \label{table:other operations on series}
\end{table*}

In this section, we show that the commutativity problem is decidable for $P$-finite series and $P$-automata,
when $P$ is a special product rule.
This is a nontrivial development, extending equivalence to a more general decision problem.

Our approach is as follows.
By the results in~\cite[Sec.~II]{Clemente:LICS:2025},
in order to decide commutativity,
it suffices to show that $P$-finite series are effectively closed under \emph{right derivatives}~(\cref{lem:commutativity characterisation}).
In order to show the latter, it suffices to show that
right derivatives satisfy \emph{any} product rule~(\cref{lem:closure under right derivatives}).
In fact, we will show a stronger result, namely that, for special $P$-products,
right derivatives satisfy \emph{the same} product rule as left derivatives do.
To obtain the latter, we use the fact that the reversal operation
is an \emph{endomorphism} of the series algebra~(\cref{lem:reversal endomorphism}).
Note that even when $P$-finite series are closed under right derivatives,
they need not be closed under the reversal operation itself.

In order to execute this plan, in~\cref{lem:reversal endomorphism:characterisation} we provide an equational characterisation
of when reversal is an endomorphism for an arbitrary $P$-product.
This equational characterisation is similar in spirit to~\cref{thm:characterisation:BAC} characterising \BAC~product rules,
but more advanced.
Based on this characterisation, we can directly check that it holds
for simple (thus special) product rules~(\cref{lem:reversal endomorphism:special}).

The rest of the section is organised as follows.
We begin in~\cref{sec:commutativity:preliminaries} with some technical preliminaries.
In~\cref{sec:commutativity:general results} we show that, when reversal is an endomorphism of the series algebra,
the class of $P$-finite series is effectively closed under right derivatives;
this result holds for every product rule $P$.
In~\cref{sec:commutativity:special results} we show that, for special products rules,
reversal is always an endomorphism,
and consequently $P$-finite series are closed under right derivatives.
This allows us to conclude that commutativity is decidable in~\cref{sec:commutativity:decidability}.
In fact, we show that commutativity and equivalence are inter-reducible.

\subsection{Preliminaries}
\label{sec:commutativity:preliminaries}

\subparagraph*{Right derivatives and reversal.}
The \emph{right derivative} of a series $f \in \series \Q \Sigma$ with respect to a letter $b \in \Sigma$
is the series $\deriveright b f$
that maps an input word $w \in \Sigma^*$ to $f(w \cdot b)$.
In this section, to avoid confusion we denote left derivatives by $\deriveleft a$
and right derivatives by $\deriveright b$.
In line with the style of this paper,
it is also possible to define right derivatives coinductively, based on left derivatives;
see~\cref{table:other operations on series}.
Like left derivatives, their right counterparts are extended to all finite words homomorphically:
$\deriveright \e f := f$ and $\deriveright {a \cdot w} f := \deriveright w (\deriveright a f)$
for all $a \in \Sigma$ and $w \in \Sigma^*$.

The \emph{reversal} operation $\reverse f$
can be defined in a point-wise manner by
$\coefficient w {\reverse f} := \coefficient {\reverse w} f$
where $\reverse w$ is the mirror image of the word $w$.
The coinductive definition based on left and right derivatives is given in~\cref{table:other operations on series}.

\subparagraph*{Bisimulations up-to reflexivity-transitivity.}

Sometimes we will need a stronger proof principle for series equality.
For a binary relation on series $R \subseteq \series \Q \Sigma \times \series \Q \Sigma$,
let $R^*$ be its reflexive-transitive closure.
%
%
A \emph{bisimulation up-to reflexivity-transitivity} is a relation $R$ on series
\st~whenever $f \mathrel R g$ holds, we have
\begin{enumerate}[label=\textbf{(C.\arabic*)}]

    \item \label{bisimulation upto:1}
    $f_\e = g_\e$, and

    \item \label{bisimulation upto:2}
    for every letter $a \in \Sigma$,
    we have $(\deriveleft a f) \mathrel {R^*} (\deriveleft a g)$.

\end{enumerate}
We have the following equality proof principle for bisimulations up-to reflexivity-transitivity.
Its proof uses the fact that if $R$ is a bisimulation up-to reflexivity-transitivity,
then $R^*$ is included in a bisimulation (since the largest simulation \emph{is} reflexive and transitive),
and then one applies~\cref{lem:bisimulation}.

\begin{lemma}
    \label{lem:bisimulation up-to reflexivity-transitivity}
    If $f \mathrel R g$ for some bisimulation up-to reflexivity-transitivity $R$,
    then $f = g$.
\end{lemma}

\subsection{General results}
\label{sec:commutativity:general results}

The results in this section hold for every product rule $P$.
We are interested in an additional closure property for $P$-finite series not covered by~\cref{lem:P-finite:closure properties},
namely closure under right derivatives $\deriveright b$.
Since $P$-finiteness is defined in terms of left derivatives $\deriveleft a$,
this property does not follow directly from the definitions.
What is missing is a product rule for right derivatives
satisfying a coinductive equation similar to~\cref{eq:product rule}.
We show that when the reversal operation $\reverse{(\_)}$ is an endomorphism,
then right derivatives satisfy precisely \emph{the same product} rule as left derivatives.
In fact these two conditions are equivalent.

\subsubsection{Reversal and right derivatives}

We say that the reversal operation is an \emph{endomorphism} if,
for all $f, g \in \series \Q \Sigma$, we have
\begin{equation}
    \label{eq:reversal endomorphism}
    \tag{rev-end}
    \left\{\ 
        \begin{aligned}
            \reverse{(c \cdot f)}
                &= c \cdot \reverse f, \text{ for all } c \in \Q, \\
            \reverse {(f + g)}
                &= \reverse f + \reverse g, \\
            \reverse{(f * g)}
                &= \reverse f * \reverse g.
        \end{aligned}
    \right.
\end{equation}
Note that the first two equations, enforcing linearity, hold for every product rule $P$.
The real restriction is the last equation, requiring that reversal commutes with the $P$-product.

\subsubsection{Semantic characterisation}

In the following lemma we characterise when this happens in terms of a product rule satisfied by right derivatives.
\begin{agdalemma}[General/Reversal/\#sec:rev-product_rule-characterisation]
    \label{lem:reversal endomorphism}
    Consider a $P$-product ``$*$'' defined by a product rule $P$
    satisfying~\cref{eq:product rule}.
    The reversal operation is an endomorphism~\cref{eq:reversal endomorphism}
    if, and only if, for every $f, g \in \series \Q \Sigma$ and $b \in \Sigma$, we have
    \begin{align}
        \label{eq:product rule:right derivative}
        \tag{$*$-$\deriveright {}$}
        \deriveright b (f * g)
            &= 
            P(f, \deriveright b f, g, \deriveright b g).
    \end{align}
\end{agdalemma}
\begin{proofsketch}
    For the ``only if'' direction,
    assume that reversal is an endomorphism.
    Then~\cref{eq:product rule:right derivative} is proved by
    expressing $\deriveright b (f * g)$ by double reversal as $\reverse{(\deriveleft b \reverse{(f * g)})}$
    and then applying the product rule for left derivatives~\cref{eq:product rule}.
    
    For the ``if'' direction, assume that~\cref{eq:product rule:right derivative} holds.
    The first two cases of~\cref{eq:reversal endomorphism} hold by the definition of reversal.
    The last case (product) is shown as follows.
    Let ``$\sim$'' be the relation on series defined by the following rules
    %
    \begin{align*}
        \reverse {(c \cdot f)}
            \sim c \cdot \reverse f,
        \quad
        \reverse {(f + g)}
            \sim \reverse f + \reverse g,
        \quad \text{ and } \quad
        \reverse {(f * g)}
            \sim \reverse f * \reverse g.
    \end{align*}
    We show that ``$\sim$'' is a bisimulation up-to reflexivity-transitivity,
    to which we apply~\cref{lem:bisimulation up-to reflexivity-transitivity}.
    Constant terms agree~\cref{bisimulation upto:1},
    and regarding left derivatives~\cref{bisimulation upto:2} we have
    \begin{align*}
        &\deriveleft a \reverse {(f * g)}
        = \reverse {(\deriveright a (f * g))}
            && \text{(def.~of reverse)} \\
        &\ = \reverse {(P(f, \deriveright a f, g, \deriveright a g))} 
            && \text{(by~\cref{eq:product rule:right derivative})} \\
        &\ \mathrel {\sim^*} P(\reverse f, \reverse {(\deriveright a f)}, \reverse g, \reverse {(\deriveright a g)})
            && \text{(def.~of ``$\sim^*$'')} \\
        &\ = P(\reverse f, \deriveleft a \reverse f, \reverse g, \deriveleft a \reverse g)
            && \text{(def.~of reverse)} \\
        &\ = \deriveleft a (\reverse f * \reverse g).
            && \text{(by~\cref{eq:product rule})}
        \qedhere
    \end{align*}
\end{proofsketch}

In the following lemma we show that when right derivatives satisfy \emph{any} product rule
(not necessarily the same rule as for left derivatives),
then $P$-finite series are closed under right derivatives.
\begin{agdalemma}[General/Reversal/\#sec:right-derivatives-P-fin]
    \label{lem:closure under right derivatives}
    Suppose that the right derivatives satisfy a product rule of the form
    \begin{align*}
        \deriveright b (f * g) = Q(f, \deriveright b f, g, \deriveright b g),
    \end{align*}
    for an \emph{arbitrary} polynomial $Q(x, \dot x, y, \dot y)$.
    If $f \in \series \Q \Sigma$ is $P$-finite,
    then for every $b \in \Sigma$, the series $\deriveright b f$ is also $P$-finite.
\end{agdalemma}
\begin{proofsketch}
    If $f$ is $P$-finite with generators $g_1, \dots, g_k$,
    then we claim that $\deriveright b f$ is $P$-finite with generators
    $g_1, \dots, g_k$, $\deriveright b g_1, \dots, \deriveright b g_k$.
    It suffices to notice that for every input letter $a \in \Sigma$ and new generator $\deriveright b g_i$,
    $\deriveleft a (\deriveright b g_i)$ belongs to the algebra generated by all generators (new and old).
    Indeed, right and left derivatives commute
    $\deriveleft a (\deriveright b g_i) = \deriveright b (\deriveleft a g_i)$,
    so that $\deriveleft a (\deriveright b g_i)$
    is of the form $\deriveright b (R(g_1, \dots, g_k))$
    for some term $R$ depending on $a$ and $i$.
    Since right derivatives satisfy a product rule,
    and induction on the structure of $R$ shows that the latter quantity
    is of the form $S(g_1, \dots, g_k, \deriveright b g_1, \dots, \deriveright b g_k)$.
\end{proofsketch}

By combining~\cref{lem:reversal endomorphism} and~\cref{lem:closure under right derivatives},
we obtain closure under right derivatives.

\begin{agdacorollary}[General/Reversal/\#sec:rev-end-right-derivatives-P-fin]
    \label{cor:closure under right derivatives}
    If the reversal operation is an endomorphism,
    then $P$-finite series are effectively closed under right derivatives.
\end{agdacorollary}

\agdasubsubsection{Automaton-based characterisation}{General/ReversalEnd/\#sec:automata}

We extend the characterisation from~\cref{lem:reversal endomorphism}
by adding two new equivalent conditions based on a $P$-automaton (over infinitely many variables).
This will be used later to show that reversal is an endomorphism for all \BAC~$P$-products.

We define a $P$-automaton $\AA$
that generalises the syntactic $P$-automaton from~\cref{rem:product via automata}.
Consider a set of variables $X$ of the form $\lxr u x f v$
where $f \in \series \Q \Sigma$ is any series and $u, v \in \Sigma^*$.
Intuitively, this represents the series $f$ after having read $u$ from the left and $v$ from the right.
For convenience, we just write $x_f$ for the variable $\lxr \e x f \e$.
Consider the two transition functions $\Delta^L_a, \Delta^R_b : X \to \Terms X$
\st~for all $a, b \in \Sigma$,
\begin{align}
    \label{eq:left-right transitions}
    \tag{$\Delta^L_a, \Delta^R_b$}
    \Delta^L_a (\lxr u x f v)
        := \lxr {u \cdot a} x f v
    \quad\text{and}\quad
    \Delta^R_b (\lxr u x f v)
        := \lxr u x f {v \cdot b},
\end{align}
and extend both of them to $\Terms X$ by~\cref{eq:extension}.
Consider the $P$-automaton $\AA = \tuple{\Sigma, X, F, \Delta^L}$
with the left transition function above,
and output function given by $F(\lxr u x f v) := \coefficient {u \cdot \reverse v} f$.
The construction is set up so that $\Asem \AA {x_f} = f$.
More generally, the following observation holds.
\begin{agdaproposition}[General/ReversalEnd/\#var-lemma]
    \label{prop:variable case}
    $\Asem \AA {\lxr u x f v} = \deriveleft u \deriveright v f$.
\end{agdaproposition}

Thanks to this automaton, we can extend the \emph{semantic} characterisation from~\cref{lem:reversal endomorphism}
with two additional, equivalent, \emph{automata} conditions.
This holds for every product rule $P$.
It will provide the bridge with the forthcoming \emph{syntactic} characterisation in~\cref{lem:reversal endomorphism:characterisation},
which will hold only for special product rules $P$.
\begin{lemma}
    \label{lem:reversal endomorphism:automata}
    The reversal operation is an endomorphism
    if, and only if, any of the following equivalent conditions hold:
    \begin{align}
        \label{eq:first}
        \tag{$\deriveright {}$-$\Delta^R$}
        \deriveright b (\Asem \AA \alpha)
            &= \Asem \AA {\Delta^R_b \alpha}
            &&\forall \alpha, b \\
        \label{eq:second}
        \tag{$\sem {\Delta^R \text- \Delta^L}$}
        \Asem \AA {\Delta^R_b \Delta^L_a \alpha}
            &= \Asem \AA {\Delta^L_a \Delta^R_b \alpha}
            &&\forall a, b, \alpha
    \end{align}
\end{lemma}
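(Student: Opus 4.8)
The plan is to establish the cycle of implications \emph{reversal is an endomorphism} $\Rightarrow$ \cref{eq:first} $\Rightarrow$ \cref{eq:second} $\Rightarrow$ \cref{eq:first} $\Rightarrow$ \emph{reversal is an endomorphism} — equivalently, to prove \cref{eq:first} $\Leftrightarrow$ \cref{eq:second} and that each of them is equivalent to reversal being an endomorphism, which by \cref{lem:reversal endomorphism} amounts to the statement that right derivatives obey the same product rule, $\deriveright b (f * g) = P(f, \deriveright b f, g, \deriveright b g)$. The recurring tools will be the defining equation of the automaton semantics $\deriveleft a \Asem \AA \alpha = \Asem \AA {\Delta^L_a \alpha}$ (\cref{eq:automata semantics}; note that $\AA$ carries the \emph{left} transition function), the homomorphism property of $\Asem \AA {\_}$ from terms to series (\cref{lem:homomorphism}), the identity $\Asem \AA {\lxr u x f v} = \deriveleft u \deriveright v f$ (\cref{prop:variable case}), and the commutation $\deriveleft a \deriveright b = \deriveright b \deriveleft a$ (\cref{table:other operations on series}).

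For \emph{endomorphism} $\Rightarrow$ \cref{eq:first} I would induct on the structure of $\alpha$. For a variable $\lxr u x f v$ the claim holds unconditionally: by \cref{prop:variable case} and $\deriveright {v \cdot b} = \deriveright b \circ \deriveright v$, both $\Asem \AA {\Delta^R_b (\lxr u x f v)}$ and $\deriveright b \Asem \AA {\lxr u x f v}$ equal $\deriveleft u \deriveright b \deriveright v f$ after commuting $\deriveright b$ past $\deriveleft u$. The linear cases follow from linearity of $\deriveright b$ and the induction hypothesis. For $\alpha = \beta * \gamma$, using \cref{lem:homomorphism} to write $\Asem \AA {\beta * \gamma} = \Asem \AA \beta * \Asem \AA \gamma$, the hypothesis in the form of \cref{lem:reversal endomorphism} gives $\deriveright b (\Asem \AA \beta * \Asem \AA \gamma) = P(\Asem \AA \beta, \deriveright b \Asem \AA \beta, \Asem \AA \gamma, \deriveright b \Asem \AA \gamma)$; the induction hypothesis for $\beta, \gamma$ together with \cref{lem:homomorphism} rewrites this as $\Asem \AA {P(\beta, \Delta^R_b \beta, \gamma, \Delta^R_b \gamma)} = \Asem \AA {\Delta^R_b (\beta * \gamma)}$ by the definition of the $P$-extension (\cref{eq:extension}). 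Conversely, \cref{eq:first} $\Rightarrow$ \emph{endomorphism} is obtained by instantiating \cref{eq:first} at the term $x_f * x_g$: since $\Asem \AA {x_f} = f$, $\Asem \AA {x_g} = g$, and $\Asem \AA {\Delta^R_b x_f} = \deriveright b f$, $\Asem \AA {\Delta^R_b x_g} = \deriveright b g$ by \cref{prop:variable case}, the homomorphism property collapses $\Asem \AA {\Delta^R_b (x_f * x_g)} = \Asem \AA {P(x_f, \Delta^R_b x_f, x_g, \Delta^R_b x_g)}$ to $P(f, \deriveright b f, g, \deriveright b g)$, whence $\deriveright b (f * g) = P(f, \deriveright b f, g, \deriveright b g)$ and \cref{lem:reversal endomorphism} applies.

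The implication \cref{eq:first} $\Rightarrow$ \cref{eq:second} is a short calculation: applying \cref{eq:first} to $\Delta^L_a \alpha$ and then the automaton semantics gives $\Asem \AA {\Delta^R_b \Delta^L_a \alpha} = \deriveright b \deriveleft a \Asem \AA \alpha$, whereas the automaton semantics followed by \cref{eq:first} gives $\Asem \AA {\Delta^L_a \Delta^R_b \alpha} = \deriveleft a \deriveright b \Asem \AA \alpha$, and the two agree since $\deriveleft a$ and $\deriveright b$ commute. The converse \cref{eq:second} $\Rightarrow$ \cref{eq:first} is the substantial step, and I would prove it via the bisimulation $R := \setof{(\deriveright b \Asem \AA \alpha, \Asem \AA {\Delta^R_b \alpha})}{\alpha \in \Terms X,\ b \in \Sigma}$. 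The constant terms of an $R$-related pair coincide unconditionally: $\coefficient \e (\deriveright b \Asem \AA \alpha) = \coefficient b \Asem \AA \alpha = F(\Delta^L_b \alpha)$ while $\coefficient \e (\Asem \AA {\Delta^R_b \alpha}) = F(\Delta^R_b \alpha)$, and $F \circ \Delta^L_b = F \circ \Delta^R_b$ on all terms follows by a side induction (the variable case is immediate from the definition $F(\lxr u x f v) = \coefficient {u \cdot \reverse v} f$; the product case uses that $F$ is a homomorphism and the shape of the $P$-extension). For closure under $\deriveleft a$, commuting $\deriveright b$ past $\deriveleft a$ and unfolding the automaton semantics turns $\deriveleft a$ of the pair into $(\deriveright b \Asem \AA {\Delta^L_a \alpha}, \Asem \AA {\Delta^L_a \Delta^R_b \alpha})$; by \cref{eq:second} its second component equals $\Asem \AA {\Delta^R_b \Delta^L_a \alpha}$, so this is precisely the $R$-pair associated with the term $\Delta^L_a \alpha$. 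Hence $R$ is a bisimulation and \cref{lem:bisimulation} gives \cref{eq:first}.

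I expect the bisimulation argument for \cref{eq:second} $\Rightarrow$ \cref{eq:first} to be the main obstacle: one must check that, granted \cref{eq:second}, the candidate relation is a genuine bisimulation (no bisimulation-up-to machinery is needed here), and the auxiliary identity $F \circ \Delta^L_b = F \circ \Delta^R_b$, though routine, has to be isolated since it is exactly what makes the constant-term clause go through. All the remaining steps are bookkeeping with the homomorphism lemma, the defining equations of the automaton semantics, and the commutation of left and right derivatives.
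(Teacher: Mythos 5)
Your proposal is correct and follows essentially the same route as the paper: the same four implications ((endomorphism, via the right-derivative product rule) $\Leftrightarrow$ \cref{eq:first} by structural induction resp.\ instantiation at $x_f * x_g$, and \cref{eq:first} $\Leftrightarrow$ \cref{eq:second} via the commutation of derivatives resp.\ the plain bisimulation whose constant-term clause rests on $F \circ \Delta^L_b = F \circ \Delta^R_b$). All the key ingredients you isolate, including that auxiliary $F$-identity proved by a side induction with the homomorphism property in the product case, are exactly those of the paper's proof.
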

\begin{proof}
    \cref{eq:product rule:right derivative}$\implies$\cref{eq:first}:
    The proof proceeds by induction on the structure of $\alpha$.
    The base case $\alpha = \lxr u x f v$ holds by~\cref{prop:variable case}:
    \begin{align*}
        \deriveright b (\Asem \AA {\lxr u x f v})
        &= \deriveright b \deriveleft u \deriveright v f
        = \deriveleft u \deriveright {v \cdot b} f
        = \Asem \AA {\Delta^R_b (\lxr u x f v)}.
    \end{align*}
    The inductive cases for addition and scalar multiplication follow by linearity.
    For the product case we use~\cref{lem:homomorphism} and~\cref{eq:product rule:right derivative}:
    \begin{align*}
        &\deriveright b (\Asem \AA {\alpha * \beta})
        = \deriveright b (\Asem \AA \alpha * \Asem \AA \beta)
            && \text{by~\cref{lem:homomorphism}} \\
        &\ = P(\Asem \AA \alpha, \deriveright b (\Asem \AA \alpha), \Asem \AA \beta, \deriveright b (\Asem \AA \beta))
            && \text{by~\cref{eq:product rule:right derivative}} \\
        &\ = P(\Asem \AA \alpha, \Asem \AA {\Delta^R_b \alpha}, \Asem \AA \beta, \Asem \AA {\Delta^R_b \beta})
            && \text{by ind.} \\
        &\ = \Asem \AA {P(\alpha, \Delta^R_b \alpha, \beta, \Delta^R_b \beta)}
            && \text{by~\cref{lem:homomorphism}} \\
        &\ = \Asem \AA {\Delta^R_b (\alpha * \beta)}.
            && \text{by~\cref{eq:extension}}
    \end{align*}
    \cref{eq:first}$\implies$\cref{eq:product rule:right derivative}:
    \begin{align*}
        &\deriveright b (f * g)
        = \deriveright b (\Asem \AA {x_f} * \Asem \AA{x_g}) =
            && \text{by~\cref{prop:variable case}} \\
        &\ = \deriveright b (\Asem \AA {x_f * x_g})
            && \text{by~\cref{lem:homomorphism}} \\
        &\ = \Asem \AA {\Delta^R_b (x_f * x_g)}
            && \text{by~\cref{eq:first}} \\
        &\ = \Asem \AA {P(x_f, \Delta^R_b x_f, x_g, \Delta^R_b x_g)}
            && \text{by~\cref{eq:extension}} \\
        &\ = P(\Asem \AA {x_f}, \Asem \AA {\Delta^R_b x_f}, \Asem \AA {x_g}, \Asem \AA {\Delta^R_b x_g})
            && \text{by~\cref{lem:homomorphism}} \\
        &\ = P(\Asem \AA {x_f}, \deriveright b (\Asem \AA {x_f}), \Asem \AA {x_g}, \deriveright b (\Asem \AA {x_g}))
            && \text{by~\cref{eq:first}} \\
        &\ = P(f, \deriveright b f, g, \deriveright b g).
            && \text{by~\cref{prop:variable case}}
    \end{align*}
    \cref{eq:first}$\implies$\cref{eq:second}:
    \begin{align*}
        &\Asem \AA {\Delta^R_b \Delta^L_a \alpha}
        = \deriveright b (\Asem \AA {\Delta^L_a \alpha})
            && \text{by~\cref{eq:first}} \\
        &\ = \deriveright b (\deriveleft a (\Asem \AA \alpha))
            && \text{by~\cref{eq:automata semantics}} \\
        &\ = \deriveleft a (\deriveright b (\Asem \AA \alpha))
            && \text{by def.} \\
        &\ = \deriveleft a (\Asem \AA {\Delta^R_b \alpha})
            && \text{by~\cref{eq:first}} \\
        &\ = \Asem \AA {\Delta^L_a \Delta^R_b \alpha}.
            && \text{by~\cref{eq:automata semantics}}
    \end{align*}
    \cref{eq:second}$\implies$\cref{eq:first}:
    We show that the relation on series
    \begin{align*}
        \deriveright b (\Asem \AA \alpha) \sim \Asem \AA {\Delta^R_b \alpha}
    \end{align*}
    is a bisimulation.
    \begin{itemize}
        \item The base case amounts to showing
        \begin{align*}
            \coefficient \e (\deriveright b (\Asem \AA \alpha))
            = \coefficient \e (\Asem \AA {\Delta^R_b \alpha}).
        \end{align*}
        Since
        %
            $\coefficient \e (\deriveright b (\Asem \AA \alpha))
            = \coefficient b (\Asem \AA \alpha)
            = \coefficient \e (\deriveleft b (\Asem \AA \alpha))$,
        %
        by~\cref{eq:automata semantics} it suffices to show the following equality of rational numbers
        \begin{align*}
            F ({\Delta^L_b \alpha})
            = F ({\Delta^R_b \alpha}),
        \end{align*}
        which can be done by structural induction on $\alpha$.
        Notice that $\Delta^L_b \alpha$ is different from $\Delta^R_b \alpha$ in general,
        however we show that their initial values coincide.
        The variable case $\alpha = \lxr u x f v$ follows by the definitions of $\Delta^R_b$ and $F$.
        The cases of scalar multiplication and addition follow by linearity.
        Finally, for the product case since $F$ is a homomorphism, we have
        \begin{align*}
            &F (\Delta^L_b (\alpha * \beta))
            = F (P(\alpha, \Delta^L_b \alpha, \beta, \Delta^L_b \beta)) = 
                &&\text{(by~\cref{eq:extension})} \\
            &= P(F \alpha, F (\Delta^L_b \alpha), F \beta, F (\Delta^L_b \beta)) =
                &&\text{($F$ hom.)} \\
            &= P(F \alpha, F (\Delta^R_b \alpha), F \beta, F (\Delta^R_b \beta)) =
                &&\text{(by ind.)} \\
            &= F (P(\alpha, \Delta^R_b \alpha, \beta, \Delta^R_b \beta)) =
                &&\text{($F$ hom.)} \\
            &= F (\Delta^R_b (\alpha * \beta)).
                &&\text{(by~\cref{eq:extension})}            
        \end{align*}

        \item For the coinductive case, we have
        \begin{align*}
            \deriveleft a \deriveright b (\Asem \AA \alpha)
            &= \deriveright b \deriveleft a (\Asem \AA \alpha) = 
                &&\text{by def.} \\
            &= \deriveright b (\Asem \AA {\Delta^L_a \alpha})
                &&\text{by~\cref{eq:automata semantics}} \\
            &\sim \Asem \AA {\Delta^R_b \Delta^L_a \alpha} =
                &&\text{by def.} \\
            &= \Asem \AA {\Delta^L_a \Delta^R_b \alpha} =
                &&\text{by~\cref{eq:second}} \\
            &= \deriveleft a (\Asem \AA {\Delta^R_b \alpha})
                &&\text{by~\cref{eq:automata semantics}}.
            \qedhere
        \end{align*}
    \end{itemize}
    %
\end{proof}

\subsection{Special results}
\label{sec:commutativity:special results}

The results in this section hold for special product rules $P$.
We prove that reversal is an endomorphism of the series algebra~\cref{eq:reversal endomorphism}
for \emph{every} $P$-product arising from a special product rule $P$.
In particular, reversal is an endomorphism
for the Hadamard, shuffle, and infiltration algebras
(subsuming~\cite[Lemmas 4 and 10]{Clemente:LICS:2025} and \cite[Lemma 19]{Clemente:arXiv:LICS:2025}),
and thus, by~\cref{lem:closure under right derivatives},
the corresponding classes of Hadamard, shuffle, and infiltration-finite series are closed under right derivatives
(subsuming~\cite[Lemmas 8 and 12]{Clemente:LICS:2025} and \cite[Lemma 20]{Clemente:arXiv:LICS:2025}).
In order to prove this result,
in the next section we extend the semantic characterisation from~\cref{lem:reversal endomorphism}
and the automaton-based characterisation from~\cref{lem:reversal endomorphism:automata}
to an \emph{equational} characterisation.

\subsubsection{Equational characterisation}

%
%
\begin{agdalemma}[Special/Reversal]
    \label{lem:reversal endomorphism:characterisation}
    Let $P$ be a special product rule.
    The reversal operation is an endomorphism
    if, and only if, any of the following two equations hold:
    \begin{align}
        \label{eq:third}
        \tag{$\Delta^R$-$\Delta^L$}
        \Delta^R_b \Delta^L_a \alpha
            &\approx \Delta^L_a \Delta^R_b \alpha,
                &&\forall \alpha, a, b, \\
        \label{eq:fourth}
        \tag{$\Delta^R$-$\Delta^L$-$x$}
        \Delta^R_b \Delta^L_a (x_f * x_g)
            &\approx \Delta^L_a \Delta^R_b (x_f * x_g),
                &&\forall a, b, f, g,
    \end{align}
    where $a, b \in \Sigma$, $\alpha \in \Terms X$,
    and $f, g \in \series \Q \Sigma$.
\end{agdalemma}
\noindent
Notice that the second equation~\cref{eq:fourth}, even if it quantifies over series $f, g$,
in fact it only involves two \emph{distinct} variables $x_f, x_g$,
and thus it is a decidable condition.
\begin{proof}
    Thanks to the characterisation from~\cref{lem:reversal endomorphism:automata},
    it suffices to show equivalence with~\cref{eq:second}.
    We show the following three implications
    \cref{eq:third}$\implies$\cref{eq:second}$\implies$\cref{eq:fourth}$\implies$\cref{eq:third}.
    
    \cref{eq:third}\agda{$\implies$}{Special/Reversal/\#sec:proof:1}\cref{eq:second}:
    This direction follows directly by the invariance of the semantics.

    \cref{eq:second}$\implies$\cref{eq:fourth}:
    Consider arbitrary series $f, g$ and instantiate~\cref{eq:second} with $\alpha := x_f * x_g$, obtaining
    \begin{align*}
        \Asem \AA {\Delta^R_b \Delta^L_a (x_f * x_g)}
        &= \Asem \AA {\Delta^L_a \Delta^R_b (x_f * x_g)}
        &&\forall a, b \in \Sigma.
    \end{align*}
    The terms inside the semantic brackets are of the form
    \begin{align*}
        \Delta^R_b \Delta^L_a (x_f * x_g)
        &= P_{ab}
            (x_f, \lxr a x f {}, \lxr {} x f b, \lxr a x f b,
            x_g, \lxr a x g {}, \lxr {} x g b, \lxr a x g b) \text{, resp., } \\
        \Delta^L_a \Delta^R_b (x_f * x_g)
        &= P_{ba}
            (x_f, \lxr a x f {}, \lxr {} x f b, \lxr a x f b,
            x_g , \lxr a x g {}, \lxr {} x g b, \lxr a x g b),
    \end{align*}
    for some $P_{ab}, P_{ba} \in \Terms X$.
    Thus, by the homomorphism property of the semantics~(\cref{lem:homomorphism})
    and~\cref{prop:variable case} we have
    \begin{align*}
        P_{ab}
            (f, \deriveleft a f, \deriveright b f, \deriveleft a \deriveright b f,
            g, \deriveleft a g, \deriveright b g, \deriveleft a \deriveright b g)
        = P_{ba}
            (f, \deriveleft a f, \deriveright b f, \deriveleft a \deriveright b f,
            g, \deriveleft a g, \deriveright b g, \deriveleft a \deriveright b g).
    \end{align*}
    By taking the constant term of both sides above,
    by~\cref{lem:constant term homomorphism} we obtain
    \begin{align*}
        P_{ab}(f_\e, f_a, f_b, f_{ab}, g_\e, g_a, g_b, g_{ab})
        = P_{ba}(f_\e, f_a, f_b, f_{ab}, g_\e, g_a, g_b, g_{ab}).
    \end{align*}
    But $f, g$ are arbitrary, and thus the two polynomial functions represented by $P_{ab}, P_{ba}$ are equal.
    Over an infinite field (such as $\Q$),
    they are equal as polynomials~\cite[Corollary 6]{CoxLittleOShea:Ideals:2015} and we obtain~\cref{eq:fourth}.
    
    \cref{eq:fourth}\agda{$\implies$}{Special/Reversal/\#sec:proof:2}\cref{eq:third}:
    We proceed by induction on the structure of $\alpha$.
    Regarding the base case of a variable $\alpha = \lxr u x f v$,
    we have that $\Delta^R_b, \Delta^L_a$ commute by definition:
    \begin{align*}
        \Delta^R_b \Delta^L_a (\lxr u x f v)
            = \Delta^R_b (\lxr {u \cdot a} x f v)
            = \lxr {u \cdot a} x f {v \cdot b}
        \quad \text{ and } \quad
        \Delta^L_a \Delta^R_b (\lxr u x f v)
            = \Delta^L_a (\lxr u x f {v \cdot b})
            = \lxr {u \cdot a} x f {v \cdot b}.
    \end{align*}
    The cases of addition and scalar product follow by linearity and the inductive assumption.
    We finally consider the case of a product $\alpha * \beta$.
    Pick any two distinct series $f, g \in \series \Q \Sigma$
    and consider the distinct variables $y := x_f$ and $z := x_g$.
    Let $Y$ be the finite set of variables
    \begin{align*}
        Y := \set{
            y, \lr a y {}, \lr {} y b, \lr a y b,
            z, \lr a z {}, \lr {} z b, \lr a z b}
    \end{align*}
    and let $\rho : Y \to \Terms X$ be the variable substitution defined as follows:
    \begin{align*}
        \rho y &:= \alpha,
            &\rho z &:= \beta, \\
        \rho (\lr a y {}) &:= \Delta^L_a \alpha,
            &\rho (\lr a z {}) &:= \Delta^L_a \beta, \\
        \rho (\lr {} y b) &:= \Delta^R_b \alpha,
            &\rho (\lr {} z b) &:= \Delta^R_b \beta, \\
        \rho (\lr a y b) &:= \Delta^R_b \Delta^L_a \alpha \approx \Delta^L_a \Delta^R_b \alpha,
            &\rho (\lr a z b) &:= \Delta^R_b \Delta^L_a \beta \approx \Delta^L_a \Delta^R_b \beta,
    \end{align*}
    where the two equivalences at the bottom hold by the inductive assumption.
    \begin{claim*}
        We have the following two commuting properties:
        \begin{align}
            \tag{$\Delta^L_a$-$\rho$}
            \label{eq:commute-rule-1}
            \Delta^L_a \rho \gamma
                &\approx \rho \Delta^L_a \gamma,
                &&\text{ for all } \gamma \in \Terms {\set {y, \lr {} y b, z, \lr {} z b}}, \\
            \tag{$\Delta^R_b$-$\rho$}
            \label{eq:commute-rule-2}
            \Delta^R_b \rho \gamma
                &\approx \rho \Delta^R_b \gamma,
                &&\text{ for all } \gamma \in \Terms {\set {y, \lr a y {}, z, \lr a z {}}}.
        \end{align}
    \end{claim*}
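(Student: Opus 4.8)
The plan is to prove the two commuting rules~\cref{eq:commute-rule-1,eq:commute-rule-2} separately, each by structural induction on the term $\gamma$. The second is symmetric to the first, exchanging the left transition $\Delta^L_a$ with the right transition $\Delta^R_b$ and the left exponents of the variables with the right ones, so I only describe the proof of~\cref{eq:commute-rule-1}. The facts I would use are: the variable substitution $\rho$ extends to a homomorphism of term algebras $\rho : \Terms Y \to \Terms X$, so it commutes with $0$, $+$, $*$ and scalar multiplication; the relation $\approx$ is a congruence, so substituting $\approx$-equivalent terms into the product rule $P$ gives $\approx$-equivalent terms; and the $P$-extension $\Delta^L_a$ is linear and satisfies $\Delta^L_a(u * v) = P(u, \Delta^L_a u, v, \Delta^L_a v)$ (\cref{eq:extension}). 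A preliminary observation is that $\Delta^L_a$ maps the four variables allowed for $\gamma$ back into $Y$: by~\cref{eq:left-right transitions} we have $\Delta^L_a y = \lr a y {}$, $\Delta^L_a(\lr {} y b) = \lr a y b$, and likewise for $z$; hence $\Delta^L_a\gamma \in \Terms Y$ and the right-hand side $\rho\,\Delta^L_a\gamma$ is well-formed.

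For the base cases, $\gamma = y$ and $\gamma = z$ are immediate, since both sides reduce to $\Delta^L_a\alpha$, resp.\ $\Delta^L_a\beta$, by the definition of $\rho$. For $\gamma = \lr {} y b$ the left-hand side is $\Delta^L_a\rho(\lr {} y b) = \Delta^L_a\Delta^R_b\alpha$ while the right-hand side is $\rho\,\Delta^L_a(\lr {} y b) = \rho(\lr a y b) = \Delta^R_b\Delta^L_a\alpha$, and these are $\approx$-equivalent by the induction hypothesis $\Delta^R_b\Delta^L_a\alpha \approx \Delta^L_a\Delta^R_b\alpha$ of the ambient induction on the structure of $\alpha$ in the proof of~\cref{eq:fourth}$\implies$\cref{eq:third}; the case $\gamma = \lr {} z b$ is the same with $\beta$ in place of $\alpha$. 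The cases $\gamma = 0$, $\gamma = c \cdot \gamma'$ and $\gamma = \gamma_1 + \gamma_2$ follow from linearity of $\Delta^L_a$ and $\rho$ together with the fact that $\approx$ is a congruence.

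The only step with any content is $\gamma = \gamma_1 * \gamma_2$, which I would handle as follows: first expand $\Delta^L_a\rho(\gamma_1 * \gamma_2) = \Delta^L_a(\rho\gamma_1 * \rho\gamma_2) = P(\rho\gamma_1, \Delta^L_a\rho\gamma_1, \rho\gamma_2, \Delta^L_a\rho\gamma_2)$ using the defining equations of the $P$-extension; then replace each $\Delta^L_a\rho\gamma_i$ by $\rho\,\Delta^L_a\gamma_i$ via the induction hypothesis, using that $\approx$ is a congruence to substitute inside $P$; and finally use that $\rho$ commutes with substitution into $P$ to obtain $P(\rho\gamma_1, \rho\,\Delta^L_a\gamma_1, \rho\gamma_2, \rho\,\Delta^L_a\gamma_2) = \rho\bigl(P(\gamma_1, \Delta^L_a\gamma_1, \gamma_2, \Delta^L_a\gamma_2)\bigr) = \rho\,\Delta^L_a(\gamma_1 * \gamma_2)$, the last equality being the $P$-extension equation again. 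This gives $\Delta^L_a\rho\gamma \approx \rho\,\Delta^L_a\gamma$ and closes the induction; \cref{eq:commute-rule-2} is proved in exactly the same way with $\Delta^R_b$ in place of $\Delta^L_a$. I do not anticipate a genuine difficulty: the argument is pure bookkeeping. The only points needing care are checking that $\Delta^L_a\gamma$, resp.\ $\Delta^R_b\gamma$, stays inside the variable set $Y$ on which $\rho$ is defined, and noticing that the single appeal to the ambient induction hypothesis sits precisely in the variable base case where $\rho$ was defined through the $\approx$-equivalence of the two iterated derivatives.
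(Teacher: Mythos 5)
Your proposal is correct and follows essentially the same route as the paper's proof: structural induction on $\gamma$, with the variable base cases discharged by the definition of $\rho$ (the $\lr{}{y}{b}$ and $\lr{}{z}{b}$ cases invoking the ambient induction hypothesis on $\alpha$, resp.\ $\beta$, exactly as you note), the linear cases by linearity, and the product case by unfolding the $P$-extension, substituting the inner induction hypothesis inside $P$ via congruence of $\approx$, and commuting $\rho$ with the substitution into $P$. Your explicit check that $\Delta^L_a\gamma$ and $\Delta^R_b\gamma$ remain in $\Terms Y$ corresponds to the remark the paper makes immediately after stating the claim.
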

    \noindent
    We note that in~\cref{eq:commute-rule-1,eq:commute-rule-2},
    the terms $\Delta^L_a \gamma, \Delta^R_b \gamma$ belong to $\Terms Y$,
    and thus $\rho$ can be applied to it.
    \begin{claimproof}
        We prove~\cref{eq:commute-rule-1} by structural induction on $\gamma$.
        The base cases $\gamma = y$ and $\gamma = \lr {} y b$ hold by definition:
        \begin{align*}
            \Delta^L_a \rho y &= \Delta^L_a \alpha
                &
                    \rho \Delta^L_a y
                    &= \rho (\lr a y {}) = \Delta^L_a \alpha,
                    \text{ resp.,} \\
            \Delta^L_a \rho (\lr {} y b) &= \Delta^L_a \Delta^R_b \alpha
                &
                    \rho \Delta^L_a \lr {} y b
                    &= \rho (\lr a y b) = \Delta^L_a \Delta^R_b \alpha.
        \end{align*}
        The base cases $\gamma = z$ and $\gamma = \lr {} z b$ are analogous.
        The inductive cases $\gamma = \gamma_0 + \gamma_1$ and $\gamma = c \cdot \gamma_0$ follow by linearity.
        Regarding the inductive case $\gamma = \gamma_0 * \gamma_1$, we have
        \begin{align*}
            &\Delta^L_a \rho (\gamma_0 * \gamma_1)
            = \Delta^L_a (\rho \gamma_0 * \rho \gamma_1) = 
                && \text{(def.~of $\rho$)} \\
            &\ = P_a(\rho \gamma_0, \Delta^L_a \rho \gamma_0, \rho \gamma_1, \Delta^L_a \rho \gamma_1) =
                && \text{(by~\cref{eq:extension})} \\
            &\ \approx P_a(\rho \gamma_0, \rho \Delta^L_a \gamma_0, \rho \gamma_1, \rho \Delta^L_a \gamma_1) =
                && \text{(by ind.)} \\
            &\ = \rho P_a(\gamma_0, \Delta^L_a \gamma_0, \gamma_1, \Delta^L_a \gamma_1) =
                && \text{(def.~of $\rho$)} \\
            &\ = \rho \Delta^L_a (\gamma_0 * \gamma_1).
                && \text{(by~\cref{eq:extension})}
    \end{align*}
        The proof for~\cref{eq:commute-rule-2} is analogous.
    \end{claimproof}
    Thanks to the claim, we have
    \begin{align*}
        &\Delta^R_b \Delta^L_a (\alpha * \beta)
        = \Delta^R_b \Delta^L_a \rho (y * z)
            && \text{(def.~of $\rho$)} \\
        &\ \approx \Delta^R_b \rho \Delta^L_a (y * z)
            && \text{(by~\cref{eq:commute-rule-1})} \\
        &\ \approx \rho \Delta^R_b \Delta^L_a (y * z)
            && \text{(by~\cref{eq:commute-rule-2})} \\
        &\ \approx \rho \Delta^L_a \Delta^R_b (y * z)
            && \text{(by~\cref{eq:fourth})} \\
        &\ \approx \Delta^L_a \rho \Delta^R_b (y * z)
            && \text{(by~\cref{eq:commute-rule-1})} \\
        &\ \approx \Delta^L_a \Delta^R_b \rho (y * z)
            && \text{(by~\cref{eq:commute-rule-2})} \\
        &\ = \Delta^L_a \Delta^R_b (\alpha * \beta)
            && \text{(def.~of $\rho$)},
    \end{align*}
    as required.
    Notice that the two applications of~\cref{eq:commute-rule-1} are correct
    since $y * z$, resp., $\Delta^R_b (y * z)$ are in $\Terms {\set{y, \lr {} y b, z, \lr {} z b}}$.
    An analogous consideration applies to~\cref{eq:commute-rule-2}.
\end{proof}

%
%
As a consequence of~\cref{lem:reversal endomorphism:characterisation},
we have that reversal is an endomorphism for all special products.
\begin{lemma}
    \label{lem:reversal endomorphism:special}
    Reversal is an endomorphism of~\cref{eq:series algebra}
    for every $P$-product defined by a special product rule $P$.
\end{lemma}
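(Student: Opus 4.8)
The plan is to invoke the equational characterisation \cref{lem:reversal endomorphism:characterisation}: since $P$ is assumed special, reversal is an endomorphism as soon as the single syntactic identity \cref{eq:fourth} holds, i.e.\ $\Delta^R_b \Delta^L_a (x_f \ast x_g) \approx \Delta^L_a \Delta^R_b (x_f \ast x_g)$ for all letters $a, b \in \Sigma$ and all series $f, g$. Both sides are terms over the eight variables $x_f, \Delta^L_a x_f, \Delta^R_b x_f, \Delta^L_a\Delta^R_b x_f$ and the four analogous variables for $g$, so proving $\approx$ just amounts to checking that the two terms denote the same commutative polynomial. Note that by \cref{eq:left-right transitions} the transition maps $\Delta^L_a$ and $\Delta^R_b$ already commute on the \emph{generators} of the automaton $\AA$, so $\Delta^R_b\Delta^L_a x_f = \Delta^L_a\Delta^R_b x_f$; I will abbreviate this common variable by $\hat x_f$, and likewise $\hat x_g$. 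Combining the classification \cref{cor:classification} with \cref{thm:characterisation:BAC}, every special product rule is simple, hence in particular of the bilinear shape $P = \alpha\cdot xy + \beta\cdot(x\dot y + \dot x y) + \gamma\cdot\dot x\dot y$ of \cref{eq:bilinear form}.

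With this in hand, both sides of \cref{eq:fourth} are computed by unfolding the $P$-extension \cref{eq:extension}. Writing $D_1 := \Delta^L_a$ and $D_2 := \Delta^R_b$, one first has $\tilde D_1(x_f\ast x_g) = \alpha\, x_f x_g + \beta\,(x_f\cdot D_1 x_g + D_1 x_f\cdot x_g) + \gamma\, D_1 x_f\cdot D_1 x_g$, and then applies $\tilde D_2$ linearly, using the product clause of \cref{eq:extension} on each of the four products of generators (with $D_2 x_f$, $D_2 D_1 x_f = \hat x_f$, and likewise for $g$). The result is a $\Z[\alpha,\beta,\gamma]$-linear combination of the sixteen products of a factor from $\{x_f, D_1 x_f, D_2 x_f, \hat x_f\}$ with one from $\{x_g, D_1 x_g, D_2 x_g, \hat x_g\}$, in which the coefficient of a given monomial factorises as a product of two entries of the matrix $\left(\begin{smallmatrix}\alpha & \beta\\ \beta & \gamma\end{smallmatrix}\right)$: one entry records whether the $D_1$-derivative is present on the $f$-factor and on the $g$-factor, the other does the same for the $D_2$-derivative. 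Carrying out the mirror-image computation for $\Delta^L_a\Delta^R_b(x_f\ast x_g) = \tilde D_1\tilde D_2(x_f\ast x_g)$ produces exactly the same combination, since exchanging $D_1\leftrightarrow D_2$ merely swaps the two factors of each coefficient, and multiplication of scalars is commutative. Hence the two terms denote the same polynomial, so they are $\approx$-equal, which establishes \cref{eq:fourth} and therefore, via \cref{lem:reversal endomorphism:characterisation}, the lemma.

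The only real work is the bookkeeping in this double expansion, which is entirely routine. Two remarks are worth making. First, the verification of \cref{eq:fourth} uses nothing beyond the bilinear form \cref{eq:bilinear form}: the associativity constraint $\alpha\gamma = \beta(\beta-1)$ of \cref{eq:simple} plays no role at this stage (it has already been consumed inside \cref{lem:reversal endomorphism:characterisation} when reducing \cref{eq:third} to \cref{eq:fourth}). Second, combining this lemma with \cref{lem:reversal endomorphism} and \cref{cor:closure under right derivatives} yields that the class of $P$-finite series is effectively closed under right derivatives for every special $P$, which is precisely the closure property needed to decide commutativity.
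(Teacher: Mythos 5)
Your proposal is correct and follows the paper's own route: reduce via \cref{lem:reversal endomorphism:characterisation} to verifying \cref{eq:fourth} for simple product rules, using that $\Delta^L_a$ and $\Delta^R_b$ commute on variables by \cref{eq:left-right transitions}. Where the paper simply declares the resulting 16-term identity a ``tedious calculation'' to be checked by hand or by computer algebra, you actually explain why it holds — each monomial's coefficient factorises as a product of two entries of the matrix $\bigl(\begin{smallmatrix}\alpha & \beta\\ \beta & \gamma\end{smallmatrix}\bigr)$ (one recording the $\Delta^L_a$-pattern, one the $\Delta^R_b$-pattern), and swapping the order of the two extensions merely swaps these commuting scalar factors — which is a genuinely nicer finish and, as you note, uses only bilinearity.
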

\begin{proof}
    Thanks to~\cref{lem:reversal endomorphism:characterisation}
    and the fact that special product rules are simple (\cref{cor:classification}),
    it suffices to verify~\cref{eq:fourth} on simple product rules~\cref{eq:simple}.
    So let $P(x, \dot x, y, \dot y)$ be a simple product rule, thus of the form~\cref{eq:simple}.
    Write $x := x_f$ and $y := x_g$ for two distinct arbitrary series $f, g \in \series \Q \Sigma$.
    We need to show $\Delta_b^R \Delta_a^L (x * y) \approx \Delta_a^L \Delta_b^R (x * y)$.
    This rather tedious calculation can be done by hand (it involves 16 terms on each side),
    or more easily by a computer algebra system.
    %
\end{proof}

Combining~\cref{cor:closure under right derivatives} and~\cref{lem:reversal endomorphism:special},
we deduce the following closure property.

\begin{corollary}
    \label{cor:closure under right derivatives:special}
    Let $P$ be a special product rule.
    The class of $P$-finite series is effectively closed under right derivatives.
\end{corollary}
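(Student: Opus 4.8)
The plan is to read the statement off directly from the two results established immediately above it, so that no genuinely new argument is required. First I would invoke \cref{lem:reversal endomorphism:special}: because $P$ is special, the reversal operation $\reverse f$ is an endomorphism of the series algebra~\cref{eq:series algebra}, that is, it commutes with scalar multiplication, addition, and the $P$-product. Then I would feed this into \cref{cor:closure under right derivatives}, whose hypothesis is exactly that reversal is an endomorphism and whose conclusion is that the class of $P$-finite series is effectively closed under right derivatives $\deriveright b$. Composing the two implications yields the claim.

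Effectiveness needs only to be tracked through this chain. Concretely, \cref{cor:closure under right derivatives} rests on \cref{lem:closure under right derivatives}, whose proof takes a $P$-finite series $f$ with generators $g_1, \dots, g_k$ and exhibits $g_1, \dots, g_k, \deriveright b g_1, \dots, \deriveright b g_k$ as generators witnessing $P$-finiteness of $\deriveright b f$; this uses that left and right derivatives commute, and that for a special $P$ the right derivatives obey the \emph{same} product rule $P$ as the left derivatives (the ``only if'' direction of \cref{lem:reversal endomorphism}). All of these objects are explicitly constructed, so the closure is effective as stated.

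The real obstacle lies not in this corollary but in its prerequisite \cref{lem:reversal endomorphism:special}, which I assume already in hand. Its proof runs through the equational characterisation \cref{lem:reversal endomorphism:characterisation}, which reduces ``reversal is an endomorphism'' to the purely syntactic identity \cref{eq:fourth} on terms --- an identity that, crucially, mentions only the two distinct variables $x_f, x_g$ and hence amounts to a finite polynomial check on the simple form~\cref{eq:simple}. That reduction is the delicate point: it exploits the invariance property \cref{lem:invariance} (available precisely because $P$ is special) to pass between the free term algebra and its quotient by the \BAC~axioms, together with the bisimulation-up-to-transitivity principle \cref{lem:bisimulation up-to transitivity} on the semantic side. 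Granting that machinery, the present corollary follows in a single line --- and I would note that this closure holds even though $P$-finite series need not themselves be closed under reversal; what the argument uses is only that reversal is a homomorphism on the \emph{ambient} algebra of all series.
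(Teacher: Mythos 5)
Your proposal is correct and matches the paper exactly: the paper derives this corollary in one line by combining \cref{lem:reversal endomorphism:special} (reversal is an endomorphism for special $P$) with \cref{cor:closure under right derivatives} (endomorphism implies effective closure under right derivatives). Your additional tracking of effectiveness through \cref{lem:closure under right derivatives} is consistent with how the paper establishes that prerequisite.
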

In the next section we leverage on this closure property
to show that the commutativity problem is decidable for $P$-finite series when $P$ is a special product rule.

\subsection{Commutativity problem}
\label{sec:commutativity:decidability}

A series $f \in \series \Q \Sigma$ is \emph{commutative} if
for every two words $u, v \in \Sigma^*$ with the same commutative image,
we have $\coefficient u f = \coefficient v f$.
The \emph{commutativity problem} for $P$-finite series
asks whether a given $P$-finite series is commutative.

The commutativity problem has applications in control theory~\cite[Proposition II.9]{Fliess:1981},
where it can be used to characterise the Chen-Fliess series
giving rise to analytic functionals~\cite[Theorem 10]{Clemente:arXiv:LICS:2025},
and in the theory of multivariate differential and difference equations,
where it can be used to decide solvability in power series (in commuting variables)
of nontrivial classes of differential and difference equations~\cite[Problems 1 and 2]{Clemente:LICS:2025}.

\subparagraph*{Decidability of the commutativity problem.}

We have the following elementary finite equational characterisation of commutativity.
This is where right derivatives come into play.
\begin{lemma}[\protect{\cite[Lemma~2]{Clemente:LICS:2025}}]
    \label{lem:commutativity characterisation}
    A series $f \in \series \Q \Sigma$ is commutative if, and only if,
    \begin{align}
        \label{eq:commutativity characterisation}
        \deriveleft a \deriveleft b f
            = \deriveleft b \deriveleft a f
                \quad\text{and}\quad
        \deriveleft a f = \deriveright a f,
        \qquad \forall a, b \in \Sigma.
    \end{align}
\end{lemma}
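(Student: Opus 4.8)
The plan is to prove \cref{lem:commutativity characterisation} by establishing two directions, using the characterisation of commutative series in terms of the commutative image (Parikh image) of words. The ``only if'' direction is essentially immediate: if $f$ is commutative, then since $a \cdot b \cdot w$ and $b \cdot a \cdot w$ have the same commutative image for every $w \in \Sigma^*$, we get $\coefficient{w}{(\deriveleft a \deriveleft b f)} = \coefficient{abw}{f} = \coefficient{baw}{f} = \coefficient{w}{(\deriveleft b \deriveleft a f)}$, so the two left double-derivatives agree. Likewise $a \cdot w$ and $w \cdot a$ have the same commutative image, giving $\coefficient{w}{(\deriveleft a f)} = \coefficient{aw}{f} = \coefficient{wa}{f} = \coefficient{w}{(\deriveright a f)}$, hence $\deriveleft a f = \deriveright a f$.

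For the ``if'' direction, assume the two families of identities in \cref{eq:commutativity characterisation} hold. I want to show $\coefficient u f = \coefficient v f$ whenever $u$ and $v$ have the same commutative image. The key observation is that any two words with the same Parikh image are connected by a finite sequence of adjacent transpositions (the symmetric group on the positions is generated by adjacent transpositions); so it suffices to show $\coefficient{p \cdot a \cdot b \cdot q}{f} = \coefficient{p \cdot b \cdot a \cdot q}{f}$ for all $p, q \in \Sigma^*$ and $a, b \in \Sigma$. Unfolding the definition of coefficient extraction via derivatives, this reads $\coefficient{q}{(\deriveleft{p}\deriveleft a \deriveleft b f)} = \coefficient{q}{(\deriveleft{p}\deriveleft b \deriveleft a f)}$, i.e.\ it suffices to show $\deriveleft a \deriveleft b g = \deriveleft b \deriveleft a g$ for every series $g$ of the form $\deriveleft p f$. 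The hypothesis only gives this for $g = f$ itself, so the crux is to propagate the hypotheses from $f$ to all its left derivatives.

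The main obstacle — and the heart of the argument — is therefore a propagation/invariance step: I need to show that the set of series satisfying both conditions in \cref{eq:commutativity characterisation} is closed under left derivatives. For the commuting-double-derivative part, this is straightforward: $\deriveleft a \deriveleft b (\deriveleft c f) = \deriveleft c (\deriveleft a \deriveleft b f) = \deriveleft c (\deriveleft b \deriveleft a f) = \deriveleft b \deriveleft a (\deriveleft c f)$, using only that distinct left derivatives commute with each other (which follows from the definition~\cref{eq:left derivative}, as $\coefficient w {(\deriveleft a \deriveleft c f)} = \coefficient{caw}{f}$ is symmetric only after we know $f$ commutes — wait, no: $\deriveleft a \deriveleft c$ need not equal $\deriveleft c \deriveleft a$ in general). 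The correct route is to use the second hypothesis $\deriveleft a f = \deriveright a f$ to rewrite left derivatives as right derivatives, and then exploit the fact (from \cref{table:other operations on series}) that left and right derivatives always commute: $\deriveleft a \deriveright b = \deriveright b \deriveleft a$. Concretely, I would first show by induction on $|p|$ that $\deriveleft p f = \deriveright{\reverse p} f$ — the condition $\deriveleft a f = \deriveright a f$ together with left/right commutation lets one slide each letter from a left derivative to a right derivative. Then $\deriveleft a \deriveleft b (\deriveleft p f) = \deriveleft a \deriveleft b \deriveright{\reverse p} f = \deriveright{\reverse p}(\deriveleft a \deriveleft b f) = \deriveright{\reverse p}(\deriveleft b \deriveleft a f) = \deriveleft b \deriveleft a (\deriveleft p f)$, using the hypothesis on $f$ and the unconditional commutation of left with right derivatives. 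This gives exactly the adjacent-transposition identity needed above, completing the ``if'' direction.

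I would present this cleanly as: (1) a quick lemma that $\deriveleft a \deriveright b = \deriveright b \deriveleft a$ always (from \cref{table:other operations on series} or directly from coefficients); (2) the inductive identity $\deriveleft p f = \deriveright{\reverse p} f$ under the second hypothesis; (3) the adjacent-transposition commutation $\deriveleft p \deriveleft a \deriveleft b f = \deriveleft p \deriveleft b \deriveleft a f$; and (4) the reduction of arbitrary Parikh-equivalent words to a chain of adjacent transpositions. Steps (1) and (4) are routine, step (2) is a short induction, and step (3) is where both hypotheses genuinely combine — that is the step I expect to require the most care.
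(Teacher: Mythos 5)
Your proof is correct, and in fact the paper does not prove this lemma at all --- it imports it verbatim as \cite[Lemma~2]{Clemente:LICS:2025} --- so there is no in-paper argument to compare against; your write-up would serve as a self-contained proof. The decomposition is sound: the ``only if'' direction is immediate from Parikh-equivalence of $ab w$ and $ba w$ (resp.\ $aw$ and $wa$); for the ``if'' direction, the genuine content is exactly where you locate it, namely propagating the two hypotheses from $f$ to every left derivative $\deriveleft p f$, and your route via the unconditional commutation $\deriveleft a \deriveright b = \deriveright b \deriveleft a$ together with the induction $\deriveleft p f = \deriveright{\reverse p} f$ does this correctly (each step of that induction uses the hypothesis $\deriveleft a f = \deriveright a f$ only on $f$ itself, which is all you have, so the induction is well-founded). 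Your mid-proof self-correction is also right: left derivatives by distinct letters do \emph{not} commute in general, so the detour through right derivatives is genuinely needed. Two cosmetic points: under the paper's convention $\deriveleft{a \cdot w} f = \deriveleft w (\deriveleft a f)$ one has $\coefficient{w}{(\deriveleft a \deriveleft b f)} = \coefficient{b a w}{f}$ rather than $\coefficient{abw}{f}$, and $\coefficient{pabq}{f} = \coefficient{q}{(\deriveleft b \deriveleft a \deriveleft p f)}$ rather than $\coefficient{q}{(\deriveleft p \deriveleft a \deriveleft b f)}$; both slips are harmless because every identity involved is symmetric in $a$ and $b$, but you should fix the operator order in a final version.
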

Thanks to the effective closure properties of $P$-finite series
(\cref{lem:P-finite:closure properties,cor:closure under right derivatives:special}),
all the series mentioned in~\cref{eq:commutativity characterisation} are effectively $P$-finite
when $P$ is a special product rule.
The equalities can be decided by~\cref{thm:equivalence:decidability},
obtaining the following result.
\begin{theorem}
    The commutativity problem for $P$-finite series
    is decidable for every special product rule $P$.
\end{theorem}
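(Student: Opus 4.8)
The plan is to reduce commutativity to a finite collection of equivalence instances, each solvable by \cref{thm:equivalence:decidability}.

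First, I would invoke the equational characterisation of commutativity from \cref{lem:commutativity characterisation}: a series $f$ is commutative if, and only if, the identities $\deriveleft a \deriveleft b f = \deriveleft b \deriveleft a f$ and $\deriveleft a f = \deriveright a f$ hold for all $a, b \in \Sigma$. Since $\Sigma$ is finite, this replaces the a priori infinitary condition ``$\coefficient u f = \coefficient v f$ whenever $u, v$ have the same commutative image'' by a bounded number of series equalities — exactly $\card \Sigma$ of the second kind and $\card \Sigma^2$ of the first kind.

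Second, I would argue that every series occurring in these identities is again effectively $P$-finite. Starting from a finite-variable $P$-automaton (equivalently, a tuple of generators) for $f$, the series $\deriveleft a f$ and $\deriveleft a \deriveleft b f$ are $P$-finite and computable by \cref{lem:P-finite:closure properties} (closure under left derivatives), while $\deriveright a f$ is $P$-finite and computable by \cref{cor:closure under right derivatives:special}. This last step is where specialness of $P$ is genuinely used, and it is the substantive input: closure under right derivatives is \emph{not} immediate from the definition of $P$-finiteness, and \cref{cor:closure under right derivatives:special} rests on the fact that reversal is an endomorphism of the series algebra for special product rules (\cref{lem:reversal endomorphism:special}).

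Finally, having produced finite-variable $P$-automata (equivalently, polynomial $P$-automata, via \cref{lem:invariance}) for both sides of each identity in~\cref{eq:commutativity characterisation}, I would decide each equality by \cref{thm:equivalence:decidability}, reducing equality to zeroness of the difference using the $\Q$-vector-space closure properties. Only finitely many calls to the equivalence algorithm are needed, and $f$ is declared commutative precisely when all of them report ``equal''. The only real obstacle here — effective closure of $P$-finite series under right derivatives for special $P$ — has already been dispatched, so the remaining argument is purely one of assembly.
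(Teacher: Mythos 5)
Your proposal is correct and follows exactly the paper's argument: apply the equational characterisation of commutativity (\cref{lem:commutativity characterisation}), use effective closure under left derivatives (\cref{lem:P-finite:closure properties}) and under right derivatives for special $P$ (\cref{cor:closure under right derivatives:special}) to make all series in the finitely many identities effectively $P$-finite, and decide each identity via \cref{thm:equivalence:decidability}. You also correctly identify closure under right derivatives as the substantive ingredient where specialness is used.
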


\subparagraph*{Complexity of the commutativity problem.}

In fact, commutativity not only reduces to equivalence,
but also equivalence reduces to commutativity, so that the two problems are effectively inter-reducible.
This relies on the following additional closure property.

Fix a finite alphabet $\Sigma$ and tuple of series $f = \tuple{f_a}_{a \in \Sigma} \in \series \Q \Sigma^\Sigma$.
We say that $g \in \series \Q \Sigma$ is a \emph{left anti-derivative} of $f$
if $\deriveleft a g = f_a$ for all $a \in \Sigma$.
Clearly, once we fix the initial condition $g_\e \in \Q$, anti-derivatives are unique.
$P$-finite series are closed under left anti-derivatives,
for every product rule $P$ (not necessarily special).
\begin{agdalemma}[General/FinitelyGenerated-AntiDerivatives][Closure under left anti-derivatives]
    \label{lem:closure under anti-derivatives}
    Consider an arbitrary product rule $P$ (not necessarily special).
    If $g \in \series \Q \Sigma$ is an anti-derivative
    of a tuple of $P$-finite series $f = \tuple{f_a}_{a \in \Sigma}$,
    then $g$ is also $P$-finite.
\end{agdalemma}
\begin{proof}
    Let $g$ be an anti-derivative of $\tuple{f_a}_{a \in \Sigma}$,
    where each $f_a$ is $P$-finite with finitely many generators
    $F_a \subseteq_{\text{fin}} \series \Q \Sigma$. 
    Then $g$ is $P$-finite with generators
    $\set g \cup \setof{f_a}{a \in \Sigma} \cup \bigcup_{a \in \Sigma} F_a$.
\end{proof}

Thanks to anti-derivatives, we can reduce equivalence to commutativity.
\begin{lemma}
    \label{lem:equivalence reduces to commutativity}
    Let $P$ be a product rule (not necessarily special).
    The equivalence problem for $P$-finite series
    effectively reduces to the commutativity problem for $P$-finite series.
\end{lemma}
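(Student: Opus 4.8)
The plan is a two-step many-one reduction: equivalence $\to$ zeroness $\to$ commutativity. For the first step, by the effective closure properties of $P$-finite series (\cref{lem:P-finite:closure properties}), deciding $f = g$ is the same as deciding $f - g = \zero$, so it suffices to reduce the zeroness problem for $P$-finite series to the commutativity problem.

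For the second step, given a $P$-finite $f \in \series\Q\Sigma$, I would pick two fresh distinct letters $b, c \notin \Sigma$ and pass to the alphabet $\Sigma' := \Sigma \cup \set{b, c}$. Viewed over $\Sigma'$ (with coefficients vanishing on all words that use $b$ or $c$), $f$ stays $P$-finite, since then $\deriveleft b f = \deriveleft c f = \zero$ already belongs to any algebra generating $f$. Let $h \in \series\Q{\Sigma'}$ be the unique series with $\coefficient{bcw}h = \coefficient w f$ for every $w \in \Sigma^*$ and $\coefficient u h = 0$ for every other word $u$; equivalently, $h$ arises from $f$ by two successive left-anti-derivative operations — one prepending the letter $c$, the next prepending $b$ — each with zero initial condition and zero value at all letters other than, respectively, $c$ and $b$. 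By two applications of \cref{lem:closure under anti-derivatives} (which holds for every product rule), $h$ is again $P$-finite, and the whole construction is effective at the level of generators or $P$-automata.

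It then remains to check that $f = \zero$ if and only if $h$ is commutative. If $f = \zero$ then $h = \zero$ is commutative. Conversely, if $\coefficient w f \neq 0$ for some $w \in \Sigma^*$, then $bcw$ and $cbw$ have the same commutative image (each contains one $b$, one $c$, and the letters of $w$), yet $\coefficient{bcw}h = \coefficient w f \neq 0$ while $\coefficient{cbw}h = \coefficient{bw}(\deriveleft c h) = 0$ since $\deriveleft c h = \zero$; hence $h$ is not commutative. Composing with the first step gives the claimed effective reduction.

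The only delicate points are: (i) keeping the witness $h$ inside the class of $P$-finite series, which is exactly what \cref{lem:closure under anti-derivatives} buys us and the reason $h$ is defined via anti-derivatives rather than by an ad hoc coefficient formula; and (ii) using \emph{two} fresh letters rather than one, so that commutativity also detects a nonzero constant term $\coefficient\e f$ (the case $w = \e$ above, giving the pair $bc$ versus $cb$) — with a single fresh letter the word $c$ would be alone in its commutative class and that case would be missed. The combinatorics with Parikh images is then entirely routine.
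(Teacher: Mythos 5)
Your proposal is correct and follows essentially the same route as the paper: two fresh letters, a series supported on $bcw$ (the paper uses $a,b$ and defines it via $\deriveleft{xy}g$), $P$-finiteness secured by two applications of the anti-derivative closure lemma, and the same $bcw$ versus $cbw$ witness for non-commutativity. Your extra remarks — that $f$ remains $P$-finite over the enlarged alphabet, and that a single fresh letter would miss the constant-term case — are sound and only make the argument more explicit than the paper's.
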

\begin{proof}
    Let $f \in \series \Q \Sigma$ be an arbitrary series.
    We construct a series $g \in \series \Q \Gamma$ \st
    \begin{align*}
        \text{$g$ is commutative} 
            \quad\text{ iff }\quad
                f = \zero.
    \end{align*}
    To this end, consider two fresh alphabet symbols $a, b \notin \Sigma$
    and let $\Gamma := \Sigma \cup \set{a, b}$.
    Let $g$ be the unique series \st\-
    $g_x = 0$ for every $x \in \Gamma \cup \set \e$,
    and for every $x, y \in \Gamma$ let
    \begin{align*}
        \deriveleft {xy} g =
            \begin{cases}
                f & \text{if } x = a,  y = b, \\
                \zero & \text{otherwise}.
            \end{cases}
    \end{align*}
    Clearly if $f = \zero$, then $g = \zero$ and in particular it is commutative.
    Conversely, assume there is an input word $w \in \Sigma^*$ \st~$f_w \neq 0$.
    Then $g_{abw} = f_w$ but $g_{baw} = 0$,
    and thus $g$ is not commutative.
    
    The proof is concluded by noticing that $g$ is constructed from $f$ and $\zero$ by taking anti-derivatives.
    If $f$ is $P$-finite, then so is $g$ by~\cref{lem:closure under anti-derivatives}.
\end{proof}

When $P$ is a special product rule,
equivalence of $P$-finite series is Ackermann-hard
for the product rule $P = \dot x \dot y$ of the Hadamard product
(see~\cref{rem:complexity of equivalence}).
Therefore, it follows from~\cref{lem:equivalence reduces to commutativity}
that commutativity inherits this lower bound in this case.
Conversely, since commutativity efficiently reduces to equivalence,
commutativity decidable with Ackermann complexity for every special product rule $P$.

\end{document}